\newcommand{\M}{{\mathcal{M}}}
\newcommand{\pM}{{\partial M}}
\newcommand{\tauC}{\tau_{chr}}
\newcommand{\Lchr}{L_{chr}}
\newcommand{\Lhaus}{L_{H}}
\newcommand{\Liminf}{{\rm lim\,inf}^o}
\newcommand{\Limsup}{{\rm lim\,sup}^o}
\newcommand{\qcd}{\begin{flushright} $\Box$ \end{flushright}}
\def\be{\begin{eqnarray}}
\def\ee{\end{eqnarray}}
\def\beq{\begin{equation}}
\def\eeq{\end{equation}}
\def\({\left (}
\def\){\right )}
\newtheorem{theorem}{Theorem}[section]
\newtheorem{lemma}[theorem]{Lemma}
\newtheorem{proposition}[theorem]{Proposition}
\newtheorem{corollary}[theorem]{Corollary}
\newtheorem{definition}{Definition}[section]
\newtheorem{remark}[theorem]{Remark}
\title{Hausdorff closed limits and the causal boundary of globally hyperbolic spacetimes with timelike boundary}
\author[1]{I.P. Costa e Silva}
\author[2]{J.L. Flores}
\author[3]{J. Herrera}
\affil[1]{\small{{\it Department of Mathematics\\
Universidade Federal de Santa Catarina, 88.040-900 Florian\'{o}polis-SC, Brazil.}}}
\affil[2]{\small{{\it Departamento de \'Algebra, Geometr\'{i}a y Topolog\'{i}a\\ Facultad de Ciencias, Universidad de M\'alaga\\ Campus Teatinos, 29071 M\'alaga, Spain.}}}
\affil[3]{\small{{\it Departmento de Matem\'aticas\\ Edificio Albert Einstein, Universidad de C\'ordoba\\ Campus de Rabanales, 14071 C\'ordoba, Spain.}}}
\begin{document}

\maketitle

	\begin{abstract}
\noindent
We show that when a spacetime $\M(=M\cup\partial M)$ is globally hyperbolic with (possibly empty) smooth timelike boundary $\partial M$, a {\em metrizable} topology, the {\em closed limit topology} (CLT) introduced by F. Hausdorff himself (see \cite{1963Hausdorffset}) in set theory, can be advantageously adopted on the Geroch-Kronheimer-Penrose causal completion of $\M$, retaining essentially all the good properties of previous topologies in this ambient.
In particular, we show that if the globally hyperbolic spacetime $\M$ admits a {\em conformal} boundary, defined in such broad terms as to include all the standard examples in the literature, then the latter is {\em homeomorphic} to the causal boundary endowed with the CLT. We also discuss how our recent proposal \cite{Costa_e_Silva_2018} for a definition of null infinity
using only causal boundaries can be translated when using the CLT, simplifying a number of technical aspects in the pertinent definitions.
In a more technical vein, in the appendix we discuss the relationship of the CLT with the more generally applicable (but not always Hausdorff) {\em chronological topology} \cite{Florescausalboundaryspacetimes2007, Floresfinaldefinitioncausal2011}, and show that they coincide exactly in those cases when the latter happens to be Hausdorff.

	\end{abstract}

\section{Introduction}

Since its introduction by Geroch, Kronheimer and Penrose in 1972 \cite{GerochIdealPointsSpaceTime1972}, the {\em causal boundary} (or {\em $c$-boundary}) of a given strongly causal spacetime $M$ has asserted itself as a method for attaching ideal points to a spacetime $M$ which, although somewhat more abstract than conformal boundaries, is (unlike the latter) deeply ingrained in the very causal structure of $M$. In particular, this means that the $c$-boundary is, in a strong sense, an inescapable feature of the (conformally invariant aspects of the) spacetime geometry.

By contrast, from a purely geometric perspective, the existence of a conformal boundary imposes an {\em ad hoc} restriction on $M$: it relies on the existence of a suitable open conformal embedding into a larger spacetime $\tilde{M}$, inducing a piecewise $C^1$ (often required to be $C^{\infty}$) boundary $\partial M$ with fairly restrictive properties (see, e.g., \cite{HawkingLargeScaleStructure1975, PenroseAsymptoticStructure1963, PenroseConformalInfinity1964, WaldGeneralRelativity1984, Frauendiener_2000} for general accounts). Often it either does not exist altogether, or else has counterintuitive properties, even in simple but key examples such as plane waves \cite{berenstein02:super_yang_mills, MarolfPlanewavesinfinity2002, Hubeny_2002, Florescausalboundarywavetype2008}. Uniqueness is also unknown in general (although it has been established in some particular instances; see, e.g., \cite{ChruscielConformalboundaryextensions2010}).

It cannot be gainsaid that conformal boundaries are extremely convenient in physical applications. Whenever available, they present a very elegant and well-motivated means of addressing a number of key physical notions, such as asymptotic flatness, black holes and gravitational waves \cite{Frauendiener_2000,HawkingLargeScaleStructure1975, WaldGeneralRelativity1984}, and indeed it is hard to see how to proceed in a physically sensible fashion otherwise. The original causal boundary construction \cite{GerochIdealPointsSpaceTime1972}, by contrast, presented a number of undesirable properties even in the simplest cases.

Nevertheless, it can be of great interest to take advantage of the universality and geometric naturalness of the causal boundary construction, while making it as amenable as possible to concrete applications. Fortunately, the glitches in the original proposal have been addressed and solved by a suitable redefinition of the $c$-boundary, initially by Marolf and Ross \cite{Marolfnewrecipecausal2003}, and then by Flores, Herrera and S\'{a}nchez \cite{Florescausalboundaryspacetimes2007, Floresfinaldefinitioncausal2011} (see also the references of the latter papers and in \cite{Garcia-ParradoCausalstructurescausal2005} for a detailed discussion of previous/alternative proposals by other authors).

It may be too restrictive to assume that, upon adding a $c$-boundary, the extended space be a smooth manifold, at least in general cases. But having a good topology on the extension is fundamental in the applications of {\em any} boundary proposal. However, it often turns out that we cannot even hope that the extended space is {\em Hausdorff} without violating other fundamental good features of the $c$-completion ($M$ plus its $c$-boundary). Indeed, as emphasized by S. Harris, this non-Hausdorff character seems to be an inescapable feature of $c$-boundaries. Accordingly, the otherwise quite functional {\em chronological topology} \cite{Florescausalboundaryspacetimes2007, Floresfinaldefinitioncausal2011} fails to be Hausdorff in many cases - even in globally hyperbolic examples - although they are often $T_1$. However, we feel that the question still deserves to be investigated whether the Hausdorff property can be retained for a (natural) topology on the $c$-completion - perhaps with more restrictive condition on the spacetimes - without detracting from other useful features.


The main purpose of the present work is to show that in the class of globally hyperbolic spacetimes with a smooth timelike boundary it is indeed possible to endow the so-called {\em future} $c$-completion with a very natural, {\em metrizable} (and in particular Hausdorff) topology from the outset. This topology turns out to be strictly finer than the chronological topology in \cite{Florescausalboundaryspacetimes2007, Floresfinaldefinitioncausal2011}, but retains all its good properties.

For that purpose, we adapt a well-known topology introduced on the set $\mathcal{C}_X$ of closed, non-empty subsets of a metric space $X$ by F. Hausdorff (see \cite{1963Hausdorffset}), and which we refer to as the {\em closed limit topology} (CLT for short). This topology boasts an impressive list of good services in geometry. For instance, it is used in the Hausdorff-Gromov convergence of Riemannian manifolds \cite{Petersen_1993} which, among other things, is the starting point of the theory of Alexandrov spaces used by G. Perelman in his proof of the Poincar\'{e} conjecture \cite{Perelman_2008}.

As a rule, however, sequences of even very nice sets, such as compact smooth submanifolds, fail to have regular limits in the CLT in the sense that certain key geometric properties may not pass to the limit. But with some extra ``controlling features'', the limit may be regular. In a Lorentzian manifold $(M,g)$, it turns out that the {\em causal properties} of certain subsets thereof gives such a controlling feature. For example, the CLT is closely related to the $C^0$ topology on the space of inextendible causal curves, which is key in many applications. Another example much closer to our context is the recent observation by G. Galloway and C. Vega \cite{GallowayAchronalLimitsLorentzian2014, GallowayHausdorffClosedLimits2017} that {\em achronal boundaries are preserved by Hausdorff closed limits}. The CLT also works well (and is so natural) on $c$-boundaries of globally hyperbolic spacetimes with timelike boundary precisely because the so-called {\em indecomposable past sets} (IPs) we use to define these boundaries can also be controlled in this setting.

The rest of this paper is organized as follows.

In section \ref{s2} we recall the notion of spacetime with smooth timelike boundary, and review the main features on these objects used later in the paper.

In section \ref{prelim}, we briefly review some facts about {\em chronological sets}, not only to establish terminology, but also to describe the construction of a (future) $c$-boundary as a set endowed with a chronology relation. We then state {\em Harris' universality theorem}, which provides a precise way to characterize the uniqueness of putative boundaries with minimal causal properties, even outside the $c$-boundary paradigm.

In sections \ref{top1} and \ref{top2} we show how the CLT can be introduced on the $c$-boundaries, and prove that it retains most good properties of the chronological topology.

In section \ref{conf}, we compare our construction with conformal boundaries (suitably defined). In particular, we show that the future $c$-boundary endowed with the CLT is homeomorphic to the part of the conformal boundary which is accessible by future-directed timelike curves. A similar result has also been obtained for the chronological topology in \cite{Floresfinaldefinitioncausal2011} and \cite{OlafConformal}. The approach we adopt in this section has been inspired by the one in \cite{OlafConformal}: we essentially borrow from that reference the notions of a {\em future-precompact} subset and {\em future-nesting} conformal extensions we use in section \ref{conf}, although a somewhat different terminology is used here to describe these\footnote{Indeed, in Ref. \cite{OlafConformal} a Hausdorff topology was also introduced on the $c$-boundary following a suggestion in \cite{GerochIdealPointsSpaceTime1972}. However, that topology is introduced in a fairly technical manner, and its relation to the CLT, as well as its naturalness, remain unclear to us.}.

In section \ref{Scriplus} we revisit in the present context the notion of {\em future null infinity} for $c$-boundaries, recently introduced by us in a companion paper \cite{Costa_e_Silva_2018} with an eye towards developing some aspects of the theory of black holes within the $c$-boundary paradigm. In \cite{Costa_e_Silva_2018} we aimed at maximum generality, but at the price of introducing some rather technical-looking definitions and results. Reinterpreted in the light of the CLT topology on the $c$-boundary,
the underlying discussion becomes (hopefully) more transparent. In particular, it is possible to carry out a closer comparison with conformal boundaries and see how the concepts developed in \cite{Costa_e_Silva_2018} apply in this case.

In order to address some specific, more technical aspects of the CLT which are nevertheless somewhat detached from the main results listed above, we have included one appendix in the paper. There, we discuss in more detail the relationship of the CLT with the chronological topology. In particular, we prove that the CLT coincides with the chronological topology if and only if the latter is Hausdorff.


Finally, we emphasize that all the results in this paper are purely geometric, i.e., independent of any field equations for the metric, and work for all dimensions equal to or greater than 2.

\section{Spacetimes with timelike boundary}\label{s2}

Throughout this paper, we consider spacetimes with a (possibly empty) {\em smooth timelike boundary}. Since some finer aspects of this topic may be unfamiliar to the reader, we review the needed material in this section. We shall assume, however, that the reader is familiar with standard facts in the causal theory of spacetimes without boundary as given in the basic references \cite{BeemGlobalLorentzianGeometry1996, HawkingLargeScaleStructure1975, ONeillSemiRiemannianGeometryApplications1983}.

\subsection{Generalities on spacetimes with boundary} \label{s2.1}

{\bf Manifolds with boundary.} In what follows $\M$ will denote a {\em connected} $C^{\infty}$
{\em  $n$-manifold with boundary} ($n\geq 2$).
Any functions or tensor fields will be tacitly assumed to be smooth (by which we mean $C^{\infty}$), unless otherwise stated.
 $\M$ is then  locally diffeomorphic to (open subsets of) a closed half space of ${\mathbb R}^n$;
 $M$ will denote its {\em interior} and $\pM$ its 
 {\em boundary}.
  For any $p\in \M$,   $T_p\M$ will denote
 its $n$-dimensional tangent space
 while for $p\in\pM$,  $T_p\pM$ is the
 $(n-1)$-dimensional tangent space to the boundary.
 Background on  manifolds with boundary can be found, e.g., in \cite[Section 9]{Lee_2003}.




\bigskip

\noindent {\bf Spacetimes with boundary.} 
Most of the usual notions for Lorentzian manifolds without boundary, such as metric tensors, connection, curvature or causal character of vectors (following the conventions in \cite{BeemGlobalLorentzianGeometry1996, Minguzzicausalhierarchyspacetimes2008}), are extended to the case with boundary with no notable differences (see \cite{Gallowaysummer_2014, SolisPHD} for further background.) Therefore, we shall mostly focus below on the eventual differences that arise in this case.

\medskip

We focus only on the case of timelike boundaries here, and merely state the definitions and results which are pertinent to this paper. We refer the reader especially to Refs. \cite{SolisPHD,AkeFloresSanchezTimelikeBoundary2018} for more detailed discussions and proofs.

\begin{definition}
A {\em Lorentzian manifold with timelike boundary} $(\M,g)$, $\M=M\cup \partial M$, is a Lorentzian manifold with boundary such that, for the natural inclusion $i:\partial M \hookrightarrow \M$, the pullback $i^{*}g$ is a Lorentzian
metric on the (possibly empty or non-connected) boundary. A {\em spacetime with timelike boundary} is a connected, time-oriented Lorentzian manifold with timelike boundary.
\end{definition}
\noindent By {\em time-oriented} we of course mean that a time cone has been chosen on the tangent space at each point by some globally defined, continuous timelike vector field $Z$ on $\M$, just as in the case of spacetimes without boundary. In particular, we shall assume that the interior $M$ and the boundary $\partial M$ are also time-oriented by the corresponding restrictions of $Z$, so that $M$ and each connected component of $\partial M$ become spacetimes (without boundary) on their own right with the restricted metric. We shall always denote by $(M,g)$ the interior spacetime, retaining the same notation for the metric $g$ when restricted to $M$ if there is no risk of confusion. (Of course, $(\M, g) \equiv (M,g)$ if $\partial M = \emptyset$.) The pull-back $i^*$ will be dropped when there is no possibility of confusion; the time-orientation of vectors, vector fields and causal curves are defined just as in spacetimes without boundary.

\bigskip

\noindent {\bf Causality in spacetimes with timelike boundary.} For a spacetime with timelike boundary $(\M,g)$, the usual notations  $\ll$, $\leq$, $I^\pm(p)$, $J^\pm(p)$, etc., will be used for the chronological and causal relations (often with a subscript `$g$' for clarity), and for the chronological and causal future/past of any $p\in \M$, with no change in their meaning. In particular, $I^+(p,U)$, for $p\in U$, will denote the chronological future obtained by using (piecewise smooth) timelike curves entirely contained in the subset $U\subset \M$.

However, an important technical difference arises here for spacetimes with non-empty boundary. Already in the context of spacetimes without boundary it is well-known that limits of sequences of piecewise smooth causal curves (with suitably defined topologies on sets of causal curves) are usually not piecewise smooth. Accordingly, one must extend the notion of causal curves to include curves which are merely continuous (cf. \cite[Section 3.2]{BeemGlobalLorentzianGeometry1996}.) The naturally extended definitions of causal future and past, however, coincide with those using piecewise smooth curves. An analogous extension does exist for spacetimes with timelike boundary \cite[Def. 3.19]{SolisPHD}, and again causal past and futures are left unchanged thereby, but now it becomes unclear if the Limit Curve Lemma, which is crucial for a number of constructions in causal theory, holds for this class of curves. (The proof presented in Prop. 3.24 of Ref. \cite{SolisPHD} seems to be incomplete - cf. Remark 2.17 in Ref. \cite{AkeFloresSanchezTimelikeBoundary2018}.) In order to circumvent this problem, it seems necessary to work with curves in a Sobolev class (causal $H^1$-curves - see \cite[Section 2.3]{AkeFloresSanchezTimelikeBoundary2018} and references therein.) However, {\em for globally hyperbolic spacetimes with timelike boundary} (defined below - this is the only case of interest for us in this paper), causal past and future defined using this class of curves do coincide with the usual ones defined using piecewise smooth causal curves (cf. \cite[Prop. 2.19]{AkeFloresSanchezTimelikeBoundary2018}). Hence, this issue will never affect us here and will be consistently ignored throughout.


In what follows, the symbol `$cl$' before a subset of a given topological space will always denote its (topological) closure therein.

The following proposition establishes the basic properties for the chronological and causal relations in spacetimes with a timelike boundary.
\begin{proposition}\cite[Prop. 2.6]{AkeFloresSanchezTimelikeBoundary2018}\label{p_opentrans} In a spacetime with timelike boundary, the following facts hold.
\begin{itemize}
\item[a)] The binary
relation $\ll$ is open (in particular, all $I^\pm(p)$ are open in $\M$).
\item[b)] For any $p,q,r\in \M$, $p\ll q\leq  r \Rightarrow p\ll r$,
$p\leq q\ll  r \Rightarrow p\ll r$.
\item[c)]$J^\pm(p)\subset cl(I^\pm(p))$ for each $p \in M$.
\item[d)]  $I^{\pm}(p,M)=I^{\pm}(p) \cap M$ for each $p \in M$.
\end{itemize}
\end{proposition}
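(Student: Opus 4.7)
The plan for parts (a)-(c) is to follow the pattern familiar from the boundaryless case, replacing ordinary convex normal neighborhoods with boundary-adapted ones whenever we work near $\partial M$. First I would fix, for each $p \in \M$, a ``convex'' boundary chart (as developed in \cite{SolisPHD}) in which $\partial M$ is a flat timelike hypersurface and chronological/causal relations can be read off local Lorentzian geodesics. For (a), given $p\ll q$ via a piecewise smooth timelike curve $\gamma:[0,1]\to\M$, pick intermediate points $\gamma(t_1),\gamma(t_2)$ with $0<t_1<t_2<1$ and use the boundary convex neighborhoods of $p,q$ to exhibit open sets $U_p\ni p$, $U_q\ni q$ such that every $p'\in U_p$ satisfies $p'\ll \gamma(t_1)$ and every $q'\in U_q$ satisfies $\gamma(t_2)\ll q'$; composing with the middle arc yields $p'\ll q'$. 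Openness of $I^{\pm}(p)$ then follows immediately.

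For (b), given $p\ll q\leq r$ with causal curve $\sigma$ from $q$ to $r$, I would cover $\sigma$ by a finite collection of (boundary-adapted) convex neighborhoods and perturb locally: inside each such neighborhood the causal character of tangents of $\sigma$ can be deformed into timelike ones while keeping endpoints fixed, by standard manipulations of the exponential map at the corresponding base point. Iterating yields $p\ll r$; the dual statement is symmetric. For (c), if $q\in J^+(p)$ with causal curve $\gamma$ joining $p$ to $q$, pick a sequence $q_n\to q$ in $I^+(q)$ — possible because $I^+(q)$ is open by (a) and, by the structure of boundary convex neighborhoods, nonempty and accumulating at $q$. By the push-up in (b) we obtain $q_n\in I^+(p)$, hence $q\in cl(I^+(p))$.

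Part (d) is the main obstacle and the only place where the \emph{timelike} character of $\partial M$ is genuinely essential. The nontrivial inclusion is $I^+(p)\cap M\subseteq I^+(p,M)$: given $q\in M$ reached from $p\in M$ by a timelike curve $\gamma:[0,1]\to\M$, I must deform $\gamma$ into a timelike curve that avoids $\partial M$ altogether. The set $K:=\gamma^{-1}(\partial M)$ is a compact subset of $(0,1)$ (disjoint from $\{0,1\}$ since $p,q\in M$). Around each point of $\gamma(K)$ I would pick a boundary collar chart in which $\partial M=\{x^1=0\}$, $M=\{x^1>0\}$, and the coordinate field $\partial_{x^1}$ is spacelike (this is exactly where timelikeness of $\partial M$ is used: the inward-pointing normal to $\partial M$ is spacelike, hence a legitimate perturbation direction). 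Using a partition of unity and a cutoff function vanishing near $t=0,1$, I would patch these into a globally defined, compactly supported, inward-pointing spacelike field $N$ on a neighborhood of $\gamma(K)$ in $\M$. For sufficiently small $s>0$, the curve $t\mapsto \phi_s^N(\gamma(t))$ fixes the endpoints, lies entirely in $M$, and has timelike tangent everywhere (by uniform continuity of the causal cone structure along the compact image of $\gamma$, since $\dot\gamma$ is timelike and the perturbation of $\dot\gamma$ introduced by $N$ is $O(s)$).

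The main technical difficulty is organizing the local pushes into a single global deformation that simultaneously preserves the timelike character of the tangent and removes the curve from $\partial M$ on all of $K$. The compactness of $K$ and the uniform openness of timelike cones on compact subsets of $\M$ make this work; conceptually, the proof would break down if $\partial M$ were null or spacelike, since then the ``inward'' direction would no longer lie outside the causal cones and the perturbation could destroy timelikeness.
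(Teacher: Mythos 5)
First, a point of comparison: the paper does not actually prove this proposition; it is quoted from \cite{AkeFloresSanchezTimelikeBoundary2018} (Prop.\ 2.6), so your attempt can only be measured against the standard arguments. Your overall strategy (boundary-adapted local neighborhoods for (a)--(c), an inward deformation off $\partial M$ for (d)) is the right one, and your treatment of (d) --- compactness of $\gamma^{-1}(\partial M)\subset(0,1)$, a patched strictly inward-pointing field supported away from the endpoints, forward invariance of the interior $M$ under its flow, and smallness of the induced perturbation of $\dot\gamma$ on a compact parameter interval --- is essentially sound. The only cosmetic slips there are that a partition-of-unity combination of spacelike inward vectors need not itself be spacelike (immaterial, since only the $O(s)$ size of the perturbation is used, not the causal character of $N$), and that the cutoff should be either a function of the parameter $t$ (deforming by $\phi^N_{s\chi(t)}$) or a function on $\M$ vanishing near $p$ and $q$, not a mixture of the two.

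The genuine gap is in (b). You claim that inside each convex neighborhood ``the causal character of tangents of $\sigma$ can be deformed into timelike ones while keeping endpoints fixed.'' That local statement is false: if the piece of $\sigma$ lying in the neighborhood is a null geodesic segment from $x$ to $y$, then $x\le y$ but $x\not\ll y$ there, and no fixed-endpoint timelike deformation of that segment exists --- this is precisely the situation the push-up property must handle, so as written you are assuming what is to be proved. The correct route exploits the strict relation already gained: either iterate the local lemma that in a (boundary-adapted) normal neighborhood $x\ll y$ and $y\le z$ imply $x\ll z$, propagating $p\ll\cdot$ along a finite subdivision of $\sigma$, or invoke the classical fact that a causal curve containing a timelike segment (here, the concatenation of the timelike curve from $p$ to $q$ with $\sigma$) can be deformed, endpoints fixed, into a timelike curve. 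With (b) repaired in this way, your derivation of (c) goes through, provided you also make explicit the point you use implicitly: that $I^+(q)$ is nonempty and accumulates at $q$ even for $q\in\partial M$, which is exactly where the timelike character of the boundary enters (e.g., via future-directed timelike curves inside the Lorentzian hypersurface $\partial M$).
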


\begin{remark}\label{nuevo} {\em Property (d) can be extended to points at the boundary as follows: for any $p\in \partial M$, $I^{\pm}(p)\cap M$ is the set of
$q\in M$ such that there exists a  future/past -directed timelike $\gamma$ with ${\rm Im}(\gamma)\setminus\{p\}\subset M$ joining $p$ with $q$. }
\end{remark}

\subsection{The causal ladder for spacetimes with timelike boundary}\label{s2.2}

Most of the causal ladder and characterizations in the case without boundary (see \cite{BeemGlobalLorentzianGeometry1996, Minguzzicausalhierarchyspacetimes2008}) can be transplanted directly to the case with timelike boundary. 
Here, we will recall some of them.

\begin{definition}\cite[Defn. 2.8]{AkeFloresSanchezTimelikeBoundary2018}
 A spacetime with timelike boundary $(\M,g)$
is:
\begin{itemize}\item
{\em chronological} (resp.
 {\em causal})
 if it does not contain closed
 timelike (resp.
closed
causal)
curves,
\item {\em future} (resp.  {\em past})  {\em distinguishing}  if the equality  $I^{+}(p)=I^{+}(q)$ (resp. $I^{-}(p)=I^{-}(q)$) implies $p=q$ for all $p,q \in \M$; $(\M,g)$ is {\em distinguishing} when it is both future and past distinguishing.
\item {\em strongly causal} if for all $p\in \M$ and  any neighborhood $U\ni p$ there exists another neighborhood $V \subset U$, $p\in V$, such that any causal curve with endpoints in
$V$ is entirely contained in $U$.
\end{itemize}
\end{definition}


There is another important characterization of strong causality, namely in terms of {\em causally convex} sets, that also holds for spacetimes with timelike boundary.

\begin{definition}\cite[Defn. 2.9]{AkeFloresSanchezTimelikeBoundary2018} A subset $W$ of a spacetime with timelike boundary $(\M,g)$ is
{\em  causally convex 
 } if $J^{+}(x)\cap J^{-}(y)
\subset W$ for any $x,y \in W$ (equivalently, if any causal curve with endpoints in $W$ remains in $W$).


Given an open, connected neighborhood $U\subset \M$, a subset $W\subset U$ is {\em causally convex}
{\em in $U$} when $W$ is causally convex  
 as a subset of $U$ when we regard $(U, g|_U)$ as a spacetime with timelike boundary.
\end{definition}

Thus we have:
%

\begin{proposition}\cite[Prop. 2.11]{AkeFloresSanchezTimelikeBoundary2018} \label{imprisoned}
$(\M,g)$ is  strongly causal  if and only if for each $p\in\M$ and for any neighbourhood
$U\ni p$, there exists a causally convex
neighbourhood $W\ni p$ contained in $U$.

In this case,
causal curves are not partially imprisoned on compact sets, that is, for any  future- or past-inextendible causal curve $\gamma:[a,b) \rightarrow \M, a<b\leq \infty$ and any compact set $K\subset \M$, there exists some $s_0\in [a,b)$ such that $\gamma(s)\not\in K$ for all $s\in [s_0,b)$.

\end{proposition}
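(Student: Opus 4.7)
The plan splits into two parts: the equivalence of strong causality with the existence of causally convex neighborhoods, and the non-imprisonment consequence. The direction in the equivalence asserting that causally convex neighborhoods imply strong causality is immediate from the definitions: any causally convex $W\subseteq U$ containing $p$ automatically witnesses strong causality at $p$, by taking $V:=W$.

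For the converse, given $p\in \M$ and a neighborhood $U\ni p$, I would first invoke strong causality to produce an open neighborhood $V\subseteq U$ of $p$ such that every causal curve with endpoints in $V$ remains in $U$. Because $\partial M$ is timelike, at every $p\in \M$ there exist timelike vectors whose integral curves---by Remark \ref{nuevo} when $p\in \partial M$---yield short timelike segments through $p$ with images (away from $p$) in $M$; hence $p\in cl(I^+(p))\cap cl(I^-(p))$, and one may select $q_-\in V\cap I^-(p)$ and $q_+\in V\cap I^+(p)$. Setting $W:=I^+(q_-)\cap I^-(q_+)$, Prop. \ref{p_opentrans}(a) shows $W$ is open, $q_-\ll p\ll q_+$ places $p\in W$, and the push-up property in Prop. \ref{p_opentrans}(b) implies $W$ is causally convex: if $\sigma$ joins $x,y\in W$ then any intermediate point $z$ satisfies $q_-\ll x\leq z\leq y\ll q_+$, whence $z\in W$. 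Finally $W\subseteq U$, because each $r\in W$ lies on a timelike concatenation from $q_-$ to $q_+$, a causal curve with endpoints in $V$ that by the choice of $V$ remains in $U$.

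For the non-imprisonment statement I would argue by contradiction: assume $\gamma:[a,b)\to \M$ is a future-inextendible causal curve and $s_n\to b$ satisfies $\gamma(s_n)\in K$. By compactness of $K$, extract a subsequence with $\gamma(s_n)\to q\in K$. Applying the first part at $q$, any open neighborhood $V$ of $q$ contains a causally convex neighborhood $W\subseteq V$ of $q$; then $\gamma(s_n)\in W$ for all large $n$, and for any $s\in[s_n,b)$ we may choose $s_m>s$ with $\gamma(s_m)\in W$, so causal convexity forces $\gamma([s_n,s_m])\subseteq W$ and in particular $\gamma(s)\in W\subseteq V$. Since $V$ was arbitrary, $\gamma(s)\to q$ as $s\to b$, which would continuously extend $\gamma$ via $\gamma(b):=q$, contradicting its future-inextendibility.

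The main technical delicacy lies in securing the construction of $W$ at boundary points $p\in \partial M$, where classical normal convex neighborhood arguments from the boundaryless theory are unavailable. The timelike character of $\partial M$ together with Remark \ref{nuevo} bypasses this cleanly by providing timelike curves through boundary points whose images (except at $p$) lie in $M$, so $I^\pm(p)$ are nonempty and accumulate at $p$. This renders the construction uniform across interior and boundary points, reducing the proof to the basic relational facts of Prop. \ref{p_opentrans} without recourse to local coordinate or normal-coordinate techniques.
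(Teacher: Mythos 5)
Your proof is correct. Note that the paper itself offers no argument for this proposition --- it is quoted verbatim from \cite[Prop. 2.11]{AkeFloresSanchezTimelikeBoundary2018} --- so there is no in-paper proof to compare against; your argument is the standard one (chronological diamonds $I^+(q_-)\cap I^-(q_+)$ placed inside a strong-causality neighbourhood $V$, made causally convex by the push-up property of Proposition \ref{p_opentrans}(b), plus the accumulation-point/causal-convexity argument forcing a future endpoint), and it adapts correctly to the timelike-boundary setting. Two small remarks: the fact that $p\in {\rm cl}(I^+(p))\cap {\rm cl}(I^-(p))$ for $p\in\pM$ is not literally Remark \ref{nuevo}, which only characterizes $I^{\pm}(p)\cap M$; it follows because the induced metric on $\pM$ is Lorentzian, so arbitrarily short future- and past-directed timelike curves through $p$ exist within $\pM$ (or with inward-pointing timelike initial velocity), which is all you need to pick $q_{\pm}\in V$. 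Also, you only treat the future-inextendible case of the non-imprisonment statement; the past-inextendible case is time-dual and should at least be flagged as such.
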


We now proceed to the higher steps of the causal ladder:
\begin{definition}\cite[Defn. 2.14]{AkeFloresSanchezTimelikeBoundary2018}
 \label{higherladder}
A spacetime with timelike boundary $(\M,g)$
is:
\begin{itemize}
\item {\em stably causal} if it admits a  {\em time function}, i.e., a continuous function $\tau$ which increases strictly on all future-directed causal curves.
\item {\em causally continuous} if the set valued functions $I^{\pm}:\M \rightarrow \mathbb{P}(\M)$ are both one-to-one and continuous (for the natural topology on the power set $\mathbb{P}(\M)$ which admits as a basis the sets  $\{U_K: K\subset \M$ is compact$\}$, where $U_K=\{A\subset \M: A\cap K=\emptyset\}$, see \cite[Def. 3.37 to Prop. 3.38]{Minguzzicausalhierarchyspacetimes2008});
\item {\em causally simple} if it is distinguishing and $J^{\pm}(p)$ are closed for every $p \in \M$;
\item {\em globally hyperbolic}, when it is strongly causal and all $J^+(p)\cap J^-(q)$, $p,q\in \M$ are compact.
\end{itemize}
\end{definition}

The following result summarizes the steps of the causal ladder for spacetimes with timelike boundary.
\begin{proposition}\cite[Props. 2.12, 2.13, Theor. 3.8]{AkeFloresSanchezTimelikeBoundary2018}\label{p_ord_lowerlevels} In any spacetime $(\M, g)$ with timelike boundary: global hyperbolicity $\Rightarrow$ causal simplicity $\Rightarrow$ causal continuity $\Rightarrow$ stable causality $\Rightarrow$ strong causality $\Rightarrow$ distinguishing, and future or past distinguishing $\Rightarrow$ causality $\Rightarrow$ chronology. Moreover, if $(\M,g)$ is globally hyperbolic, then its interior $(M,g)$ is causally continuous (but not necessarily causally simple).
\end{proposition}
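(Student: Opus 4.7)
The plan is to adapt the classical proofs of the causal ladder to the present setting, leaning on the two new technical tools already established in the text: Proposition \ref{p_opentrans} (openness of $\ll$, the push-up properties, and the inclusion $J^\pm(p)\subset cl(I^\pm(p))$) and Proposition \ref{imprisoned} (the characterization of strong causality via a causally convex neighborhood basis, together with the no-imprisonment property for compact sets). The lower implications are essentially immediate: causality $\Rightarrow$ chronology is trivial, and if $\g$ is a closed causal curve through $p$ then every $q$ on $\g$ satisfies $I^+(p)=I^+(q)$ by Prop. \ref{p_opentrans}(b), ruling out future distinguishing. For strongly causal $\Rightarrow$ distinguishing, I would run the standard no-imprisonment argument: given $p\neq q$ with, say, $I^+(p)=I^+(q)$, use Prop. \ref{imprisoned} to choose a causally convex neighborhood $V$ of $p$ with $q\notin V$, and a point $p'\in I^+(p)\cap V$ by openness of $\ll$; then $q\ll p'$ forces a timelike curve from $q$ into $V$ that must re-enter $V$ after departing it, contradicting causal convexity.

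For the middle steps I would follow the standard patterns. Stably causal $\Rightarrow$ strongly causal combines the time function $\tau$ with the no-imprisonment clause of Prop. \ref{imprisoned} to build a causally convex neighborhood basis at each point. Causally continuous $\Rightarrow$ stably causal uses a Geroch-type volume function $t^\pm(p)=\mu(I^\pm(p))$, which is continuous precisely because $I^\pm$ are continuous in the power-set topology appearing in Definition \ref{higherladder}; its monotonicity along future-directed causal curves gives a time function. Causally simple $\Rightarrow$ causally continuous then follows by standard outer/inner semicontinuity, using closedness of $J^\pm$ and openness of $I^\pm$.

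The main obstacle is globally hyperbolic $\Rightarrow$ causally simple, where I must show $J^+(p)$ is closed (distinguishing is already built into strong causality). Given $q_n\in J^+(p)$ with $q_n\to q$, choose $q'\gg q$ so that eventually $q_n\in J^-(q')$, placing the tail of $\{q_n\}$ in the compact diamond $J^+(p)\cap J^-(q')$. By the remark preceding this proposition (Prop. 2.19 of \cite{AkeFloresSanchezTimelikeBoundary2018}), in the globally hyperbolic regime one may safely work with piecewise-smooth causal curves $\g_n$ from $p$ to $q_n$ and apply a Limit Curve Lemma inside this compact diamond. The delicate point is that the limit curve can graze $\partial M$; but precisely because $\partial M$ is a timelike Lorentzian hypersurface, the limit curve cannot leak out of $\M$ and remains a causal curve from $p$ to $q$, giving $q\in J^+(p)$.

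For the ``Moreover'' statement, $(M,g)$ inherits strong causality from $(\M,g)$: a causally convex neighborhood of $p\in M$ in $\M$ supplied by Prop. \ref{imprisoned} can be shrunk into $M$ while remaining causally convex there. By Prop. \ref{p_opentrans}(d) one has $I^\pm_M(p)=I^\pm_{\M}(p)\cap M$, so the continuity of $I^\pm$ on $(\M,g)$ (which holds because global hyperbolicity implies causal continuity) restricts to continuity on $M$, yielding causal continuity of the interior. To see that $(M,g)$ need not be causally simple, I would exhibit a simple example in which a null geodesic from $p\in M$ reaches $\partial M$, so that removing the boundary leaves a ``hole'' in $J^+_M(p)=J^+(p)\cap M$: points on $\partial M$ lying in $J^+(p)$ are approached in $M$ by points outside $J^+(p)$, hence $J^+_M(p)$ fails to be closed in $M$ even though $J^+(p)$ is closed in $\M$.
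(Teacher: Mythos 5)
This proposition is not proved in the paper at all: it is quoted verbatim from \cite[Props.~2.12, 2.13, Theor.~3.8]{AkeFloresSanchezTimelikeBoundary2018}, so there is no internal argument to compare yours with; what you are really proposing is to redo the work of that reference by adapting the classical ladder proofs. As a plan this is broadly sensible (and your restriction argument for causal continuity of the interior, via Proposition \ref{p_opentrans}(d), the openness and density of $M$ in $\M$, and the power-set topology of Definition \ref{higherladder}, is a legitimate route), but several steps hide exactly the boundary-specific content that makes the citation nontrivial, and some steps as written do not work. In particular, for ``stably causal $\Rightarrow$ strongly causal'' you invoke the no-imprisonment clause of Proposition \ref{imprisoned}, which is stated there as a \emph{consequence} of strong causality; using it to establish strong causality is circular, and the genuine argument (together with the local causal structure near a timelike boundary that it requires) is precisely the content of the cited results. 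Also, in ``strongly causal $\Rightarrow$ distinguishing'' a timelike curve from $q\notin V$ ending at $p'\in V$ does not by itself violate causal convexity of $V$ (it enters $V$ once and need not re-enter); you must use $I^+(p)=I^+(q)$ a second time to build a causal curve from $p\in V$ through a point near $q$ outside $V$ back to $p'\in V$.

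Two further concrete points. For ``globally hyperbolic $\Rightarrow$ causally simple'' your limit-curve detour is both unnecessary and exactly where the delicate issue flagged in Section \ref{s2} lives (limits of causal curves touching $\partial M$, $H^1$ versus piecewise smooth); the elementary argument suffices: once the tail of $(q_n)$ lies in $J^+(p)\cap J^-(q')$, this set is compact, hence closed in the Hausdorff space $\M$, so $q\in J^+(p)$ with no limit curve at all. Finally, your mechanism for ``the interior need not be causally simple'' is wrong as stated: the existence of a null geodesic from $p\in M$ with endpoint on $\partial M$ does not create a hole --- for the half-space $\{x\geq 0\}$ of Minkowski space the interior is still causally simple, since $J^+_M(p)=J^+(p)\cap M$ there --- and non-closedness of $J^+_M(p)$ must be witnessed by a point of $M$, whereas the points you describe lie on $\partial M$ and are simply not in the interior spacetime. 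What is needed is a boundary that obstructs causal connections \emph{between interior points}, e.g.\ the exterior region $\{x^2+y^2\geq 1\}$ of a timelike solid cylinder in three-dimensional Minkowski space: points of $M$ in the ``shadow'' behind the cylinder are limits of points of $J^+_M(p)$ but are reachable from $p$ only by causal curves passing through the deleted boundary, so $J^+_M(p)$ fails to be closed in $M$.
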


\section{Chronological sets and the universality of the future $c$-boundary}\label{prelim}

S. Harris \cite{HarrisUniversalityfuturechronological1998} has shown that the future (or past) $c$-completion of a large class of spacetimes always exists and is unique in the following very strong sense. A category can be defined whose objects are {\em chronological sets} - a class which naturally encompasses chronological spacetimes - and whose morphisms are certain {\em chronology-preserving maps} (these definitions are briefly reviewed below). A natural notion of {\em (future or past) completeness} is given, and for a certain (full) subcategory whose objects include {\em globally hyperbolic spacetimes with timelike boundary}, the existence of a future (or alternatively a past) $c$-completion is given by a universal property. Since these notions are so general as to be satisfied for nearly any conceivable extension of spacetimes with sensible causal properties, this establishes the essential uniqueness of the $c$-completion.

In the first part of this section we review some basic facts about chronological sets which will be relevant to our discussion, for the benefit of the reader unfamiliar with the pertinent notions. We refer to Ref. \cite{HarrisUniversalityfuturechronological1998} and references therein for further details and proofs.

\begin{definition}
\label{chronos}
A {\em chronological set} is a pair $(X,\ll)$, where $X$ is a non-empty set and $\ll$ is a strict order on $X$ (i.e., a transitive and antisymmetric binary relation) such that
\begin{itemize}
\item[C1)] $\forall x \in X$, $\exists y \in X$ such that either $x \ll y$ or $y \ll x$, and
\item[C2)] ({\em Chronological separability}) there exists a countable set $S \subset X$ such that $ \forall x,y \in X$ with $x\ll y$, there exists $s \in S$ such that $x \ll s \ll y$. (Such a set $S$ is said to be {\em dense} in $X$.)
\end{itemize}
\end{definition}


Throughout the remaining of this section, we shall fix a chronological set $(X,\ll)$.  The following definitions naturally extend those introduced for spacetimes in the original reference \cite{GerochIdealPointsSpaceTime1972} and will be used in all that follows.

\begin{definition}
\label{shebang}
Let $A \subseteq X$ be any set.
\begin{enumerate}[label=(\roman*)]
\item The {\em past} [resp. {\em future}] of $A$ is
\[
I^-(A) \, [\mbox{resp. } I^+(A)] := \{ x \in X \, : \, \exists y \in A \mbox{ such that } x \ll y \, [\mbox{resp. } y \ll x] \}.
\]
(We shall write $I^{\pm}(x) := I^{\pm}(\{x\})$ for any $x \in X$.)
\item $A$ is a {\em past set} [resp. {\em future set}] if $A = I^-(A)$ [resp. $A = I^+(A)$].
\item $A$ is an {\em indecomposable past set (IP)} [resp. {\em indecomposable future set (IF)}] is it is a past [resp. future] set and cannot be written as the union of two proper past [resp. future] subsets.
\item A point $x \in X$ is a {\em future bound} [resp. {\em past bound}] for $A$ if $A \subset I^-(x)$ [resp. $A \subset I^+(x)$]. If a future [resp. past] bound for $A$ exists, then $A$ is said to be {\em future-bounded} [resp. {\em past-bounded}].
\item A point $x \in X$ is a {\em future limit} [resp. {\em past limit}] for $A$ if it is a future [resp. past] bound with the following property:
\[
y \ll x  \mbox{ [resp. $x \ll y$]} \Longrightarrow \exists a \in A \mbox{ such that } y \ll a \mbox{ [resp. $a \ll y$]}.
\]
\item $A$ is said to be a {\em proper indecomposable past set (PIP)} [resp. {\em proper indecomposable future set (PIF)}] if $A$ is an IP [resp. IF] and admits a future limit [resp. past limit]. $A$ is said to be a {\em terminal indecomposable past set (TIP)} [resp. {\em terminal indecomposable future set (TIF)}] if it is an IP [resp. IF] which is not a PIP [resp. TIP].
\item $A$ is {\em achronal} if $x \notin I^+(y), \forall x,y \in A$.
\end{enumerate}
\end{definition}

\begin{remark}
\label{matteroffact}
{\em The following facts are immediate consequences of Definition \ref{shebang} for a chronological set $(X,\ll)$.
\begin{enumerate}[label=(\arabic*)]
\item The collection of IPs in $(X,\ll)$ is the disjoint union of the collections of PIPs and TIPs. A ``time-dual'' conclusion can be made for IFs.
\item The empty set $\emptyset \subset X$ is vacuously both a TIF and a TIP. It may well be the case, for a general chronological set, that there are no other TIPs or TIFs. For example, we may take $X$ to be the closed strip $-1 \leq t \leq 1$ in the Minkowski plane $(M=\mathbb{R}^2, g= -dt^2 \oplus dx^2)$, with $\ll$ being the restricted chronological order arising from $\ll_g$. In this case there no non-empty TIPs or TIFs.
\item A future or past bound of a subset $A\subset X$ can never be in $A$.
\item $X$ itself is a past [resp. future] set if and only if $I^+(x) \neq \emptyset$ [resp. $I^-(x) \neq \emptyset$] for all $x \in X$. If $X$ is a past [resp. future] set, then it may or may not be indecomposable. If it is an IP [resp. IF], however, then it is necessarily a TIP [resp. TIF] in view of the previous observation.
\end{enumerate}
}
\end{remark}

In what follows we shall deal mostly with IPs, past sets, future bounds, etc.. {\em The analogue results for IFs, future sets, past bounds, etc. will be always understood unless it is explicitly stated otherwise.}

We now turn to the notion of {\em future completion} of a chronological set. It turns out that not all chronological sets have a suitable future completion, so we shall define a pertinent category of such sets for which future completions have good categorical properties.

To simplify the notation, and following standard usage, we shall often abuse notation and denote a chronological set $(X,\ll)$ by its underlying set $X$, referring to $X$ itself as the chronological set (in which the relation $\ll$ is implicit).

\begin{definition}
\label{objects}
A chronological set $X$ is
\begin{enumerate}[label=(\arabic*)]
\item {\em past-determined} if  $\forall x,y,w \in X$,
\[
w \ll y,\;\, I^-(x) \neq \emptyset \mbox{ and } I^-(x) \subset I^-(w) \Longrightarrow x \ll y;
\]
\item {\em past-regular} if $\forall x \in X$, $I^-(x)$ is a non-empty IP;
\item {\em future-complete} if any {\em future-directed chain} in $X$ (i.e., any sequence $(x_n)\subset X$ with $x_n\ll x_{n+1}$ for all $n$) has a future limit, or equivalently, if every non-empty IP is a PIP.
\item {\em past} [resp. {\em future}] {\em distinguishing} if
\[
x \neq y \Longrightarrow I^-(x) \neq I^-(y) \mbox{ [resp. $I^+(x) \neq I^+(y)$]}.
\]
If $X$ is both future and past distinguishing, then it is said to be {\em distinguishing}.
\end{enumerate}
\end{definition}

Next, we introduce the morphisms which will be relevant in this context:
\begin{definition}
\label{morphisms}
A mapping $f: X \rightarrow Y$ between chronological sets is
\begin{enumerate}[label=(\arabic*)]
\item {\em chronological} (or {\em chronology-preserving}) if  $\forall x,x' \in X$,
\[
x \ll  x'  \Longrightarrow f(x) \ll  f(x');
\]
\item {\em future-continuous} if it is chronological and whenever $x \in X$ is the future limit of a future-directed chain $(x_k)$ in $X$, then $f(x)$ is the future limit of the future-directed chain $(f(x_k))$ in $Y$.
\end{enumerate}
\end{definition}


We shall denote by $\mathfrak{X}$ the category whose objects $Ob(\mathfrak{X})$ are {\em past-regular, past-determined, past-distinguishing chronological sets} and whose morphisms $Mor(\mathfrak{X})$ are with future-continuous mappings. This the relevant category, as we will see presently. Indeed, Proposition \ref{causalityforpastdetermination} below implies that causally continuous spacetimes without boundary and globally hyperbolic spacetimes with timelike boundary (with their natural chronological relations) are objects in $\mathfrak{X}$.

\begin{definition}
\label{completion}
A {\em future completion} of an object $X$ in $\mathfrak{X}$ is a pair $(\hat{X},i_X)$ where $\hat{X}$ is a future-complete object of $\mathfrak{X}$ and $i_X: X \rightarrow \hat{X}$ is a future-continuous one-to-one mapping satisfying the following universal property: given any future-complete $Z \in Ob(\mathfrak{X})$ and any future-continuous mapping $\varphi: X\rightarrow Z$ there exists a unique future-continuous mapping $\psi:\hat{X} \rightarrow Z$ for which $\psi \circ i_X =\varphi$.
\end{definition}

Since entails a universal property, such a definition ensures that {\em if a future completion exists, then it is unique apart from an isomorphism in $\mathfrak{X}$}. Thus we may refer to {\em the} future completion of a given chronological set.

We are now ready to present Harris' key result.
\begin{theorem}[Harris' Universality Theorem \cite{HarrisUniversalityfuturechronological1998}]
\label{crucial}
A future completion always exists for every object in the category $\mathfrak{X}$.
\end{theorem}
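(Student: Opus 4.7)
The plan is to adapt the classical Geroch–Kronheimer–Penrose construction to this abstract categorical setting. I would define $\hat X$ to be the collection of all non-empty IPs of $X$, equipped with the chronological relation declared by $P\ll P'$ iff there exists $x\in P'$ with $P\subseteq I^-(x)$; and define the canonical map $i_X:X\to \hat X$ by $x\mapsto I^-(x)$. Past-regularity of $X$ guarantees that $i_X$ lands in $\hat X$, past-distinguishingness makes it injective, and a direct check (using that $\ll$ is transitive on $X$) shows $i_X$ is chronology-preserving.

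First I would verify that $(\hat X,\ll)$ is itself an object of $\mathfrak{X}$. The key auxiliary lemma, essentially the Geroch–Kronheimer–Penrose characterization of IPs, is that every non-empty IP of $X$ is the nested union of pasts $I^-(x_n)$ of some future-directed chain $(x_n)\subseteq X$; this lemma is proved from past-regularity together with chronological separability. With it in hand, the countable dense set demanded by C2 for $\hat X$ can be obtained as $i_X(S)$ for a countable dense $S\subseteq X$. Past-regularity of $\hat X$ (every element is the past of a chain in $\hat X$, namely the image of a generating chain of $P$ under $i_X$), past-distinguishingness, and past-determination of $\hat X$ then all fall out by unfolding the definition of $\ll$ on $\hat X$ combined with the corresponding properties of $X$.

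Next I would establish future-completeness of $\hat X$. Given any non-empty IP $\mathcal P$ of $\hat X$, the set $P_\infty:=\bigcup\{Q:Q\in \mathcal P\}\subseteq X$ is itself a non-empty IP of $X$, as one checks using the chain characterization and the fact that an increasing union of IPs is an IP. I would then verify that $P_\infty$ is the future limit of $\mathcal P$ in $\hat X$ by testing the definition of $\ll$ in $\hat X$ against the monotonicity of the family $\mathcal P$: any $Q\ll P_\infty$ in $\hat X$ comes from some $x\in P_\infty$ with $Q\subseteq I^-(x)$, and by construction of $P_\infty$ this $x$ lies in some member of $\mathcal P$.

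The universal property is where past-determinism really bites, and is the main obstacle. Given a future-continuous $\varphi:X\to Z$ with $Z$ future-complete, I would define $\psi:\hat X\to Z$ on $P\in \hat X$ by choosing a future-directed chain $(x_n)$ with $P=\bigcup_n I^-(x_n)$ and setting $\psi(P)$ equal to the future limit in $Z$ of $(\varphi(x_n))$, which exists by future-completeness of $Z$. The work is to show $\psi$ is well-defined (independent of the chain), chronology-preserving, future-continuous, and uniquely determined by the identity $\psi\circ i_X=\varphi$. Independence of the chain and the identity $\psi\circ i_X=\varphi$ follow from the past-regular and past-distinguishing properties of $Z$ applied to the limits. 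Chronology-preservation is where past-determination of $Z$ is indispensable: from $P\ll P'$ in $\hat X$ one extracts $x\in P'$ with $P\subseteq I^-(x)$; translating through generating chains gives an element $\varphi(x)$ whose past in $Z$ contains the past of $\psi(P)$ and which is bounded above by $\psi(P')$, and the past-determined axiom of $Z$ is precisely what converts this inclusion of pasts into the desired chronological relation $\psi(P)\ll \psi(P')$. Uniqueness of $\psi$ then follows by a standard category-theoretic argument using future-continuity applied to generating chains.
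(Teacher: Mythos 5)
Your construction is precisely the one the paper (following Harris) adopts: $\hat X$ is the set of non-empty IPs, $i_X(x)=I^-(x)$, and $P\ll P'$ iff some $x\in P'$ satisfies $P\subseteq I^-(x)$ --- your omission of the extra requirement $x\notin P$ is immaterial, since $x\in P$ together with $P\subseteq I^-(x)$ would force $x\ll x$. The paper itself only sketches this realization and defers the verifications to \cite{HarrisUniversalityfuturechronological1998}, and your outline (chain generation of IPs from chronological separability, future-completeness via directed unions, uniqueness of future limits from past-distinguishingness of $Z$, and past-determination of $Z$ to make $\psi$ chronological) fills them in along the same standard lines, so I see no genuine gap.
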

\noindent {\em Comment on the proof.} Let $X$ be an object in $\mathfrak{X}$. A concrete realization of the future completion is given as follows. Define
\[
\hat{X}:= \{ P \, : \, P \mbox{ is a non-empty IP in $X$}\},
\]
and $i_X:x \in X \mapsto I^-(x) \in \hat{X}$. Now, since $X$ is past-regular $i_X$ is well-defined, and since $X$ is past-distinguishing, it is also one-to-one. We define a chronological relation on $\hat{X}$ as follows: $\forall P,P' \in \hat{X}$
\[
P \ll P' \Longleftrightarrow \exists\; x \in P' \setminus P \mbox{ such that } P \subset I^-(x).
\]
It is easy to see this indeed defines a chronological relation. One then shows that chronological separability holds, and that $\hat{X}$ is past-regular, past-determined and past-distinguishing, and hence an object in $\mathfrak{X}$, and that $i_X$ is future-continuous with respect to this chronological relation. \qcd

We shall also need the following simple result.
\begin{proposition}
\label{ghlater}
 Let $i_X: X \hookrightarrow \hat{X}$ be the future completion of some chronological set $X\in Ob(\mathfrak{X})$. Let $Y$ be a chronological set and $f: \hat{X} \rightarrow Y$ a chronological bijection with chronological inverse. Then $Y$ is future-complete, $f,f^{-1} \in Mor(\mathfrak{X})$ and $(Y, f\circ i_X)$ is a future completion of $X$.
\end{proposition}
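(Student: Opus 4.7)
The plan is to transport all the structural properties from $\hat{X}$ to $Y$ along the chronological isomorphism $f$. The underlying observation is that since both $f$ and $f^{-1}$ preserve $\ll$, we have the identity $f(I^-(A)) = I^-(f(A))$ for every $A \subseteq \hat{X}$ (and dually for $f^{-1}$). In particular, $f$ and $f^{-1}$ bijectively send past sets to past sets and IPs to IPs, since a nontrivial decomposition on either side pulls back to one on the other.

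First I would verify future-continuity of $f$ and $f^{-1}$. If $x$ is the future limit of a chain $(x_k)$ in $\hat{X}$, chronology preservation gives $f(x_k) \ll f(x)$ for all $k$, so $f(x)$ is a future bound of $(f(x_k))$; for the limit property, any $z \ll f(x)$ pulls back to $f^{-1}(z) \ll x$, which by the limit property in $\hat{X}$ satisfies $f^{-1}(z) \ll x_k$ for some $k$, whence $z \ll f(x_k)$. The symmetric argument handles $f^{-1}$. Next I would check that $Y \in Ob(\mathfrak{X})$. Past-regularity: for $y = f(x)$, $I^-(y) = f(I^-(x))$ is a non-empty IP by the identity above. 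Past-distinguishment: if $y_1 \neq y_2$, then $f^{-1}(y_1) \neq f^{-1}(y_2)$ have distinct pasts in $\hat{X}$, and applying $f$ preserves the distinction. Past-determination: the implication in $Y$ pulls back via $f^{-1}$ to the analogous implication in $\hat{X}$, which holds there by hypothesis. Chronological separability is obtained by pushing forward a countable dense set of $\hat{X}$ by $f$. Finally, future-completeness of $Y$ follows by pulling back any future-directed chain $(y_k) \subset Y$ to a chain $(f^{-1}(y_k)) \subset \hat{X}$, taking its future limit there, and pushing it forward via the future-continuity of $f$.

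For the universal property of $(Y, f \circ i_X)$: given a future-complete $Z \in Ob(\mathfrak{X})$ and a future-continuous $\varphi: X \rightarrow Z$, apply the universal property of $(\hat{X}, i_X)$ to obtain the unique future-continuous $\psi: \hat{X} \rightarrow Z$ with $\psi \circ i_X = \varphi$, and set $\tilde\psi := \psi \circ f^{-1}$. This is future-continuous as a composition and satisfies $\tilde\psi \circ (f \circ i_X) = \varphi$; uniqueness follows by post-composing any competitor $\tilde\psi'$ with $f$ and invoking uniqueness in $\hat{X}$. I do not anticipate a genuine obstacle here — the whole content is transport-of-structure along an isomorphism of chronological sets — but the point to handle systematically is the identity $f(I^-(A)) = I^-(f(A))$, where one must use that $f^{-1}$ is chronological (not merely that $f$ is a chronological bijection) to ensure past sets and IPs correspond in both directions.
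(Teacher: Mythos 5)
Your proposal is correct and follows essentially the same route as the paper: transport of structure along the chronological isomorphism (via the identity $f(I^{-}(\cdot))=I^{-}(f(\cdot))$), future-completeness by pulling back chains to $\hat{X}$ and pushing their future limits forward, and the universal property by composing the morphism from $\hat{X}$ with $f^{-1}$, with uniqueness obtained by precomposition with $f$. The only difference is that you spell out a few transports (future-continuity, the object axioms, separability—the last being already part of $Y$ being a chronological set) that the paper treats as immediate.
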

\begin{proof} Clearly, the fact that $f$ is chronological with chronological inverse means that $f$ preserves past and futures in the sense that
\[
f(I_{\hat{X}}^{\pm}(x)) = I_Y^{\pm}(f(x)),\quad \forall x \in \hat{X},
\]
which implies that $Y$ is also past-regular, past-determined, past-distinguishing, i.e., $Y \in Ob(\mathfrak{X})$. To show that it is future-complete, let $(y_n) \subset Y$ be any future-directed chain therein. Let $\hat{x}$ be the future limit of the future-directed chain $(f^{-1}(y_n)) \subset \hat{X}$. Then
\[
f^{-1}(y_n) \ll \hat{x} \Longrightarrow y_n \ll f(\hat{x}), \forall n \in \mathbb{N},
\]
and given $y \ll f(\hat{x})$, we have $f^{-1}(y) \ll \hat{x}$, and hence for large enough $n$ we have $f^{-1}(y) \ll f^{-1}(y_n)$, and thus
\[
y \ll y_n.
\]
We conclude that $f(\hat{x})$ is the future-limit of $(y_n)$. This proves that $Y$ is indeed future-complete as claimed, and also establishes (reasoning for $f^{-1}$ in an entirely analogous fashion), that $f,f^{-1} \in Mor(\mathfrak{X})$.

Finally, let $Z \in Ob(\mathfrak{X})$ be future-complete and let $g \in Hom_{\mathfrak{X}}(X,Z)$. Consider the unique morphism $k: \hat{X} \rightarrow Z$ such that $k \circ i_X = g$, and define $\tilde{k}: Y \rightarrow Z$ by
\[
\tilde{k}:= k \circ f^{-1}.
\]
Then clearly $\tilde{k} \in  Mor(\mathfrak{X})$ and $\tilde{k} \circ (f \circ i_X) \equiv g$. Moreover, if $j: Y \rightarrow Z$ is any other morphism such that $j \circ (f \circ i_X) = g$, then $(j \circ f) \circ i_X = g$, and hence $j \circ f \equiv k$, which in turn  implies that $j \equiv \tilde{k}$. This completes the proof.
\end{proof}

\medskip

We proceed to apply these notions and results to our context. First, note that if $(\M,g)$ is a {\em chronological spacetime with timelike boundary} (recall that the possibility $\partial \M=\emptyset$ is included here), i.e., there are no closed timelike curves therein, then $(\M,\ll _g)$ is a chronological set, where $\forall p,q \in M$,
\[
p \ll_g q\Longleftrightarrow \exists \mbox{ a future-directed timelike curve $\alpha: [0,1] \rightarrow \M$ with $\alpha(0) = p, \alpha(1) =q$.}
\]

In the original reference \cite{GerochIdealPointsSpaceTime1972} some basic facts for IPs were established in the context of spacetimes. We state these results here in the more general case of spacetimes with timelike boundary. (Adapting the proofs in \cite{GerochIdealPointsSpaceTime1972} for this case is straightforward and left as an easy exercise to the reader.)

\begin{proposition}
\label{PipST}
For a spacetime with timelike boundary
$(\M,g)$, the following holds.
\begin{enumerate}[label=(\arabic*)]
\item In a chronological spacetime with timelike boundary $(\M,g)$, a past set $P \subset \M$ is a PIP if and only if there exists $p \in \M$ such that $P =I^-(p)$. If $(\M,g)$ (with timelike boundary) is distinguishing, then $p$ is uniquely determined.
\item In a strongly causal spacetime with timelike boundary $(\M,g)$, a non-empty past subset $P \subset \M$ is a TIP if and only if there exists a future-inextendible timelike curve $\gamma:[0,a) \rightarrow \M$ such that
\[
P = I^-(\gamma).
\]
In the affirmative case we say that $\gamma$ {\em generates} $P$.
\item A non-empty set $P \subset \M$ is an IP if and only if there exists a {\em future-directed chain} in $P$, i.e., a sequence $(x_k) \subset P$ such that
\[
x_k \ll x_{k+1},\quad \forall k \in \mathbb{N}
\]
for which
\[
P = I^-(\{x_k \, : \, k \in \mathbb{N}\}).
\]
(In this case, we shall say that the future-directed chain $(x_k)$ {\em generates} $P$.) Equivalently, $P$ is an IP if and only if it is {\em directed}, i.e., if for any $x,y \in P$ there exists $z \in P$ such that $x,y \ll_g z$.

\item Let $P \subset X$ be a non-empty IP and $(x_k) \subset P$ a future-directed chain generating $P$. Then $x \in X$ is a future limit of $P$ if and only if it is a future limit of the set
\[
A= \{x_k \, : \, k \in \mathbb{N}\}.
\]
In the affirmative case,  $I^-(x) = I^-(A)$. Thus, $P$ is a TIP if an only if such a limit does not exist, in which case the future-directed chain generating it is said to be {\em future-endless}.
\item If $(\M,g)$ is strongly causal, $p \in \M$ is the future limit of the future-directed chain $c = (x_k)$ if and only if $x_k \rightarrow p$.

\end{enumerate}
\end{proposition}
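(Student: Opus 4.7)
The plan is to address the five items in order, since (2), (4) and (5) all rely on the chain description of IPs in (3). Each statement is the timelike-boundary analogue of a fundamental (T)IP dichotomy of Geroch--Kronheimer--Penrose, and my approach is to run the classical arguments verbatim, invoking only the openness and push-up of $\ll$ from Proposition \ref{p_opentrans}, the causally convex neighborhood bases granted at every point by Proposition \ref{imprisoned}, and, when a point lies on $\partial M$, the boundary-aware description of $I^{\pm}(p)\cap M$ in Remark \ref{nuevo}. The adaptation is essentially bookkeeping.

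For part (1), $I^-(p)$ is patently a past set having $p$ as a future limit; indecomposability is the only substantive claim. Given $I^-(p)=A\cup B$ with $A,B$ proper past subsets, pick witnesses $q\in A\setminus B$ and $q'\in B\setminus A$. The open set $I^+(q)\cap I^+(q')\cap I^-(p)$ has $p$ in its closure and is nonempty; any $r$ in it satisfies $q,q'\ll r\ll p$, so $r\in A$ or $r\in B$, and past-set closure forces both $q$ and $q'$ into a single piece, a contradiction. The converse unwinds the definition of future limit; uniqueness under distinguishing is precisely the past-distinguishing hypothesis.

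Part (3) is the technical core. For \emph{generating chain $\Rightarrow$ IP}: if $P=I^-(\{x_k\})=A\cup B$ with $A,B$ proper past sets, the index set $\{k:x_k\in A\}$ is downward-closed in $\mathbb{N}$ (since $A$ is a past set and $x_k\ll x_{k+1}$), so either every $x_k$ lies in $A$ (giving $P\subset A$) or all but finitely many lie in $B$ (giving $P\subset B$ by cofinality of the chain), contradicting properness. The equivalence \emph{IP $\Leftrightarrow$ directed} follows the GKP dichotomy: non-directedness supplies $x,y\in P$ with no common $\ll$-successor in $P$, whence $P=(P\setminus I^+(x))\cup(P\setminus I^+(y))$ is a decomposition into proper past subsets (properness uses that every element of a past set admits a $\ll$-successor within it). \emph{Directed $\Rightarrow$ generating chain}: using chronological separability, enumerate $S\cap P$ for a countable dense $S\subset\M$ and inductively pick $x_{k+1}\in P$ above both $x_k$ and $s_{k+1}$; density together with the past-set property then yield $P=I^-(\{x_k\})$.

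Parts (2), (4) and (5) follow. For (2), a TIP $P$ has a generating chain by (3), which one concatenates into a piecewise-smooth future-directed timelike curve $\gamma$ with $I^-(\gamma)=P$; inextendibility holds because any continuous extension of $\gamma$ to $p\in\M$ would, by strong causality together with (5), make $p$ a future limit of $(x_k)$, contradicting TIP. The reverse direction is analogous, passing to a dense subsequence of $\gamma$. Part (4) is a direct double-inclusion using $I^-(\{x_k\})=P$. Part (5) hides the main obstacle: the nontrivial direction must deduce the chronological statement $x_k\ll p$ from the merely topological one $x_k\to p$. The key is to shrink to a causally convex neighborhood $W\ni p$ (Proposition \ref{imprisoned}) into which the tail of $(x_k)$ eventually enters, noting that the timelike arcs realizing $x_k\ll x_{k+1}$ are then trapped in $W$; inside a convex-normal such $W$ one analyzes the position of $x_k$ relative to the (possibly boundary-perturbed) past null cone at $p$ via Remark \ref{nuevo}, and concludes via the chain property that $x_k$ cannot accumulate on that null cone and must therefore eventually lie in $I^-(p)$.
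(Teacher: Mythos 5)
The paper itself offers no written proof of this proposition: it states that adapting the Geroch--Kronheimer--Penrose arguments to the timelike-boundary setting is straightforward and leaves it to the reader. Your plan --- rerunning the classical arguments using Proposition \ref{p_opentrans}, the causally convex neighborhoods of Proposition \ref{imprisoned}, and Remark \ref{nuevo} near boundary points --- is exactly the intended route, and items (1), (2), (4), as well as the implications ``chain $\Rightarrow$ IP'' and ``directed $\Rightarrow$ chain'' in (3), are argued correctly; your sketch of (5) (trap the tail of the chain together with the timelike arcs joining consecutive terms inside a causally convex subset of a normal neighborhood of $p$, deduce the local causal relation $x_m \leq p$, then push up to get $x_k \ll p$ for all $k$) is the right mechanism, even if left schematic.

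There is, however, one step that fails as written, and it is the pivotal one. In (3), to prove ``IP $\Rightarrow$ directed'' you write $P=(P\setminus I^+(x))\cup(P\setminus I^+(y))$ and call the two pieces proper past subsets. Under Definition \ref{shebang}(ii) a past set must satisfy $A=I^-(A)$ (so in particular it is open, being a union of chronological pasts), and $P\setminus I^+(x)$ is in general neither: any $w\in P\cap\partial I^+(x)$ belongs to this piece, yet every chronological successor of $w$ lies in $I^+(x)$ (if $w\in\overline{I^+(x)}$ and $w\ll v$, then $I^-(v)$ is a neighborhood of $w$ meeting $I^+(x)$, so $x\ll v$); hence $w$ has no successor inside the piece and $P\setminus I^+(x)\neq I^-\bigl(P\setminus I^+(x)\bigr)$. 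Such boundary points occur already for $P=I^-(p)$ in Minkowski space, and equally for non-directed past sets, so your displayed union does not contradict indecomposability in the sense of Definition \ref{shebang}(iii). Since your logical cycle is ``chain $\Rightarrow$ IP $\Rightarrow$ directed $\Rightarrow$ chain'', this broken link is precisely the direction the rest of the paper relies on (every IP admits a generating chain, e.g.\ in Theorems \ref{CLT1} and \ref{thething}). The standard repair: set $A:=I^-\bigl(P\setminus I^+(x)\bigr)$ and $B:=I^-\bigl(P\setminus I^+(y)\bigr)$; these are genuine past sets contained in $P$; they cover $P$, because every $z\in P$ has a successor $w\in P$, and $w$ cannot lie in $I^+(x)\cap I^+(y)$ when $x,y$ admit no common successor in $P$; and each is proper, since any successor of $x$ in $P$ (which exists because $P$ is a past set) cannot belong to $A$, and symmetrically for $B$. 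With this correction your argument goes through and the remaining items follow as you describe.
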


Clearly, a strongly causal spacetime with timelike boundary $(\M,g)$ is always past-regular and past-distinguishing,
but it may or may not be past-determined. However, with suitable causality conditions, $(\M,g)$ will be past-determined. As already indicated, we shall be focussing largely on globally hyperbolic spacetimes in this paper, maybe with a non-empty timelike boundary. For this class of spacetimes, we have:
\begin{proposition}
\label{causalityforpastdetermination}
Suppose that a spacetime with timelike boundary $(\M,g)$ is either
\begin{enumerate}[label=(\alph*)]
\item \label{causpast-1} causally continuous with $\partial M=\emptyset$, or
\item \label{causpast-2}globally hyperbolic.
\end{enumerate}
Then, $(\M,\ll_g)$ is past-determined.
\end{proposition}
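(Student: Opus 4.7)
The plan is to handle the two hypotheses separately. Case (a) will follow almost immediately from the reflecting property of causally continuous spacetimes, while case (b) will require first establishing $x\le w$ via a closedness argument and then invoking part (b) of Proposition \ref{p_opentrans}.

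For case (a), since $\partial M=\emptyset$, I would appeal to the standard characterisation (see \cite{Minguzzicausalhierarchyspacetimes2008}) that a spacetime is causally continuous if and only if it is distinguishing and reflecting. Reflectivity includes the implication
\[
I^-(x)\subset I^-(w) \;\Longrightarrow\; I^+(w)\subset I^+(x).
\]
Combining this with $w\ll y$ (so $y\in I^+(w)$), one obtains $y\in I^+(x)$, that is, $x\ll y$.

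For case (b), I would use that, by Proposition \ref{p_ord_lowerlevels}, a globally hyperbolic spacetime with timelike boundary is causally simple, so $J^-(w)$ is closed in $\M$ for every $w$. Next, using $I^-(x)\neq\emptyset$, pick any $p\ll x$ together with a future-directed timelike curve $\sigma$ from $p$ to $x$; the points on $\sigma$ sufficiently close to $x$ lie in $I^-(x)$ and accumulate at $x$, so $x\in cl(I^-(x))$. Hence one can choose a sequence $(x_n)\subset I^-(x)$ with $x_n\to x$. Since $x_n\in I^-(x)\subset I^-(w)\subset J^-(w)$ and $J^-(w)$ is closed, passing to the limit yields $x\in J^-(w)$, i.e.\ $x\le w$. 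Finally, Proposition \ref{p_opentrans}(b) applied to $x\le w\ll y$ delivers $x\ll y$.

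The main obstacle in this plan is the closedness of $J^-(w)$ at boundary points in case (b). This is not automatic for spacetimes with timelike boundary and rests on the non-trivial result that global hyperbolicity already implies causal simplicity in this setting, itself underpinned by the appropriate form of the limit curve lemma with timelike boundary (cf.\ the discussion in Section \ref{s2} and \cite{AkeFloresSanchezTimelikeBoundary2018}). By contrast, case (a) is essentially routine once the reflecting characterisation of causal continuity is at hand.
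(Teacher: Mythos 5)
Your proposal is correct; case (b) is essentially the paper's own argument, while case (a) takes a shortcut the paper does not. For (b), the paper argues exactly as you do: $x\in cl(I^-(x))\subset cl(I^-(w))=J^-(w)$, using that global hyperbolicity implies causal simplicity for spacetimes with timelike boundary (Proposition \ref{p_ord_lowerlevels}), and then $x\leq_g w\ll_g y\Rightarrow x\ll_g y$ by Proposition \ref{p_opentrans}(b); your version, which only needs $I^-(w)\subset J^-(w)$ and closedness of $J^-(w)$ rather than the full equality $cl(I^-(w))=J^-(w)$, is the same argument with the inessential half of the equality dropped, and you correctly flag that the closedness of $J^-(w)$ at boundary points is the genuinely nontrivial input from \cite{AkeFloresSanchezTimelikeBoundary2018}. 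For (a), the paper works directly from its stated definition of causal continuity (continuity of the set-valued maps $I^\pm$): it argues by contradiction, placing $x\in\partial I^-(y)$, invoking \cite[Lemma 3.42]{Minguzzicausalhierarchyspacetimes2008} to get $y\in\partial I^+(x)$, and then using outer continuity of $I^+$ to produce $x'\in I^-(x)\cap U$ with $w\notin cl(I^+(x'))$, contradicting $x'\in I^-(w)$. You instead cite the standard equivalence ``causally continuous $\Leftrightarrow$ distinguishing and reflecting'' and apply the reflectivity implication $I^-(x)\subset I^-(w)\Rightarrow I^+(w)\subset I^+(x)$ directly, which is shorter but outsources to the cited characterization what the paper effectively re-derives in situ from its definition; both are valid, and your non-use of the hypothesis $I^-(x)\neq\emptyset$ in this case is harmless since every point of a spacetime (even with timelike boundary) has non-empty chronological past, the hypothesis being relevant only for abstract chronological sets.
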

\begin{proof} Let $x,y,w \in M$ such that
\[
w \ll_g y,\;\, I^-(x) \neq \emptyset \mbox{ and } I^-(x) \subset I^-(w).
\]
Suppose by contradiction that \textit{\ref{causpast-1}} holds, but $x \notin I^-(y)$. Clearly $x \in {\rm cl}(I^-(y))$, so $x \in \partial I^-(y)$. Causal continuity then implies \cite[Lemma 3.42]{Minguzzicausalhierarchyspacetimes2008} that $y \in  \partial I^+(x)$. Since $w \in I^-(y)$ this means that $w \notin {\rm cl}(I^+(x))$. Again by causal continuity there exists $U \ni x$ open such that
\[
x' \in U \Longrightarrow w \notin {\rm cl}(I^+(x')).
\]
But since $U$ is open we can choose $x'  \in I^-(x)\cap U$, and then $x' \in I^-(w)$, i.e., $w \in I^+(x')$, a contradiction.

Now assume that \textit{\ref{causpast-2}} holds. Note that $x \in {\rm cl}(I^-(x)) \subset {\rm cl}(I^-(w)) = J^-(w)$ (Proposition \ref{p_ord_lowerlevels}) in this case. Then
\[
x \leq_g w \ll_g y \Rightarrow x \ll _g y
\]
by Proposition \ref{p_opentrans}. This concludes the proof.
\end{proof}

For spacetimes without boundary, smooth (time-order preserving) conformal embeddings are obvious examples of future-continuous mappings.

Since we have guaranteed uniqueness in our cases of interest, {\em we may as well work, and shall do so throughout the rest of this paper, with the concrete realization of the {\em future c-completion} $(\hat{\M}, \ll, i_{\M})$ of $(\M,\ll_g)$ given in the proof of Theorem \ref{crucial}. Explicitly,
\begin{equation}
\label{concrete1}
\hat{\M}:= \{ P \, : \, P \mbox{ is a non-empty IP in $\M$}\};
\end{equation}
for any $P,P' \in \hat{\M}$, the {\em chronological relation} on $\hat{\M}$ is defined as
\begin{equation}
\label{concrete2}
 P \ll P' \Longleftrightarrow \exists\; p \in P' \setminus P \mbox{ such that } P \subset I^-(p),
\end{equation}
and the inclusion of $\M$ into $\hat{\M}$ is $i:= i_{\M}: p \in \M \mapsto I^-(p) \in \hat{\M}$. In this context, we define the {\em future $c$-boundary} of $\M$ as
\[
\hat{\partial} \M := \hat{\M}\setminus i(\M).
\]
}

The chronological future/past in $\hat{\M}$ [resp. $\M$] will be denoted by $\hat{I}^{\pm}$ [resp. $I^{\pm}$]. We write $\hat{I}^{\pm}(U)$ for a subset $U \subset M$, instead of the more precise but heavier notation $\hat{I}^{\pm}(i_{\M}(U))$. In particular, $\hat{I}^{\pm}(p) = \hat{I}^{\pm}(\{p\}) = \hat{I}^{\pm}(i_{\M}(\{p\}))$.

For globally hyperbolic spacetimes with timelike boundary we have
\begin{proposition}
\label{achronalitybasic}
If $(\M,g)$ is a globally hyperbolic spacetime with timelike boundary, and $P,Q \in \hat{\M}$ are such that $P \ll Q$, then $P$ is a PIP. In particular, its future $c$-boundary $\hat{\partial} \M$ is achronal, i.e., there are no two points $P,Q \in \hat{\partial} \M$ such that $P \ll Q$.
\end{proposition}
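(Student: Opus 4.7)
The plan is to first reduce the achronality statement to the PIP claim. If $P \in \hat{\partial}\M$ were to satisfy $P \ll Q$ for some $Q \in \hat{\partial}\M$, then by the claimed property $P$ would be a PIP, and Proposition \ref{PipST}(1) (applicable because global hyperbolicity implies chronology by Proposition \ref{p_ord_lowerlevels}) would give $P = I^-(q) = i_{\M}(q)$ for some $q \in \M$, contradicting $P \in \hat{\partial}\M = \hat{\M}\setminus i_{\M}(\M)$. So the entire proof reduces to showing that $P \ll Q$ forces $P$ to be a PIP.

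By the definition (\ref{concrete2}) of $\ll$ on $\hat{\M}$, there exists $p \in Q \setminus P$ with $P \subset I^-(p)$. Choose a future-directed chain $(x_k) \subset P$ generating $P$ in the sense of Proposition \ref{PipST}(3). Each $x_k$ lies in $I^-(p) \subset J^-(p)$, and also $x_k \in J^+(x_1)$ for $k \geq 1$, so the whole chain sits inside the compact causal diamond $J^+(x_1) \cap J^-(p)$ furnished by global hyperbolicity. Extract a subsequence $x_{k_j} \to q$ with $q \in \M$; I then claim $P = I^-(q)$, which identifies $P$ as a PIP by Proposition \ref{PipST}(1).

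For $I^-(q) \subset P$: given $y \ll q$, openness of $\ll$ (Proposition \ref{p_opentrans}(a)) places $x_{k_j}$ in the open set $I^+(y)$ for $j$ large, so $y \ll x_{k_j} \in P$, whence $y \in P$ because $P$ is a past set. For $P \subset I^-(q)$: given $x \in P$, use that $P$ is directed (Proposition \ref{PipST}(3)) to pick $x' \in P$ with $x \ll x'$, then choose $k$ with $x' \ll x_k$; for $k_j \geq k$ the chain property together with Proposition \ref{p_opentrans}(b) gives $x' \ll x_{k_j}$, whence $q \in {\rm cl}(I^+(x'))$. Global hyperbolicity implies causal simplicity (Proposition \ref{p_ord_lowerlevels}), so $J^+(x') = {\rm cl}(I^+(x'))$ is closed and $x' \leq q$; then $x \ll x' \leq q$ yields $x \ll q$ by Proposition \ref{p_opentrans}(b).

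The main subtlety is the final step: upgrading the mere topological convergence $x_{k_j} \to q$ obtained from compactness to the causal relation $x' \leq q$ required to pin down $P = I^-(q)$ exactly. This upgrade hinges on $J^+$ being closed, i.e.\ on causal simplicity of $(\M,g)$, which in turn is where \emph{global} hyperbolicity (rather than mere strong causality) enters essentially; strong causality together with compactness would only secure an accumulation point with no guarantee that $I^-(q)$ coincides with $P$ rather than strictly containing it.
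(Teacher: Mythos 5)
Your proof is correct, but it takes a genuinely different route from the paper's. The paper argues by contradiction: if $P$ were a TIP then, by Proposition \ref{PipST}(2), it would be generated by a future-inextendible timelike curve $\gamma$, and the inclusion $P \subset I^-(q)$ for some $q \in Q$ would trap $\gamma$ inside the compact diamond $J^+(\gamma(a)) \cap J^-(q)$, contradicting the non-imprisonment property of strongly causal spacetimes (Proposition \ref{imprisoned}). You instead argue directly: you confine a generating chain of $P$ to a compact diamond, extract a limit point $q \in \M$, and identify $P = I^-(q)$ via openness of $\ll$, push-up, and closedness of $J^+$ (causal simplicity), then invoke Proposition \ref{PipST}(1). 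What your route buys is an explicit exhibition of the vertex $q$ with $P = I^-(q)$, i.e., a constructed future limit of the chain, rather than a mere exclusion of the TIP case; what the paper's route buys is brevity and lighter causal machinery, since non-imprisonment is already a consequence of strong causality alone. On that point your closing remark slightly overstates matters: strong causality together with compactness of the diamonds \emph{is} the paper's definition of global hyperbolicity, and the paper's imprisonment argument shows these two ingredients suffice without ever passing through closedness of $J^+$; causal simplicity is a convenience of your particular identification of $P$ as $I^-(q)$, not the essential way global hyperbolicity enters. Your opening reduction of the achronality of $\hat{\partial}\M$ to the PIP claim is exactly the paper's implicit ``in particular'' step and is fine.
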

\begin{proof}
Suppose $P \ll Q$, and let $q \in Q$ such that $P \subset I^-(q)$. If $P$ were a TIP, we could choose a future-inextendible timelike curve $\gamma:[a,b) \rightarrow \M$ such that $P = I^-(\gamma)$ (conf. Proposition \ref{PipST}). However, $\gamma$ would then be contained in the compact set $J^+(\gamma(a)) \cap J^-(q)$, which is impossible by Proposition \ref{imprisoned}. So $P$ is a PIP.
\end{proof}

The final result of this section establishes the relationship between the future $c$-completions of a globally hyperbolic with timelike boundary and its interior spacetime $(M,g)$, namely that they are in essence one and the same.

\begin{theorem}
\label{CLT1}
Let $(\M,g)$ be a globally hyperbolic with (possibly empty) timelike boundary. Then, the map $F: P \in \hat{\M} \mapsto P \cap M \in \hat{M}$ is a well-defined chronological bijection with chronological inverse. In particular, by Proposition \ref{ghlater}, the future completions of $(\M,g)$ and $(M,g)$ are isomorphic. Moreover, the image by $F^{-1}$ of the future $c$-boundary of $M$ is the (disjoint) union of the future $c$-boundary of $\M$ together with the smooth timelike boundary $\partial M$ of $\M$.
\end{theorem}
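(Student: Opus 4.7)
The plan is to verify directly that $F\colon P \mapsto P \cap M$ is a well-defined chronological bijection, with the key reconstruction formula $P = I^-_{\M}(P \cap M)$ for every $P \in \hat{\M}$. The first step is well-definedness: given $P \in \hat{\M}$, I would check that $P \cap M$ is a non-empty IP in $M$. Non-emptiness is the only place where the timelike character of $\partial M$ is essential; for a PIP $P = I^-(p)$ with $p \in \partial M$, Remark \ref{nuevo} gives directly that $I^-(p)\cap M \neq \emptyset$, while for a TIP generated by $\gamma$ one picks any $\gamma(s_0) \in \M$ and uses the same remark on $I^-(\gamma(s_0))$. That $P \cap M$ is a past set of $M$ follows from Proposition \ref{p_opentrans}(d), and directedness in $M$ is inherited from directedness of $P$ in $\M$ by the same ``push into $M$'' trick: if $x,y \in P \cap M$ admit a common $\M$-successor $z \in P$ with $z \in \partial M$, openness of $\ll$ gives a neighborhood of $z$ inside $I^+(x)\cap I^+(y)$, which meets $M$, and any such $z' \in M$ lies in $P$ since $z' \ll z \in P$.

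Having this, I next establish the reconstruction formula $P = I^-_{\M}(P \cap M)$ by the same argument (one inclusion is immediate, the other uses directedness plus the push). Injectivity of $F$ then follows at once. For surjectivity, given any IP $Q$ in $M$, I would set $P := I^-_{\M}(Q)$ and verify it is a non-empty IP in $\M$ (directedness of $P$ reduces to directedness of $Q$ in $M$) and that $P \cap M = Q$: the inclusion $Q \subset P$ comes from directedness of $Q$, and the reverse inclusion uses $I^-(q) \cap M = I^-_M(q)$ for $q \in Q \subset M$ together with the fact that $Q$ is a past set of $M$.

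For the chronological equivalence, the reconstruction formula again carries the load. If $F(P) \ll F(P')$ with witness $p \in (P'\cap M)\setminus (P\cap M)$, then $p$ itself witnesses $P \ll P'$ in $\hat{\M}$: $p \notin P$ (else $p \in P \cap M$), and $P = I^-_{\M}(P \cap M) \subset I^-_{\M}(I^-_M(p)) \subset I^-_{\M}(p)$ by transitivity. Conversely, if $P \ll P'$ with witness $p \in P' \setminus P$, I push $p$ upward: since $P' = I^-_{\M}(P'\cap M)$, there is $p' \in P' \cap M$ with $p \ll p'$; this $p'$ cannot be in $P$ (else $p \in P$) and $P \subset I^-(p) \subset I^-(p')$, so $p'$ witnesses $F(P)\ll F(P')$.

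Finally, to identify $F^{-1}(\hat{\partial} M)$, I would use that $F^{-1}(i_M(M)) = \{I^-_{\M}(I^-_M(x)) : x \in M\} = \{I^-(x) : x \in M\} = i_{\M}(M)$, the last equality again via Proposition \ref{p_opentrans}(d) and openness of $\ll$. Therefore $F^{-1}(\hat{\partial} M) = \hat{\M}\setminus i_{\M}(M) = (\hat{\M}\setminus i_{\M}(\M)) \sqcup i_{\M}(\partial M) = \hat{\partial}\M \sqcup i_{\M}(\partial M)$, giving the desired description once $\partial M$ is identified with its image under $i_{\M}$ (which is injective since $\M$ is distinguishing). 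I expect the main technical obstacle to be the repeated ``pushing into $M$'' argument needed whenever a witness point sits on $\partial M$; this rests on openness of $\ll$, the timelike character of $\partial M$, and Remark \ref{nuevo}, and should be stated once and reused.
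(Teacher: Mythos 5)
Your argument is correct, and for the core of the theorem it follows essentially the paper's skeleton: your inverse $Q \mapsto I^-(Q,\M)$ is exactly the map $G$ constructed in the paper's proof, your reconstruction formula $P = I^-(P\cap M,\M)$ is precisely the content of the paper's verification that $G = F^{-1}$ (via Eq.~(\ref{mayuse})), and your chronology arguments push witness points into $M$ just as the paper does. The one genuinely different route is the final assertion: the paper identifies $F^{-1}(\hat{\partial}M)$ by taking a timelike curve generating a TIP of $(M,g)$ and splitting into cases according to whether that curve is future-inextendible in $\M$ or has a future endpoint on $\partial M$, whereas you deduce it purely set-theoretically from $F^{-1}(i_M(x)) = I^-\bigl(I^-(x,M),\M\bigr) = i_{\M}(x)$ for $x \in M$ and then pass to complements, using injectivity of $i_{\M}$ (distinction of $\M$). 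Your version is shorter and avoids any endpoint/limit analysis of curves; the paper's version is slightly longer but exhibits explicitly which boundary point $p \in \partial M$ a given TIP of $M$ corresponds to (the endpoint in $\M$ of any generating curve), which is geometrically informative. A cosmetic difference is that you certify IPs through directedness while the paper works with generating chains; both are covered by Proposition \ref{PipST}(3).

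One step needs a one-line repair. When $x,y \in P\cap M$ have a common successor $z \in P \cap \partial M$, a neighborhood of $z$ contained in $I^+(x)\cap I^+(y)$ does \emph{not} consist of points $z'$ with $z' \ll z$, so your justification that such a $z'$ lies in $P$ does not work as written. The fix: $P$, being a union of chronological pasts, is open in $\M$, so $I^+(x)\cap I^+(y)\cap P$ is an open set containing $z$ and hence meets the dense subset $M$; any $z'$ in this intersection lies in $P\cap M$ and dominates $x$ and $y$ (alternatively, pick $w \in P$ with $z \ll w$, which exists since $P$ is a past set, and intersect with $I^-(w)$ instead). The same remark applies at every other place where you invoke the pushing step, including the proof of the reconstruction formula itself; once stated in this corrected form it can indeed be reused throughout, as you intended.
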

\begin{proof}
The proof proceeds in three steps.

\smallskip

\noindent {\em Step 1}. First, we show that $F$ is indeed well-defined, i.e., that given an IP $P$ in $\M$, $P \cap M$ is an IP in $M$. In order to do that, we use Proposition \ref{PipST} $(3)$. Let $(x _k)$ be a future-directed chain in $\M$ generating $P$. Now, for each $k \in \mathbb{N}$ we can choose $y_k \in I^-(x_{k+1},\M) \cap I^+(x_k,\M)\cap M$ (this follows easily from Remark \ref{nuevo}). Thus,
\begin{equation}
\label{mayuse}
I^{\pm}(p,\M) \cap M = I^{\pm}(p,M)
\end{equation}
for any $p \in M$, so we can conclude that $(y_k)$ is a future-directed chain in $(M,g)$, and it clearly generates $P \cap M$. Proposition \ref{PipST} $(3)$ now applied to $(M,g)$ now yields the desired result.

\smallskip

\noindent {\em Step 2}. We show that $F$ is a chronological bijection with chronological inverse. Consider past sets $P,Q \in \hat{\M}$ such that $P \ll Q$. Let $q \in Q$ such that $P \subset I^-(q,\M)$. Since $Q$ is open, we can pick $q'\in Q\cap I^+(q,\M)$. The open set $I^-(q',\M)\cap Q \cap I^+(q,\M)$ is non-empty and intersects $M$, so we can choose $q'' \in Q \cap I^+(q,\M)\cap M$. In particular, $P \subset  I^-(q'',\M)$, so $P\cap M \subset  I^-(q'',\M)$. But then Eq. (\ref{mayuse}) implies that $P\cap M \subset  I^-(q'',M)$, so that $P\cap M \ll_M Q\cap M$ and hence $F$ is chronological.

We now argue that $G: P \in \hat{M} \mapsto I^-(P,\M) \in \hat{\M}$ is well-defined, chronological, and that $G \equiv F^{-1}$. Now, given an IP $P$ in $M$, it is clear that any future-directed chain in $M$ generating $P$ will automatically be a future-directed chain in $\M$ generating $I^-(P,\M)$, so that the latter is an IP in $\M$. This shows that $G$ is indeed well-defined. To show it is chronological, let $P \ll_{M} Q$ be two IPs in $M$, and pick $q \in Q$ such that $P \subset I^-(q,M)$. This immediately implies that $I^-(P,\M) \subset I^-(q,\M)$, so that $G(P) \ll_{M} G(Q)$ as asserted. Using (\ref{mayuse}) it is straightforward to show that $G=F^{-1}$.

\smallskip

\noindent {\em Step 3}. A point on the future $c$-boundary of $M$ is a TIP $P$ in $(M, g)$. Let $\gamma:[a,b) \rightarrow M$ be future-inextendible timelike curve generating $P$ according to Proposition \ref{PipST} $(2)$. If $\gamma$ is future-inextendible in $(\M,g)$, then, as it also generates $F^{-1}(P) \equiv I^-(P,\M)$, the latter is also a TIP in $\M$, in which case it belongs to the future $c$-boundary of $\M$. If not, then $\gamma$ must have a future endpoint $p \in \partial M$. But in this case $F^{-1}(P) = I^-(P,\M) = I^-(p,\M) \equiv i_{\M}(p)$, so that $F^{-1}(P)$ can be identified with the point $p$ on $\partial M$. Conversely, given any TIP $P$ on $(\M,g)$, $F(P)$ is also a TIP in $(M,g)$. In addition, given a point $q \in \partial M$, $Q:= I^-(q,\M) \cap M = F(i_{\M}(q))$ is obviously a TIP in $(M, g)$, and hence an element of the future $c$-completion of the latter spacetime.
\end{proof}

\section{CLT I: Hausdorff closed limits}\label{top1}

Throughout this section, $X$ denotes a fixed {\em topological} space\footnote{We apologize to the reader for the slightly confusing use of $X$ for both a topological space (with certain extra properties) and for a generic chronological set. But we will soon mix these two structures together anyway, so we believe no great harm will be done!}. Our idea is to review the general facts about Hausdorff closed limits, and in particular the situation in which we can define the closed limit topology. We were unable to find this material described in the systematic way we need in the literature, so we have decided to give a somewhat detailed account here.

\begin{definition}
\label{upperlower}
Let $(A_n)$ be a sequence of subsets of $X$. The {\em Hausdorff upper and lower limits} of $(A_n)$ are defined, respectively, by

\begin{equation}\label{eq:infsup}
  \begin{array}{c}
  \limsup(A_n) := \{x \in X \, : \, \mbox{ each neighborhood of $x$ intersects infinitely many $A_n$'s}\},\\
\liminf(A_n) := \{x \in X \, : \, \mbox{ each neighborhood of $x$ intersects all but finitely many $A_n$'s}\}.
  \end{array}
\end{equation}

In particular, $\liminf(A_n) \subset \limsup(A_n)$. If $\limsup(A_n) = \liminf(A_n)$, then this common set is called the {\em Hausdorff closed limit} (of $(A_n)$).
\end{definition}

\begin{remark}
\label{subseq}
{\em Note that if $(A_n)$ is a sequence of subsets of $X$ and $(A_{n_k})$ is any subsequence, then
\[
\liminf(A_n) \subset \liminf(A_{n_k})\; \mbox{ and }\; \limsup(A_{n_k}) \subset \limsup(A_n).
\]
(Simple examples show that both inclusions are strict in general.) In particular, if the Hausdorff closed limit $A$ of $(A_n)$ exists, then the Hausdorff closed limit of $(A_{n_k})$ also exists and equals $A$. Moreover, we also have
\[
\limsup(A_n) = \limsup({\rm cl}(A_n)) \mbox{ and } \liminf(A_n) = \liminf({\rm cl}(A_n)).
\]}
\end{remark}

The following result has a straightforward proof left to the reader.
\begin{proposition}
\label{hausdlimprop}
Let $(A_n)$ be a sequence of subsets of $X$. Then:
\begin{enumerate}[label=(\roman*)]
\item The Hausdorff upper and lower limits $\limsup(A_n)$ and $\liminf(A_n)$ are closed sets. In particular, if the Hausdorff closed limit of $(A_n)$ exists, then it is a closed set.
\item Suppose $X$ is first-countable and all $A_n$ are non-empty. The Hausdorff closed limit of $(A_n)$ exists if and only if the Hausdorff upper and lower limits both coincide with the set
\[
A_{\infty} := \{ x \in X \, : \, \exists (x_n) \subset X \mbox{ with } x_n \in A_n \, ,  \forall n \in \mathbb{N} \mbox{ such that $x_n \rightarrow x$}\},
\]
so in particular $A_{\infty}$ coincides with the Hausdorff closed limit.
\end{enumerate}
\end{proposition}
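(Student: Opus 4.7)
For part (i), the plan is to show that each of the complements $X \setminus \limsup(A_n)$ and $X \setminus \liminf(A_n)$ is open in $X$. In both cases the point is that the defining condition --- having an open neighborhood that meets only finitely many $A_n$'s (respectively, an open neighborhood that fails to meet $A_n$ for infinitely many values of $n$) --- is inherited by every other point of that very neighborhood, since an open neighborhood of $x$ is an open neighborhood of each of its points. The closedness of the Hausdorff closed limit (when it exists) is then immediate from its definition as the common value of $\liminf(A_n)$ and $\limsup(A_n)$.

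For part (ii), the backward implication is trivial: if both $\liminf(A_n)$ and $\limsup(A_n)$ coincide with $A_\infty$, then in particular $\liminf(A_n) = \limsup(A_n)$, so the closed limit exists by definition. The forward direction splits into two inclusions. The inclusion $A_\infty \subset \liminf(A_n)$ holds without any extra hypothesis and follows directly from the definitions: given $x \in A_\infty$ with a witnessing sequence $x_n \in A_n$ satisfying $x_n \to x$, every neighborhood $U$ of $x$ contains $x_n$ for all $n$ sufficiently large, and hence meets all but finitely many of the $A_n$.

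The substantive part is the reverse inclusion $\limsup(A_n) \subset A_\infty$, and this is where both hypotheses (first-countability and non-emptiness of all $A_n$) become essential. Assuming the closed limit exists and equals $A$, one has $x \in \limsup(A_n) = \liminf(A_n)$; I would fix a nested countable base $U_1 \supset U_2 \supset \cdots$ of neighborhoods at $x$ using first-countability, and apply the $\liminf$ condition to obtain indices $N_1 \leq N_2 \leq \cdots$ with $U_k \cap A_n \neq \emptyset$ for every $n \geq N_k$ (passing to running maxima if necessary to ensure monotonicity). I would then define $x_n \in A_n$ by selecting any element of $A_n$ when $n < N_1$ (using non-emptiness) and any element of $U_k \cap A_n$ when $N_k \leq n < N_{k+1}$. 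The convergence $x_n \to x$ is then verified directly from the nested base property, placing $x$ in $A_\infty$.

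The main obstacle --- and the precise reason both hypotheses are needed --- is this last construction. Membership in $\limsup(A_n)$ alone only yields a subsequence $x_{n_k} \in A_{n_k}$ clustering at $x$, whereas $A_\infty$ by definition demands a full sequence indexed by every $n$. Upgrading the subsequence to a full sequence requires the stronger $\liminf$ condition (which is why we impose that the closed limit exists), first-countability to reduce the topological control at $x$ to countably many nested neighborhoods, and non-emptiness of every $A_n$ to furnish a valid choice of $x_n$ at the initial indices and at the indices between consecutive $N_k$'s where the $\liminf$ condition does not yet activate a control from $U_k$.
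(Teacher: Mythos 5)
Your proof is correct and is essentially the ``straightforward proof left to the reader'' that the paper omits: openness of the complements for (i), and for (ii) the chain $A_{\infty} \subset \liminf(A_n) \subset \limsup(A_n) \subset A_{\infty}$, where the last inclusion is obtained by upgrading the $\liminf$ condition to a full selection $x_n \in A_n$ via a nested countable neighborhood base, with non-emptiness covering the initial indices. One small touch-up: take the thresholds $N_k$ \emph{strictly} increasing rather than merely nondecreasing via running maxima (any larger threshold still works), since otherwise the blocks $[N_k, N_{k+1})$ need not cover all indices $n \geq N_1$ --- e.g.\ if every $A_n = X$ the running maxima are constant and every block is empty, leaving $x_n$ undefined for large $n$; with strict monotonicity the construction and the convergence argument go through exactly as you describe.
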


Let us denote by $\mathbb{P}(X)$ the power set of $X$. We say that $\mathcal{F} \subset \mathbb{P}(X)$ is {\em $H$-closed} if for each sequence $(S_n) \subset \mathcal{F}$ with Hausdorff closed limit $S_{\infty}$, we have $S_{\infty} \in \mathcal{F}$.
The following result is a direct consequence of \cite[Section II-A]{FloresHausdorffseparabilityboundaries2016}:
\begin{proposition}
\label{finetop} The set
\[
\tau_H := \{ \mathcal{A} \subset  \mathbb{P}(X) \, : \, \mathbb{P}(X) \setminus \mathcal{A} \mbox{ is $H$-closed} \}
\]
is the finest topology in $\mathbb{P}(X)$ with the following property: \\
($\ast$) Given a sequence $(S_n)$ with Hausdorff closed limit $S_{\infty}$, $(S_n)$ converges to $S_{\infty}$.
\end{proposition}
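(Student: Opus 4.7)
The plan is to verify Proposition \ref{finetop} in three stages: first that $\tau_H$ is a topology, then that it satisfies ($\ast$), and finally that it is finer than any topology satisfying ($\ast$).

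To see that $\tau_H$ is a topology, I would unwind the definition: $\mathcal{A} \in \tau_H$ iff $\mathcal{F} := \mathbb{P}(X)\setminus \mathcal{A}$ is $H$-closed, i.e., stable under Hausdorff closed limits of sequences. The sets $\emptyset$ and $\mathbb{P}(X)$ are trivially in $\tau_H$, and arbitrary unions of elements of $\tau_H$ are in $\tau_H$ because an arbitrary intersection of $H$-closed sets is clearly $H$-closed (any sequence in the intersection lies in each factor, so any Hausdorff closed limit of such a sequence is in each factor). The genuine content is that $\tau_H$ is closed under finite intersections, or dually, that the union $\mathcal{F}_1 \cup \mathcal{F}_2$ of two $H$-closed sets is $H$-closed. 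For this, given a sequence $(S_n) \subset \mathcal{F}_1 \cup \mathcal{F}_2$ with Hausdorff closed limit $S_\infty$, at least one of $\mathcal{F}_1, \mathcal{F}_2$ must contain infinitely many $S_n$'s; pick a subsequence lying entirely in that $\mathcal{F}_i$, and invoke Remark \ref{subseq} to conclude that this subsequence still has Hausdorff closed limit $S_\infty$, which therefore lies in $\mathcal{F}_i \subset \mathcal{F}_1 \cup \mathcal{F}_2$ by $H$-closedness.

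Next, to verify ($\ast$) for $\tau_H$, I would argue by contradiction: let $(S_n)$ have Hausdorff closed limit $S_\infty$, and suppose $(S_n)$ does not converge to $S_\infty$ in $\tau_H$. Then there exists $\mathcal{A} \in \tau_H$ with $S_\infty \in \mathcal{A}$ such that some subsequence $(S_{n_k})$ lies entirely in $\mathbb{P}(X)\setminus \mathcal{A}$. By Remark \ref{subseq}, this subsequence also has Hausdorff closed limit equal to $S_\infty$. But $\mathbb{P}(X)\setminus \mathcal{A}$ is $H$-closed by the definition of $\tau_H$, so $S_\infty \in \mathbb{P}(X)\setminus\mathcal{A}$, contradicting $S_\infty \in \mathcal{A}$.

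Finally, for maximality, let $\tau'$ be any topology on $\mathbb{P}(X)$ satisfying ($\ast$); I would show $\tau' \subset \tau_H$. Take $\mathcal{A} \in \tau'$, and let $(S_n) \subset \mathbb{P}(X)\setminus\mathcal{A}$ be any sequence whose Hausdorff closed limit $S_\infty$ exists. Property ($\ast$) applied in $\tau'$ gives $S_n \to S_\infty$ in $\tau'$. If $S_\infty$ were in $\mathcal{A}$, then, since $\mathcal{A}$ is $\tau'$-open, the tail of $(S_n)$ would lie in $\mathcal{A}$, contradicting $S_n \in \mathbb{P}(X)\setminus \mathcal{A}$ for all $n$. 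Hence $S_\infty \in \mathbb{P}(X)\setminus\mathcal{A}$, so the complement is $H$-closed and $\mathcal{A} \in \tau_H$.

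The only step requiring a touch of care is the finite-intersection axiom, where one must pass to a subsequence and use the stability of the Hausdorff closed limit under passing to subsequences (Remark \ref{subseq}); the rest is direct bookkeeping with the definitions.
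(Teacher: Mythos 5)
Your proof is correct. Note that the paper itself does not prove this proposition: it is stated as a direct consequence of the general theory of sequential topologies induced by a limit operator in \cite[Section II-A]{FloresHausdorffseparabilityboundaries2016}, and your argument is precisely the standard one that reference relies on. All three steps are sound, and you correctly identify the only ingredient with real content: the stability of the Hausdorff closed limit under passage to subsequences (Remark \ref{subseq}), which is what makes both the finite-intersection axiom and the verification of ($\ast$) go through (this is the analogue of the ``limit of a sequence is a limit of every subsequence'' axiom required of a limit operator in the cited reference). The maximality step is the usual bookkeeping and is handled correctly.
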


\smallskip

From now on, we will refer to $\tau_H$ as {\em fine $H$-topology}.

We shall assume, for the rest of this section, that $X$ is metrizable, separable, and locally compact. We fix a topological metric $d$ on $X$. Given these topological assumptions on $X$, we can in addition assume, without loss of generality, that $d$ has the {\em Heine-Borel property}, i.e., that any closed $d$-bounded subset of $X$ is compact, and we do so in what follows. Using $d$, we wish to topologize the set
\[
\mathcal{C}_X:= \{ S \subseteq X \, : \, S \mbox{ is closed and non-empty}\}.
\]
Note that since the Hausdorff closed limit of any sequence of non-empty subsets of $X$ is closed and non-empty, $\mathcal{C}_X$ is a closed subset of $\mathbb{P}(X)$ in the fine $H$-topology. Denote by $C_c(X)$ the set of continuous real-valued functions on $X$ endowed with the topology of uniform convergence in compact subsets (which coincides with the compact-open topology). Since $C_c(X)$ is a well-known Frechet space, in particular it is metrizable. Consider then the map
\begin{equation}
\label{map}
\Phi: S \in \mathcal{C}_X \mapsto \Phi_S \in C_c(X)
\end{equation}
given by
\[
\Phi_S(x):= d(x,S),\; \forall x \in X,\; \forall S \in \mathcal{C}_X.
\]
The main properties of $\Phi$ are summarized in the following theorem. In its proof we shall repeatedly use the following simple technical result.
\begin{lemma}
\label{point}
Given $S \in \mathcal{C}_X$ and $x_0 \in X$, there exists $y_0 \in S$ such that
\[
d(x_0,S) = d(x_0,y_0).
\]
\end{lemma}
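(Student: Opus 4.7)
The plan is to exploit the Heine–Borel property of $d$ to convert the infimum defining $d(x_0,S)$ into an attained minimum via a compactness argument on a minimizing sequence.

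First I would set $r := d(x_0,S) = \inf\{d(x_0,y) : y \in S\}$, which is a well-defined non-negative real number because $S$ is non-empty. By the definition of infimum, pick a minimizing sequence $(y_n) \subset S$ with $d(x_0,y_n) \to r$ as $n \to \infty$. In particular $(y_n)$ is $d$-bounded: for $n$ large enough $d(x_0,y_n) \le r+1$, and the finitely many remaining terms are automatically bounded.

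Next I would confine the tail of $(y_n)$ to the set
\[
K := S \cap \{y \in X : d(x_0, y) \le r+1\}.
\]
The closed ball $\{y \in X : d(x_0,y) \le r+1\}$ is $d$-bounded and closed, hence compact by the Heine–Borel property of $d$. Since $S$ is closed, $K$ is a closed subset of this compact set and is therefore itself compact. Extracting a convergent subsequence $y_{n_k} \to y_0 \in K \subset S$, the continuity of $y \mapsto d(x_0,y)$ gives $d(x_0,y_0) = \lim_k d(x_0, y_{n_k}) = r = d(x_0,S)$, as required.

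I do not foresee a real obstacle here: the only subtle point is ensuring that a closed ball is compact, which is exactly why the authors chose $d$ to have the Heine–Borel property (recall this was arranged without loss of generality using metrizability, separability, and local compactness of $X$). The result will be used later to recover a witness $y_0 \in S$ realizing the distance to $x_0$, which is the key ingredient for studying the map $\Phi_S(x) = d(x,S)$ in the subsequent analysis.
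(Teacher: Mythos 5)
Your proof is correct and follows essentially the same route as the paper's: a minimizing sequence in $S$, confinement to a closed ball that is compact by the Heine--Borel property of $d$, extraction of a convergent subsequence with limit in $S$ (using that $S$ is closed), and continuity of $d$ to conclude. No gaps.
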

\begin{proof}
For each $n \in \mathbb{N}$, we can pick $y_n \in S$ such that
\begin{equation}
\label{lim}
d(x_0,S) \leq d(x_0, y_n) < d(x_0,S) + \frac{1}{n}.
\end{equation}
Thus, $d(x_0, y_n) \leq d(x_0,S) + 1$, i.e., the sequence $(y_n)$ is contained in the closed ball $\overline{B}_{d(x_0,S) + 1}(x_0)$ in $X$, which is compact by the Heine-Borel property of $d$. Therefore, up to passing to a subsequence, we may assume that $y_n \rightarrow y_0$ for some $y_0 \in S$ (since $S$ is closed). Continuity of $d$ gives that $d(x_0, y_n) \rightarrow d(x_0,y_0)$, and thus, taking the limit $n \rightarrow +\infty$ in (\ref{lim}) yields the result.
\end{proof}

\medskip

\begin{theorem}
\label{mapprop}
For the mapping $\Phi$, the following properties hold.
\begin{enumerate}[label=(\roman*)]
\item \label{mapprop-1} $\Phi$ is one-to-one. In particular, there exists a unique (metrizable) topology $\tau_c$ on $\mathcal{C}_X$ for which $\Phi$ is a homeomorphism onto its image (with the induced topology).
\item \label{mapprop-2} For each $S \in \mathcal{C}_X$, $\Phi_S$ is Lipschitz. Indeed,
\[
|\Phi_S(x) - \Phi_S(y)| \leq d(x,y),\quad\forall x,y \in X.
\]
In particular, the image $\Phi(\mathcal{C}_X)$ is an equicontinuous subset of $C_c(X)$.
\item \label{mapprop-3} If a sequence $(S_n) $ in $\mathcal{C}_X$ converges in the topology $\tau_c$ to $S \in \mathcal{C}_X$, then $S$ is the Hausdorff closed limit of $(S_n)$.
\item \label{mapprop-4}  $\Phi(\mathcal{C}_X)$ is closed in $C_c(X)$.
\item \label{mapprop-5} If a sequence $(S_n) \subset \mathcal{C}_X$ has a Hausdorff closed limit $S_{\infty} \in \mathcal{C}_X$, then $S_n \rightarrow S_{\infty}$ in $\tau_c$.
\end{enumerate}
\end{theorem}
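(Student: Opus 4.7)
For (i), injectivity of $\Phi$ is immediate: for $S \in \mathcal{C}_X$ one has $\Phi_S^{-1}(0) = S$ because $S$ is closed, so $\Phi_S = \Phi_T$ forces $S = T$. The topology $\tau_c$ is then the pullback of the subspace topology on $\Phi(\mathcal{C}_X) \subset C_c(X)$ under $\Phi$, and it is metrizable because $C_c(X)$ is (as a Frechet space). Part (ii) is the standard computation: for $z \in S$, $d(x,z) \leq d(x,y) + d(y,z)$; taking the infimum over $z$ gives $\Phi_S(x) \leq d(x,y) + \Phi_S(y)$, and swapping $x$ and $y$ yields the Lipschitz estimate, from which equicontinuity of $\Phi(\mathcal{C}_X)$ is obvious.

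For (iii), assume $\Phi_{S_n} \to \Phi_S$ uniformly on compact sets. To see $S \subseteq \liminf(S_n)$, any $x \in S$ satisfies $\Phi_S(x)=0$, so $\Phi_{S_n}(x) \to 0$, and hence every $B_\varepsilon(x)$ meets $S_n$ for all large $n$. Conversely, if $x \in \limsup(S_n)$, one can extract $x_{n_k} \in S_{n_k}$ with $x_{n_k} \to x$ (using Proposition \ref{hausdlimprop}(ii)); applied on any compact neighborhood of $x$ that eventually contains the $x_{n_k}$, the triangle inequality $\Phi_S(x_{n_k}) \leq |\Phi_S(x_{n_k}) - \Phi_{S_{n_k}}(x_{n_k})| + \Phi_{S_{n_k}}(x_{n_k})$ combined with uniform convergence and continuity of $\Phi_S$ forces $\Phi_S(x)=0$, so $x \in S$.

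Part (iv) is, I expect, the main obstacle, since one must pull a closed set out of a function. Let $f_n := \Phi_{S_n} \to f$ in $C_c(X)$; the Lipschitz-$1$ estimate of (ii) passes to the limit, so $f$ is continuous. Set $S := f^{-1}(0)$, which is closed. To show $S \neq \emptyset$, fix $x_0 \in X$ and invoke Lemma \ref{point} to obtain $y_n \in S_n$ with $d(x_0, y_n) = \Phi_{S_n}(x_0)$. Since $\Phi_{S_n}(x_0) \to f(x_0)$, the sequence $(y_n)$ is $d$-bounded, so by the Heine--Borel property it has a subsequence (still denoted $y_n$) converging to some $y_0 \in X$. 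Then
\[
|f(y_0)| \leq |f(y_0) - f(y_n)| + |f(y_n) - \Phi_{S_n}(y_n)| + \Phi_{S_n}(y_n),
\]
where the first term vanishes by continuity of $f$, the second by uniform convergence on a compact neighborhood of $y_0$, and the third is identically zero; so $y_0 \in S$. The same construction at an arbitrary $x$ produces $y \in S$ with $d(x,y) = f(x)$, hence $d(x,S) \leq f(x)$. Conversely, for any $y \in S$, the Lipschitz-$1$ estimate gives $f(x) \leq f(y) + d(x,y) = d(x,y)$, whence $f(x) \leq d(x,S)$. Thus $f = \Phi_S$, establishing (iv).

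For (v), I would use Arzel\`a--Ascoli. The family $\{\Phi_{S_n}\}$ is equicontinuous by (ii); it is also pointwise bounded, since for $y_\infty \in S_\infty$ (non-empty by assumption) Proposition \ref{hausdlimprop}(ii) supplies $y_n \in S_n$ with $y_n \to y_\infty$, so $\Phi_{S_n}(x) \leq d(x, y_n)$ is eventually bounded. Given any subsequence of $(\Phi_{S_n})$, Arzel\`a--Ascoli (applicable because $X$ is locally compact and separable) yields a further subsequence converging in $C_c(X)$ to some $g$; by (iv), $g = \Phi_T$ for some $T \in \mathcal{C}_X$; by (iii), $T$ equals the Hausdorff closed limit of the corresponding subsequence of $(S_n)$; and by Remark \ref{subseq}, this common limit must be $S_\infty$. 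Since every subsequence has a further subsequence converging to $\Phi_{S_\infty}$, the full sequence converges, i.e., $S_n \to S_\infty$ in $\tau_c$.
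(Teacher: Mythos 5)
Your proof is correct and follows essentially the same route as the paper's: the same injectivity argument, the nearest-point Lemma \ref{point} plus the Heine--Borel property in (iii)--(iv) (with $S:=f^{-1}(0)$ as the candidate limit set), and Arzel\`a--Ascoli combined with (ii), (iii), (iv) and Remark \ref{subseq} in (v), the subsequence principle replacing the paper's argument by contradiction. One cosmetic quibble: in (iii) the extraction of $x_{n_k}\in S_{n_k}$ with $x_{n_k}\rightarrow x$ follows directly from the definition of $\limsup$ using the balls $B_{1/k}(x)$, not from Proposition \ref{hausdlimprop}(ii), which presupposes that the Hausdorff closed limit exists.
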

\begin{proof}
\textit{\ref{mapprop-1}} Given $S,S' \in \mathcal{C}_X$ such that $\Phi_{S} = \Phi_{S'}$, we have
\[
x \in S \Longrightarrow \Phi_S(x) = d(x,S) = 0 \Longrightarrow d(x,S') = \Phi_{S'}(x) = 0 \Longrightarrow x \in S',
\]
that is, $S \subset S'$. Analogously, $S' \subset S$, and hence $S=S'$.

\smallskip

\textit{\ref{mapprop-2}} This is a standard result in metric spaces.

\smallskip

\textit{\ref{mapprop-3}} Consider a sequence $(S_n) \subset \mathcal{C}_X$ such that $S_n\rightarrow S$ in $\tau_c$, with $S \in \mathcal{C}_X$. Since $\Phi$ is continuous by the definition of $\tau_c$, $\Phi_{S_n} \rightarrow \Phi_S$ in $C_c(X)$. Let $x \in \limsup(S_n)$. Then there exists a subsequence $(S_{n_k})$ of $(S_n)$ and a sequence $(x_k)$ in $X$ such that
 \[
 x_k \in B_{1/k}(x)\cap S_{n_k}, \forall k \in \mathbb{N}.
 \]
 In particular, $x_k \rightarrow x$, and $(x_k)$ is thus bounded, and from the Heine-Borel property of $d$, may be viewed as contained inside a compact set $K$. Uniform convergence in $K$ now yields
\begin{equation}
\label{standard}
0 \equiv d(x_k,S_{n_k}) = \Phi_{S_{n_k}}(x_k) \rightarrow \Phi_S(x),
\end{equation}
whence we deduce that $d(x,S) \equiv \Phi_S(x) =0$, and then $x \in S$. We conclude that $\limsup (S_n) \subset S$.

Now, let $y \in S$. Then $\Phi_S(y) =0$, and hence $d(y,S_n) \rightarrow 0$. Using Lemma \ref{point}, pick $y_n \in S_n$ for each $n \in \mathbb{N}$ with $d(y,y_n) \equiv d(y,S_n)$. For any $r >0$, $d(y,y_n) <r$ for large enough $n$, i.e. $y_n\in S_n \cap B_r(y) \neq \emptyset$. We thus conclude that $y \in \liminf (S_n)$, and then that $S \subset \liminf (S_n)$. \\
\textit{\ref{mapprop-4}} Since  $C_c(X)$ is metrizable, to show that $\Phi(\mathcal{C}_X)$ is closed therein we only need to show that it is sequentially closed. Pick again a sequence $(S_n) \subset \mathcal{C}_X$ such that $\Phi_{S_n} \rightarrow \Theta$ in $C_c(X)$. Let
\[
S:= \Theta ^{-1}(0).
\]
We claim that $S \in \mathcal{C}_X$. Since $S$ is obviously closed, we only need to show that it is non-empty. Pick any $w_0 \in X$. We have $\Phi_{S_n}(w_0) \rightarrow \Theta(w_0)$ in $\mathbb{R}$ and hence there exists a number $a_{0} >0$ for which $|\Phi _{S_n}(w_0)| < a_{0}$ for all $ n \in \mathbb{N}$. Therefore, by Lemma \ref{point} we can choose a sequence $(w_n) \subset X$ such that $w_n \in S_n$ and $d(w_0,w_n) < a_{0}$ for all $ n \in \mathbb{N}$. In particular, this sequence is contained in the closed ball $\overline{B}_{a_{0}}(w_0)$, which is compact by the Heine-Borel property of $d$. Then, some subsequence $w_m$ converges in $X$ to some $z_0 \in \overline{B}_{a_{0}}(w_0)$, say. The uniform convergence of $\Phi_{S_n}$ in $\overline{B}_{a_{0}}(w_0)$ then implies that
\begin{equation}
\label{standard2}
0 \equiv d(w_m,S_m) = \Phi_{S_m}(w_m) \rightarrow \Theta(z_0),
\end{equation}
whence we conclude that $z_0 \in S$. Thus $S$ is non-empty.

We now wish to show that $\Theta \equiv \Phi_S$. By definition,
\[
\Theta(x) = 0 \Leftrightarrow x \in S \Leftrightarrow \Phi_S(x) =0,
\]
whence $\Phi_S$ and $\Theta$ coincide in $S$ (being identically zero there). Consider an arbitrary point $x_0 \in X$. First, use Lemma \ref{point} to choose $y_0 \in S$ such that $d(x_0,S) = d(x_0,y_0)$. In particular, $\Theta(y_0) \equiv 0$, and hence $\Phi_{S_n}(y_0) \rightarrow 0$. Using Lemma \ref{point} again, we choose $y_n \in S_n$ for each $n \in \mathbb{N}$ with $d(y_0,y_n) = d(y_0,S_n)$. We conclude that $y_n \rightarrow y_0$. Continuity of $d$ then gives
\[
d(x_0,y_n) \rightarrow d(x_0,S).
\]
Since $\Phi_{S_n}(x_0) \leq d(y_n, x_0)$ for all $n \in \mathbb{N}$, we conclude that $\Theta(x_0) \leq d(x_0,S) = \Phi_S(x_0)$. We can also pick a sequence $(z_n) \subset X$ such that $z_n \in S_n$ and
\begin{equation}
\label{specious}
d(x_0,z_n) < d(x_0,S_n) + \frac{1}{n} = \Phi_{S_n}(x_0) + \frac{1}{n} ,
\end{equation}
and thus, up to passing to a subsequence we have $z_n \rightarrow z_0$ for some $z_0 \in S$, by an argument entirely analogous to that around (\ref{standard2}). Taking $n \rightarrow +\infty$ in (\ref{specious}) we get
\[
\Phi _S(x_0) = d(x_0,S) \leq d(x_0,z_0) \leq \Theta(x_0).
\]
We conclude that $\Phi_S(x_0) = \Theta(x_0)$, and hence that $\Theta = \Phi_S \in \Phi (\mathcal{C}_X)$ as claimed.

\smallskip

\textit{\ref{mapprop-5}} Let $(S_n) \subset \mathcal{C}_X$ be a sequence with a Hausdorff closed limit $S_{\infty} \in \mathcal{C}_X$. Suppose, by way of contradiction, that $(S_n)$ does not converge in $\tau_c$ to $S_{\infty}$. Then there exist a $\tau_c$-open set $\mathcal{A} \in \tau_c$ containing $S_{\infty}$ and a subsequence $(S_m)_{m \in \mathbb{N}'}$ of $(S_n)$ contained in $\mathcal{C}_X \setminus \mathcal{A}$. Now, by Remark \ref{subseq}, $S_{\infty}$ is also the Hausdorff closed limit of $(S_m)$. Fix a $y_0 \in S_{\infty}$. By Proposition \ref{hausdlimprop} (ii), we can choose $y_m \in S_m$ for each $m \in \mathbb{N}'$ with $y_m \rightarrow y_0$.  Then, using item \textit{\ref{mapprop-1}} of the current theorem, we have
\[
|\Phi_{S_m}(y_m) (\equiv 0)- \Phi_{S_m}(y_0)| \leq d(y_m,y_0) \rightarrow 0,
\]
i.e., $\Phi_{S_m}(y_0) \rightarrow 0$. In particular, there is a number $c>0$ such that $|\Phi_{S_m}(y_0)| \leq c$, for $m \in \mathbb{N}'$. Given $x_0 \in X$ we again use \textit{\ref{mapprop-2}} to get
\[
|\Phi_{S_m}(x_0) - \Phi_{S_m}(y_0)| \leq d(x_0,y_0) \Longrightarrow |\Phi_{S_m}(x_0)| \leq c + d(x_0,y_0).
\]
Since $(\Phi_{S_m})$ is equicontinuous, we can use the Arzel\'{a}-Ascoli theorem to conclude that there exists a sub-subsequence $(\Phi_{S_k})$ converging to $\Theta \in C_c(X)$ uniformly in compact subsets. Since $\Phi(\mathcal{C}_X)$ is closed, $\Theta = \Phi_S$ for some $S \in \mathcal{C}_X$. By $(iii)$, $S \equiv S_{\infty}$, since $S_{\infty}$ is also the Hausdorff closed limit of $(S_k)$. Since $\Phi$ is a homeomorphism, we conclude that $S_k \rightarrow S_{\infty}$ in $\tau_c$. But then, for large enough $k$, $S_k \in \mathcal{A}$, a contradiction. Therefore, we must have that $S_n \rightarrow S_{\infty}$ in $\tau_c$.
\end{proof}

The final result of this section is a consequence of Theorem \ref{mapprop} (iii), (iv), and the fact that a sequential topology is determined by its limits  \cite[Lemma 2.4]{FloresHausdorffseparabilityboundaries2016}.
\begin{corollary}
\label{equivalence}
The topology $\tau_c$ coincides with the topology induced on $\mathcal{C}_X \subset \mathbb{P}(X)$ by $\tau_H$.
\end{corollary}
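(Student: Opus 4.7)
The plan is to verify both inclusions $\tau_H|_{\mathcal{C}_X} \subseteq \tau_c$ and $\tau_c \subseteq \tau_H|_{\mathcal{C}_X}$ by direct sequential arguments. The key enabling fact is that $\tau_c$ is metrizable by Theorem \ref{mapprop}(i), and hence sequential: a subset of $\mathcal{C}_X$ is $\tau_c$-closed iff it is closed under limits of $\tau_c$-convergent sequences.

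For $\tau_H|_{\mathcal{C}_X} \subseteq \tau_c$, I would fix any $\mathcal{A} \in \tau_H$; by definition $\mathbb{P}(X) \setminus \mathcal{A}$ is $H$-closed. I claim that $\mathcal{F} := \mathcal{C}_X \setminus \mathcal{A}$ is $\tau_c$-closed. Indeed, if $(S_n) \subset \mathcal{F}$ and $S_n \to S$ in $\tau_c$, then Theorem \ref{mapprop}(iii) identifies $S$ as the Hausdorff closed limit of $(S_n)$; since $(S_n) \subset \mathbb{P}(X) \setminus \mathcal{A}$ and the latter is $H$-closed, $S \in \mathbb{P}(X) \setminus \mathcal{A}$, and combined with $S \in \mathcal{C}_X$ this yields $S \in \mathcal{F}$. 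Hence $\mathcal{A} \cap \mathcal{C}_X$ is $\tau_c$-open.

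For the reverse inclusion, I would take any $\tau_c$-closed $\mathcal{F} \subset \mathcal{C}_X$ and seek an $H$-closed $\tilde{\mathcal{F}} \subset \mathbb{P}(X)$ with $\tilde{\mathcal{F}} \cap \mathcal{C}_X = \mathcal{F}$. The naive choice $\tilde{\mathcal{F}} := \mathcal{F}$ fails because a sequence in $\mathcal{F}$ can have empty Hausdorff closed limit---for instance $(\{n\})_{n \in \mathbb{N}}$ in $\mathcal{C}_{\mathbb{R}}$---which would then lie outside $\mathcal{F}$. I would therefore propose $\tilde{\mathcal{F}} := \mathcal{F} \cup \{\emptyset\}$; since $\emptyset \notin \mathcal{C}_X$, one has $\tilde{\mathcal{F}} \cap \mathcal{C}_X = \mathcal{F}$. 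To verify $H$-closedness, let $(T_n) \subset \tilde{\mathcal{F}}$ have Hausdorff closed limit $T_\infty$. If $T_n = \emptyset$ for infinitely many $n$, Remark \ref{subseq} forces $T_\infty = \emptyset \in \tilde{\mathcal{F}}$. Otherwise $T_n \in \mathcal{F}$ for all sufficiently large $n$; if $T_\infty \in \mathcal{C}_X$, Theorem \ref{mapprop}(v) gives $T_n \to T_\infty$ in $\tau_c$, so the $\tau_c$-closedness of $\mathcal{F}$ implies $T_\infty \in \mathcal{F}$; if $T_\infty \notin \mathcal{C}_X$ then Proposition \ref{hausdlimprop}(i) forces $T_\infty$ to be closed, and so empty, giving $T_\infty \in \tilde{\mathcal{F}}$ once more.

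The main (if minor) obstacle is precisely this feature of Hausdorff closed limits: a sequence of non-empty sets may have an empty limit, so the extension $\mathcal{F}$ itself need not be $H$-closed inside $\mathbb{P}(X)$. Augmenting by $\{\emptyset\}$ is the clean trick that absorbs this possibility and carries the $\tau_c$-closedness of $\mathcal{F}$ across to an $H$-closed extension.
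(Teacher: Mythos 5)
Your proof is correct, and it reaches the same conclusion as the paper but by a more hands-on route. The paper disposes of the corollary in one line, citing Theorem \ref{mapprop} (iii) and (v) together with the general fact that a sequential topology is determined by its limits (\cite[Lemma 2.4]{FloresHausdorffseparabilityboundaries2016}); the real content in both arguments is the same, namely that $\tau_c$-convergence and Hausdorff closed limits within $\mathcal{C}_X$ single out the same sequences with the same limits. What you do differently is to unpack the abstract ``determined by its limits'' step into an explicit double inclusion: one direction is the routine sequential-closedness check via metrizability of $\tau_c$ and \ref{mapprop}\ref{mapprop-3}, and the other requires exhibiting, for each $\tau_c$-closed $\mathcal{F}$, an $H$-closed extension in $\mathbb{P}(X)$, which you obtain as $\mathcal{F}\cup\{\emptyset\}$ using \ref{mapprop}\ref{mapprop-5}, Remark \ref{subseq} and Proposition \ref{hausdlimprop}(i). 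This buys something genuine: it isolates and correctly handles the fact that a sequence of non-empty closed sets can have empty Hausdorff closed limit (your example $(\{n\})$ in $\mathbb{R}$), a point the paper's surrounding discussion treats cavalierly --- indeed the paper's parenthetical claim that $\mathcal{C}_X$ is closed in $(\mathbb{P}(X),\tau_H)$ because such limits are ``closed and non-empty'' fails for non-compact $X$, precisely by your example, whereas your augmentation by $\{\emptyset\}$ makes the subspace-topology argument airtight without that claim. The price is length; the paper's citation-based proof is shorter but leans on an external lemma and on sequentiality considerations for the induced topology that it does not spell out.
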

%

\begin{remark}
\label{nice}
{\em The relevance of Corollary \ref{equivalence} lies in that it shows that the topology $\tau_c$ actually does not depend on the particular choice of the metric $d$, provided $X$ is, as assumed, metrizable, separable and locally compact (e.g., if $X$ is a finite-dimensional smooth manifold with or without boundary).}
\end{remark}

\section{CLT II: a metrizable topology on the future $c$-completion of globally hyperbolic spacetimes with timelike boundary}\label{top2}

We return to the context of spacetimes, and shall assume, throughout this section, that $(\M,g)$ is globally hyperbolic with (possibly empty) timelike boundary. Our purpose here is to define a suitable metrizable topology on its future c-completion in this case. We emphasize again that we always adopt the ``concrete'' future c-completion $(\hat{\M}, \ll, i_{\M})$ given through the end of section \ref{prelim}.

In order to topologize $\hat{\M}$, we shall adopt the topology $\tau_c \equiv \tau_H$ on $\mathcal{C}_{\M}$ as described in the previous section\footnote{Of course, we now take the metric space $X$ used in the previous section to be $\M$ with an appropriate metric $d:\M \times \M \rightarrow \mathbb{R}$.}. As mentioned in the Introduction, Hausdorff closed limits of sequences of even very well-behaved sets do not in general preserve their good properties. Remarkably, however, the authors of \cite{GallowayHausdorffClosedLimits2017} show that sequences of {\em past sets} are well-behaved, insofar as Hausdorff closed limits of sequences of past sets are also (the closure of) past sets (see especially Section 3 of \cite{GallowayHausdorffClosedLimits2017}). Although the Hausdorff closed limit of a sequence of IPs need not be the closure of an IP, we still can obtain good properties for the CLT.

Recall that by Proposition \ref{p_ord_lowerlevels}, the interior $(M, g)$ is a causally continuous spacetime on its own right. Thus, by the general discussion in section \ref{prelim} and more specifically  Proposition \ref{causalityforpastdetermination}, it will have its own standard future completion, denoted by $(\hat{M}, \ll_M, i)$. We now have

\begin{theorem}
\label{CLT}
If $(\M,g)$ is a globally hyperbolic with (possibly empty) timelike boundary, then the following statements hold.
\begin{enumerate}[label=(\roman*)]
\item \label{CLT-2} The mapping $\chi: P \in \hat{\M} \mapsto {\rm cl}(P) \in \mathcal{C}_{\M}$ is one-to-one. Therefore, there exists a unique (metrizable) topology on $\hat{\M}$ for which $\chi$ is a homeomorphism onto its image. We shall refer to this topology as the {\em closed limit topology} (CLT), and denote it again by $\tau_c$.
\item \label{CLT-3} If $(P_k)$ be a sequence of IPs in $(\M,g)$, and $P$ is an IP therein, $P_k \rightarrow P$ with respect to the CLT if and only if $\overline{P}$ is the Hausdorff closed limit of $(P_k)$.
\end{enumerate}

\end{theorem}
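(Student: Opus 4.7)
The plan is to reduce everything to the machinery developed in Section \ref{top1}, once it has been shown that the assignment $\chi$ faithfully encodes IPs via their closures in $\M$. Throughout, I work with $\M$ metrized by a (separable, locally compact, Heine--Borel) distance so that Theorem \ref{mapprop} and Corollary \ref{equivalence} apply with $X = \M$.

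For part \textit{\ref{CLT-2}}, the core point is injectivity of $\chi$. Given two non-empty IPs $P, Q \subset \M$ with ${\rm cl}(P) = {\rm cl}(Q)$, I would pick $x \in P$ and use that $P$ is an open past set (openness of the chronological relation from Proposition \ref{p_opentrans}) together with the directedness characterization of IPs from Proposition \ref{PipST}(3) to find $p \in P$ with $x \ll p$. Since $p \in {\rm cl}(P) = {\rm cl}(Q)$, there is a sequence $q_n \in Q$ with $q_n \to p$; openness of $I^+(x)$ forces $q_n \in I^+(x)$ for large $n$, i.e.\ $x \ll q_n$, so $x \in Q$ because $Q$ is a past set. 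By symmetry $P = Q$. Once injectivity is in hand, the desired (metrizable) topology on $\hat{\M}$ is simply the one pulled back from $\tau_c$ on $\mathcal{C}_{\M}$ via $\chi$, whose existence and metrizability are inherited from Theorem \ref{mapprop}\textit{\ref{mapprop-1}}.

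For part \textit{\ref{CLT-3}}, convergence $P_k \to P$ in the CLT is by construction equivalent to $\chi(P_k) \to \chi(P)$ in $\tau_c$, i.e.\ to ${\rm cl}(P_k) \to {\rm cl}(P)$ in $\tau_c \subset \mathcal{C}_{\M}$. Theorem \ref{mapprop}\textit{\ref{mapprop-3}} and \textit{\ref{mapprop-5}} together state that, for closed sets, $\tau_c$-convergence is exactly Hausdorff closed limit convergence, so this is equivalent to ${\rm cl}(P)$ being the Hausdorff closed limit of $({\rm cl}(P_k))$. Finally, Remark \ref{subseq} says that the Hausdorff upper and lower limits are insensitive to taking closures term-by-term, hence ${\rm cl}(P)$ is the Hausdorff closed limit of $({\rm cl}(P_k))$ if and only if it is the Hausdorff closed limit of $(P_k)$, concluding the equivalence.

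The genuinely delicate step is the injectivity argument for $\chi$: one must be sure that two distinct IPs cannot have identical closures, and this is where the interaction between openness of $\ll$ and the past-set structure of IPs is essential — in a general topological setting a set and its ``past hull'' can easily share closures, so the use of Proposition \ref{p_opentrans} is indispensable. The remainder of the proof is essentially a book-keeping exercise applying Theorem \ref{mapprop} and Remark \ref{subseq}, with no further conceptual difficulty.
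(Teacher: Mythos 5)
Your proof is correct, and for the injectivity of $\chi$ it takes a genuinely different route from the paper. The paper proves part (i) by reducing to the interior: from ${\rm cl}(P)={\rm cl}(Q)$ it deduces ${\rm cl}_M(P\cap M)={\rm cl}_M(Q\cap M)$, invokes Proposition 3.2 of Galloway--Vega \cite{GallowayHausdorffClosedLimits2017} (closures of past sets in a spacetime without boundary determine them) to get $P\cap M=Q\cap M$, and then uses the bijection $F$ of Theorem \ref{CLT1} to conclude $P=Q$. You instead argue directly in $\M$: given $x\in P$, the past-set property yields $p\in P$ with $x\ll p$, and since $p\in{\rm cl}(Q)$ the openness of $I^+(x)$ (Proposition \ref{p_opentrans}(a)) produces a point of $Q$ in $I^+(x)$, whence $x\in Q$; by symmetry $P=Q$. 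This is sound, more self-contained (no appeal to Theorem \ref{CLT1} or to the Galloway--Vega result, which is stated for boundaryless spacetimes), it actually establishes the stronger fact that any two past sets of $\M$ with equal closures coincide, and it makes clear that injectivity needs only openness of $\ll$ rather than global hyperbolicity; a minor remark is that you do not really need the directedness characterization of Proposition \ref{PipST}(3) here, only $P=I^-(P)$, and the sequence $q_n\to p$ can be replaced by simply intersecting the open set $I^+(x)$ with $Q$. For part (ii) your bookkeeping via Theorem \ref{mapprop}\textit{\ref{mapprop-3}}, \textit{\ref{mapprop-5}} and Remark \ref{subseq} (to pass between $(P_k)$ and $({\rm cl}(P_k))$, noting that the putative limit ${\rm cl}(P)$ is non-empty so that \textit{\ref{mapprop-5}} applies) is essentially the same content the paper compresses into the citation of Corollary \ref{equivalence}.
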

\begin{proof}
\textit{\ref{CLT-2}}  We have to check that for any past sets $P,Q \in \hat{\M}$,
\[
{\rm cl}(P)={\rm cl}(Q)  \Longrightarrow P =Q.
\]
Now,
\[
{\rm cl}(P)= {\rm cl}(Q) \Longrightarrow {\rm cl}(P)\cap M = {\rm cl}(Q)\cap M \Longrightarrow {\rm cl}_M(P \cap M) = {\rm cl}_M(Q \cap M).
\]
By Proposition 3.2 in \cite{GallowayHausdorffClosedLimits2017}, it follows that $P \cap M = Q \cap M$. But then $F(P) = F(Q) \Longrightarrow P=Q$, where $F$ is defined as in Theorem \ref{CLT1}. \\
\textit{\ref{CLT-3}} is immediate from Corollary \ref{equivalence}.
\end{proof}

The closed limit topology introduced in Theorem \ref{CLT} (i) has the following important key properties, which show in a precise manner what we mean when we say that the CLT is ``good'' (compare with \cite[Theorem 3.27]{Floresfinaldefinitioncausal2011}).
\begin{theorem}
\label{thething}
If $(\M,g)$ is a globally hyperbolic with (possibly empty) timelike boundary, then the following facts hold for the CLT $\tau_c$ on $\hat{\M}$.
\begin{enumerate}[label=(\roman*)]
\item \label{thething-1} $i(\M)$ is dense in $\hat{\M}$. In particular, $\hat{\partial}\M$ has empty interior.
\item \label{thething-2} $i: \M \hookrightarrow \hat{\M}$ is an open continuous map. In particular, $i(\M)$ is an open (dense) set in $\hat{\M}$, the induced topology on $\M$ by $i$ coincides with the manifold topology, and $(\hat{\M},\tau_c)$ is second-countable.
\item \label{thething-3} $\hat{I}^{\pm}(P)$ is open in $\hat{\M}$, for all $P \in \hat{\M}$.
\item  \label{thething-4} Any future-directed chain $(P_n) \subset \hat{\M}$ converges in $\tau_c$.
\end{enumerate}
\end{theorem}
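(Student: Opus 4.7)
My overall strategy is to handle (i) and (iv) together via a single monotone-union computation based on the generating-chain description of IPs from Proposition \ref{PipST}(3), derive continuity of $i$ in (ii) and the entirety of (iii) from the basic causality of $(\M, g)$ combined with Proposition \ref{achronalitybasic}, and leave as the main obstacle the openness of the embedding $i$, which will require a limit-curve argument.

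For (i) and (iv) the workhorse is Theorem \ref{CLT}(ii), which characterizes CLT-convergence of a sequence $(P_n)$ to an IP $P$ by the condition that ${\rm cl}(P)$ is the Hausdorff closed limit of $(P_n)$. Given any IP $P \in \hat{\M}$, Proposition \ref{PipST}(3) supplies a future-directed chain $(x_k) \subset P$ with $P = \bigcup_k P_k$, where $P_k := I^-(x_k) \in i(\M)$. Monotonicity $P_k \subset P$ gives $\limsup P_k \subset {\rm cl}(P)$ at once, while every $y \in P$ lies in some $P_k$ and hence in $P_m$ for all $m \geq k$, so $P \subset \liminf P_k$ and by closedness ${\rm cl}(P) \subset \liminf P_k$. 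Thus $P_k \to P$ in the CLT, establishing (i). The same scheme, applied to a future-directed chain $(P_n) \subset \hat{\M}$ with $P_\infty := \bigcup_n P_n$, yields (iv): the witnesses $p_{n+1} \in P_{n+1} \setminus P_n$ with $P_n \subset I^-(p_{n+1})$ supplied by the definition of $\ll$ form a future-directed chain in $\M$ generating $P_\infty$, so $P_\infty \in \hat{\M}$ by Proposition \ref{PipST}(3), and the same liminf/limsup computation shows $P_n \to P_\infty$ in the CLT.

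Continuity of $i$ in (ii) reduces to proving $J^-(p_n) \to J^-(p)$ in the Hausdorff sense when $p_n \to p$ in $\M$ (equivalent to $I^-(p_n) \to I^-(p)$ in the CLT by Theorem \ref{CLT}(ii), using ${\rm cl}(I^-(p)) = J^-(p)$ from causal simplicity). The inclusion $J^-(p) \subset \liminf J^-(p_n)$ follows from openness of $\ll$: each $y \ll p$ satisfies $y \ll p_n$ eventually, so $I^-(p) \subset \liminf J^-(p_n)$, which is closed. The reverse $\limsup J^-(p_n) \subset J^-(p)$ is the closedness of the causal relation $\leq$, a standard consequence of causal simplicity (Proposition \ref{p_ord_lowerlevels}). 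Statement (iii) then falls out of Proposition \ref{achronalitybasic}: since $P \ll Q$ forces $P$ to be a PIP $I^-(p_0)$, a short check yields $\hat{I}^-(Q) = i(Q)$ for every $Q \in \hat{\M}$, which is open by the openness of $i$ (from (ii)) applied to the open past set $Q \subset \M$. Dually, $\hat{I}^+(P) = \emptyset$ when $P$ is a TIP, while $\hat{I}^+(I^-(p_0)) = \{Q \in \hat{\M} : p_0 \in Q\}$; for the latter I would pick $Q_0 \ni p_0$, choose $p_0' \in Q_0$ with $p_0 \ll p_0'$ (available since $Q_0$ is an open past set), and exploit $p_0' \in {\rm cl}(Q_0) \subset \liminf {\rm cl}(Q_n)$ to locate $y_n \in Q_n \cap I^+(p_0)$ eventually, so that $p_0 \in I^-(y_n) \subset Q_n$.

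The hard part is openness of $i$: given $P_n \to I^-(p)$ in the CLT, one must show $P_n$ is eventually a PIP $I^-(p_n)$ with $p_n \to p$. I would argue the first half by contradiction: if a subsequence consists of TIPs generated by future-inextendible timelike curves $\gamma_n$, then using $p \in \liminf {\rm cl}(P_n) = J^-(p)$ I pick $z_n \in {\rm cl}(P_n)$ with $z_n \to p$, choose $t_n$ with $z_n \leq \gamma_n(t_n)$, and concatenate a past-directed causal curve inside $J^-(\gamma_n(t_n)) \subset {\rm cl}(P_n)$ from $\gamma_n(t_n)$ to $z_n$ with the future-inextendible tail of $\gamma_n$, producing a future-inextendible causal curve $\tilde\gamma_n \subset {\rm cl}(P_n)$ starting at $z_n$. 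The Limit Curve Lemma valid in the timelike-boundary setting (cf. Proposition 2.19 of \cite{AkeFloresSanchezTimelikeBoundary2018}) then delivers a future-inextendible causal limit $\sigma$ starting at $p$; since $\sigma \subset \limsup {\rm cl}(P_n) = J^-(p)$, one would have $\sigma(s) \leq p$ while $\sigma(s) \geq p$ automatically, so causality collapses $\sigma$ to the constant $p$, contradicting future-inextendibility. Once $P_n = I^-(p_n)$, the convergence $p_n \to p$ follows from continuity of $i$, the Hausdorff character of the metrizable CLT, and the distinguishing property of $\M$ for bounded subsequences, combined with a further LCL argument (on causal curves from points near $p$ to $p_n$) to exclude $p_n$ escaping every compact set. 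The remaining items in (ii) -- that $i$ induces the manifold topology on $\M$ and that $\hat{\M}$ is second-countable -- are then formal consequences of $i$ being an open continuous injection together with metrizability of the CLT from Theorem \ref{CLT} and separability of $\M$.
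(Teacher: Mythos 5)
Your items (i) and (iv) follow essentially the paper's own route (the monotone $\liminf/\limsup$ computation on a generating chain, plus the observation that the witnesses of a future-directed chain of IPs generate the union), and your treatment of $\hat{I}^{+}$ in (iii) is in substance the paper's argument. Where you genuinely diverge is on the two delicate points: openness of $i$ and the past half of (iii). The paper proves openness of $i$ directly and without any limit-curve machinery: assuming $P_n\to I^-(p)$ with $P_n\notin i(U)$, it traps $p$ in a compact diamond $J^+(x_-)\cap J^-(x_+)\subset U$, uses non-imprisonment (Proposition \ref{imprisoned}) together with Proposition \ref{PipST} to force each $P_n$ to contain chain elements leaving the diamond, extracts points $y_n\in \partial I^-(x_+)\cap J^+(x_-)\cap P_n$, and gets a contradiction from compactness and closedness of $\leq$; the openness of $\hat{I}^-(Q)$ is then obtained by rerunning that argument with $Q$ in place of $U$. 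You instead prove the stronger statement that a CLT-convergent sequence with PIP limit $I^-(p)$ is eventually of the form $I^-(p_n)$ with $p_n\to p$, via two limit-curve arguments, and you reduce the past half of (iii) to openness through the identity $\hat{I}^-(Q)=i(Q)$ --- a clean observation not in the paper, correctly justified by Proposition \ref{achronalitybasic}, directedness of $Q$, push-up, and ${\rm cl}(I^-(p))=J^-(p)$. Your route buys a sharper local description of convergence to interior points; the paper's buys economy of tools.

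Two caveats, the first of which you should repair explicitly. The limit curve lemma is precisely what the paper is structured to avoid in the timelike-boundary setting: the classical statement for continuous causal curves is problematic there, and what is actually available is the $H^1$ limit curve theorem of \cite{AkeFloresSanchezTimelikeBoundary2018}; their Proposition 2.19, which you cite, only says that $I^\pm$, $J^\pm$ computed with $H^1$ curves coincide with the piecewise smooth ones, so as written the citation does not support your step. Granting the correct tool (and inextendibility of the limit curve issued from $p$), your contradiction --- $\sigma\subset \limsup {\rm cl}(P_n)=J^-(p)$ together with $\sigma\geq p$ forces $\sigma\equiv p$, impossible for a causal curve --- is sound, and the same scheme does close the ``$p_n$ escapes every compact set'' case, which you only sketch and should write out (take $z_n\ll p_n$, $z_n\to p$, causal curves of divergent auxiliary length, limit curve from $p$ inside $J^-(p)$). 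Finally, a minor point: closedness of the relation $\leq$ used in your continuity argument is a consequence of global hyperbolicity rather than of causal simplicity per se, which matches the paper's own appeal to Proposition \ref{p_ord_lowerlevels}.
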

\begin{proof}
\textit{\ref{thething-1}} Given $P \in \hat{\M}$, let $(x_n)$ be a future-directed chain in $\M$ generating $P$ (cf. Prop. \ref{PipST}(3)). Then $P = \cup _{n \in \mathbb{N}} I^-(x_n)$. Our strategy is to show that $i(x_n) \equiv I^-(x_n) \rightarrow P$ in $\tau_c$. But by item (ii) in Theorem \ref{CLT},
\[
I^-(x_n) \rightarrow P  \Longleftrightarrow  \mbox{ ${\rm cl}(P)$ is the Hausdorff closed limit of $(I^-(x_n))$}.
\]
Now, let $x \in \limsup (I^-(x_n))$, and $U \ni x$ open. For some infinite set $\mathbb{N}' \subset \mathbb{N}$, $I^-(x_m)\cap U \neq \emptyset, \, \forall m \in \mathbb{N}'$, and so in particular $P \cap U \neq \emptyset$. We conclude that $x \in {\rm cl}(P)$. Hence, $\limsup (I^-(x_n)) \subset {\rm cl}(P)$. Let $ y \in {\rm cl}(P)$, and again take $V \ni y$ open. Pick $z \in P \cap V$. Then, for large enough $n \in \mathbb{N}$, $z \ll_g x_n$, and hence
\[
z \in I^-(x_n) \cap V,
\]
and we conclude that $y \in \liminf (I^-(x_n))$. Thus,
\[
{\rm cl}(P) \subset \liminf (I^-(x_n)) \subset \limsup (I^-(x_n)) \subset {\rm cl}(P),
\]
and $\overline{P}$ is the Hausdorff closed limit of $(I^-(x_n))$ as desired.

\smallskip

\textit{\ref{thething-2}} Let us prove first that $i$ is continuous. Let $(x_n)$ be a sequence in $\M$ converging to some $x_0 \in \M$. We wish to show that $i(x_n) \rightarrow i(x_0)$ in $\tau_c$. But again
\[
I^-(x_n) \rightarrow I^-(x_0) \Longleftrightarrow  \mbox{ ${\rm cl}(I^-(x_0))$ is the Hausdorff closed limit of $(I^-(x_n))$}.
\]
Now, let $x \in \limsup (I^-(x_n))$, and $U \ni x$ open. Using strong causality (cf. Prop. \ref{imprisoned}), we may find $x_{-},x_{+} \in U$ such that
\[
x \in I^+(x_{-}) \cap I^-(x_{+}) \subset J^+(x_{-}) \cap J^-(x_{+}) \subset U,
\]
and therefore some infinite set $\mathbb{N}' \subset \mathbb{N}$ for which $I^-(x_m)\cap I^+(x_{-}) \cap I^-(x_{+}) \neq \emptyset$ whenever $m \in \mathbb{N}'$. Choose $ y_m \in I^-(x_m)\cap I^+(x_{-}) \cap I^-(x_{+})$ for each $m \in \mathbb{N}'$. This defines a sequence $(y_m)$ in the set $J^+(x_{-}) \cap J^-(x_{+})$, which is compact by global hyperbolicity. We may then assume that $y_m \rightarrow y_0$ for some $y_0 \in J^+(x_{-}) \cap J^-(x_{+}) \subset U$. But $y_m \ll_g x_m$ by construction, and again global hyperbolicity implies (see Proposition \ref{p_ord_lowerlevels}) that $y_0 \leq_g x_0$. In other words, $y_0 \in U \cap{\rm cl}(I^-(x_0))$, and hence $U \cap I^-(x_0) \neq \emptyset$. We conclude that $x \in {\rm cl}(I^-(x_0))$, and thus that $\limsup (I^-(x_n)) \subset {\rm cl}(I^-(x_0))$.

Let $y \in {\rm cl}(I^-(x_0))$, and again take $V \ni y$ open.  Let $y' \in V \cap I^-(x_0)$. Since in particular $x_0 \in I^+(y')$ and the latter set is open, for large enough $n \in \mathbb{N}$, $x_n \in I^+(y')$ and then $y' \in I^-(x_n)$. In other words, $V \cap I^-(x_n)\neq \emptyset$ for all $n \in \mathbb{N}$ large enough. This implies that $y \in \liminf (I^-(x_n))$, and we then conclude that ${\rm cl}(I^-(x_0)) \subset \liminf (I^-(x_n))$ and thus that ${\rm cl}(I^-(x_0))$ is the Hausdorff closed limit of $(I^-(x_n))$ as desired.

We wish to prove now that $i$ is an open map. Let $U \subset \M$ be any open set, and pick a sequence $(P_n) \subset \hat{\M}\setminus i(U)$ converging in $\tau _c$ to some $P \in \hat{\M}$. We wish to show that $P \in \hat{\M}\setminus i(U)$, in which case one may conclude that $\hat{\M}\setminus i(U)$ is closed, i.e., that $i(U)$ is open, as claimed.

Suppose not; that is, assume that $P \in i(U)$, so that $P = I^-(p)$ for some $p \in U$. Strong causality (cf. Prop. \ref{imprisoned}) implies that $\exists x_{-},x_{+} \in U$ such that $p \in I^+(x_{-}) \cap I^-(x_{+}) \subset J^+(x_{-}) \cap J^-(x_{+}) \subset U$.

Again, $P_n \rightarrow P$ in $\tau_c$ means that $\overline{P}$ is the Hausdorff closed limit of $(P_n)$. Moreover, $p \in {\rm cl}(I^-(p)) \equiv J^-(p)$, since $(\M,g)$ is in particular causally simple (cf. Defn. \ref{higherladder} and Prop. \ref{p_ord_lowerlevels}). Therefore, for large enough $n \in \mathbb{N}$, $P_n \cap  I^+(x_{-}) \cap I^-(x_{+}) \neq \emptyset$, and we may as well assume this is always the case. For each $n$, then, choose $q_n \in P_n \cap  I^+(x_{-}) \cap I^-(x_{+})$ and a future-directed chain $(x^n_k)_{k \in \mathbb{N}}$ generating $P_n$ (Prop. \ref{PipST}(3)).

Now, let $n \in \mathbb{N}$. Either $P_n = I^-(p_n)$ with $p_n \notin U$, or else $P_n$ is a TIP and $(x^n_k)_{k \in \mathbb{N}}$ is future-endless (Prop. \ref{PipST}(4)), in which case its terms must eventually leave the compact set $J^+(x_{-}) \cap J^-(x_{+})$ if they ever enter it. In any case, one may pick some term $x^n_{k_n} \notin J^+(x_{-}) \cap J^-(x_{+})$ such that $q_n \ll_g x^n_{k_n}$. Therefore, we may pick $y_n \in \partial I^-(x_{+}) \cap I^+(x_{-}) \cap P_n $. We have thus built a sequence $(y_n)$ in the compact set $\partial I^-(x_{+}) \cap J^+(x_{-})$. Passing to a subsequence if necessary, we may assume that $y_n \rightarrow y_0 \in \partial I^-(x_{+})$. We also have $y_0 \in {\rm cl}(P) \equiv J^-(p)$. But then
\[
y_0 \leq_g p \ll_g x_{+} \Longrightarrow y_0 \in I^-(x_{+}),
\]
which is open and contradicts the previous fact that $y_0 \in \partial I^-(x_{+})$. Hence, $P \in \hat{\M}\setminus i(U)$.

\smallskip

\textit{\ref{thething-3}} Let $Q \in \hat{\M}$, and again take $(P^{\pm}_n) \subset \hat{\M}\setminus  \hat{I}^{\pm}(Q)$ converging in $\tau _c$ to some $P^{\pm} \in \hat{\M}$. We wish to show that $P^{\pm} \in \hat{\M}\setminus \hat{I}^{\pm}(Q)$, in which case one may conclude that $\hat{\M}\setminus \hat{I}^{\pm}(Q)$ is closed and hence that $\hat{I}^{\pm}(Q)$ is open.

Consider the past case first, and assume that $P^- \in \hat{I}^{-}(Q)$. Then for some $q \in Q\setminus P^-$, $P^- \subset I^-(q)$. Global hyperbolicity then implies that $P^-$ is a PIP, i.e., $P^- =I^-(p)$ for some $p \in \M$. Since $Q$ is open, we may assume without loss of generality that $p \ll_g q$, and hence that ${\rm cl}(P^-) = {\rm cl}(I^-(p))\equiv J^-(p) \subset I^-(q)$. We can now reason exactly as we did to show that $i$ is open, just using $Q$ instead of $U$, $q$ as $x_{+}$ and any $x_{-} \in P$ to arrive at a contradiction. Thus we must conclude that $P^- \in \hat{\M}\setminus \hat{I}^{-}(Q)$ as desired.

Now, assume $P^+ \in \hat{I}^{+}(Q)$. Then for some $p \in P^+ \setminus Q$, $Q \subset I^-(p)$. Global hyperbolicity again implies that $Q$ must be a PIP. i.e., $Q =I^-(q)$ for some $q \in P^+$. Pick any $p' \in P^+ \cap I^+(p)$. Since $p' \in {\rm cl}(P^+)$, which is the Hausdorff closed limit of $(P^+_n)$, eventually $P^+_n \cap I^+(p) \neq \emptyset$, and if we pick $y_n$ in the latter intersection, $Q \subset I^-(y_n)$, and hence $Q \ll P^+_n$, a contradiction. Again, $P^+ \in \hat{\M}\setminus \hat{I}^{+}(Q)$ and the proof is complete.

\smallskip

\textit{\ref{thething-4}} Let a future-directed chain $(P_n)$ in $\hat{\M}$ be given, and let $P:= \cup _{n\in \mathbb{N}}P_n$.
We claim, first, that $P$ is an IP in $(\M,g)$. It is clearly a (non-empty) past set therein. Now, pick for each $n \in \mathbb{N}$ a future-directed chain $(x^{(n)}_k)_{k \in \mathbb{N}}$ generating $P_n$. Since $(P_n)$ is future-directed we can also choose for each $n\in \mathbb{N}$ a point $p_n \in P_{n+1}\setminus P_n$ for which $P_n \subset I^-(p_n,\M)$, and hence also some $k_n \in \mathbb{N}$ in such a way that $P_n \subset I^-(x^{(n+1)}_{k_n},\M) \subset P_{n+1}$ and the resulting sequence $(k_n)_{n \in \mathbb{N}}$ of natural numbers is strictly increasing. Clearly the sequence $(x^{(n+1)}_{k_n}))_{n \in \mathbb{N}}$ is a future-directed chain which generates $P$. By item (3) of Proposition \ref{PipST}, $P$ is an IP as claimed.

Now, we need to show that ${\rm cl}(P)$ is the Hausdorff closed limit of $(P_n)$.
Again, let $x \in \limsup (P_n)$. Since any neighborhood of $x$ in $\M$ intersects some $P_n$ it will also intersect $P$; thus $x \in {\rm cl}(P)$, and this establishes that $\limsup(P_n) \subset {\rm cl}(P)$. Given $y \in {\rm cl}(P)$ and any open set $U \ni y$, let $y' \in U \cap P$, so that $y' \in U \cap P_{n_0}$ for some $n_0 \in \mathbb{N}$. But since $(P_n)$ is increasing this means that $y' \in U \cap P_{n}$ for every $n \in \mathbb{N}$ larger than $n_0$, that is, $U \cap P_{n} \neq \emptyset$ for $n$ large enough; in other words, $y \in \liminf(P_n)$ and hence
\[
\limsup(P_n) \subset {\rm cl}(P) \subset \liminf(P_n)
\]
as desired. Thus, $P_n \rightarrow P$ in the CLT.
\end{proof}

\medskip

Henceforth, for any globally hyperbolic spacetime $(\M,g)$ with (perhaps empty) timelike boundary, the {\em (standard) future $c$-completion} will always refer to $(\hat{\M}, \ll, i_{\M})$ with $\hat{\M}$ endowed with the CLT topology $\tau_c$.

\begin{remark}
  \label{def:2} {\em We have adopted in this section the usual definition of Hausdorff limits as introduced, e.g., in \cite{GallowayHausdorffClosedLimits2017}. In this approach, the superior and inferior limits are closed sets. However, the CLT topology can be reformulated by using \emph{open} versions for the superior and inferior limits. This, in turn, is more convenient in comparing the CLT and the chronological topology (see the Appendix). Moreover, it provides an alternative proof of Theorem \ref{thething} by using \cite[Theorem 3.27]{Floresfinaldefinitioncausal2011} (cf. Remark A.2 in the Appendix).}
\end{remark}




\section{Conformal extensions and the CLT}\label{conf}

{\em Throughout this section we shall only deal with globally hyperbolic spacetimes without boundary (hence $\M\equiv M$).} As mentioned in the Introduction, our goal now is to compare the natural topology of the conformal boundary (in a suitably defined conformal extension) with the $c$-boundary $\hat{\partial}M := \hat{M}\setminus M$ of the standard future $c$-completion with the closed limit topology. We shall see that in this case conformal and $c$-boundaries are essentially ``the same'' given a few natural assumptions on the conformal extension. The importance of this coincidence is twofold. On one hand, it ensures the uniqueness of the conformal boundary independently of the (suitably defined) conformal extension. On the other hand, it establishes good regularity properties for the $c$-boundary.

The details in the definition of a conformal boundary for spacetimes change somewhat throughout the literature, so we have opted in this paper for a version which is weak enough to cover most of the best known cases.

\begin{definition}
\label{def3}
A {\em conformal extension} of $(M,g)$ is a triple $(\tilde{M},\tilde{g}, \varphi)$, where $(\tilde{M},\tilde{g})$ is a spacetime and $\varphi: M \rightarrow \tilde{M}$ is an open, time-orientation-preserving smooth embedding satisfying
\[
\label{conformalfactor1}
\varphi ^{\ast} \tilde{g} = \Omega^2 g
\]
for a a strictly positive function $\Omega \in C^{\infty}(M)$, the {\em conformal factor} of the conformal extension. The conformal extension $(\tilde{M},\tilde{g}, \varphi)$ is {\em nontrivial} if $\varphi(M) \neq \tilde{M}$, in which case the {\em future [resp. past] conformal boundary} associated with the extension $(\tilde{M},\tilde{g}, \varphi)$ is $\partial ^{+}M:= \partial \varphi(M) \cap I^+(\varphi(M),\tilde{M})$ [resp. $\partial ^{-}M:= \partial \varphi(M) \cap I^-(\varphi(M),\tilde{M})$.
\end{definition}

\begin{remark}\label{rmk1}
{\em A few comments about Definition \ref{def3} are in order.
\begin{enumerate}[label=(\arabic*)]
\item Note that we do not require, in this definition, that the past and/or future conformal boundaries have any regularity, and even if any one of them is smooth, we do not demand that the conformal factor extends smoothly to the boundary. As it stands, the definition is too weak to be useful in most applications, and has to be supplemented according to specific needs. Accordingly, we too shall make some extra assumptions in what follows, but they will still comprise most concrete globally hyperbolic examples in the literature.
\item In particular, if $(M,g)$ is extendible as a globally hyperbolic spacetime, i.e., it is {\em isometric} to a proper open subset of a larger globally hyperbolic spacetime $(\tilde{M}, \tilde{g})$, then the latter gives a conformal extension of the  the former with conformal factor $\Omega \equiv 1$. To avoid this kind of triviality, one usually assumes, in concrete applications, that $(M,g)$ is maximal in the globally hyperbolic class.
\item The ``larger'' spacetime used in a conformal extension may well be $(M,g)$ itself. As a standard example, consider the following. Let $M = \mathbb{R}^2$ with the flat metric
\[
g = -dudv,
\]
and the time orientation such that both $\partial_u$ and $\partial_v$ are future-directed null vector fields. This spacetime is geodesically complete, and hence inextendible. Define
\[
\varphi: (u,v) \in \mathbb{R}^2 \mapsto (\arctan u,\arctan v) \in \mathbb{R}^2
\]
and consider the smooth function
\[
\label{conformalfactor2}
\Omega : (u,v) \in \mathbb{R}^2 \mapsto \cos u \cos v \in \mathbb{R}.
\]
Then it is easy to check that $(M,g, \varphi)$ is a (nontrivial) conformal extension of $(M,g)$ thus given, such that the image of $M$ by $\varphi$ is the open square $Q:= (-\pi/2, \pi/2)^2$, with conformal factor $\Omega \circ \varphi$. Note that $\Omega$ vanishes on the boundary of $Q$ in $\mathbb{R}^2$. Thus, $(M,g)$ is conformally extended via a mapping onto an open set of itself.
\item Let $(\tilde{M},\tilde{g}, \varphi)$ be a conformal extension of $(M,g)$. Since $\varphi$ is a diffeomorphism when viewed as a map onto its open image $\varphi(M)$, consider its (smooth) inverse $\varphi^{-1}: \varphi(M) \rightarrow M$. Then $(\varphi(M),(\varphi ^{-1})^{\ast} g)$ is a spacetime on its own right, and moreover it is isometric to $(M,g)$ by construction. Hence, there is no loss of generality in regarding $M \subseteq \tilde{M}$ and $\varphi$ to be the inclusion map. We shall often do so in what follows, and will then abuse notation by referring to $(\tilde{M},\tilde{g})$ (without appealing to $\varphi$) as a conformal extension. In this case, we write the condition in Defn. \ref{conformalfactor1} as
\[
\label{conformalfactor3}
\tilde{g}|_{M} = \Omega^2 g.
\]
\item Even if we identify isometric spacetimes as in the previous item, conformal extensions, and also the associated conformal boundaries, are clearly far from being unique. However, (and this is one of the main points of this section) natural, sufficient additional conditions can be given on a conformal extension (cf. Definition \ref{nested} and Theorem \ref{causaltoconformal} below) that ensure that the conformal boundary is indeed uniquely defined.
\item The existence of a conformal extension, as well as the conformal extension itself, are conformally invariant notions in the following sense. If $(\tilde{M},\tilde{g})$ is a conformal extension of $(M,g)$ with conformal factor $\Omega \in C^{\infty}(M)$, then it is also a conformal extension (with the same conformal boundary) of the spacetime $(M,\omega^2 g)$, where $\omega \in C^{\infty}(M)$ is a strictly positive function, with conformal factor $\Omega/\omega$. Note, however, and this will be important in the next section, that {\em the extendibility of the conformal factor to the boundary, and the values that such an extension might have there, are NOT a conformally invariant notion, depending rather sensitively on how $\Omega$ and $\omega$ are related.}
\end{enumerate}
}
\end{remark}

In passing, we note that we can say something interesting about spacetimes which acquire a timelike conformal boundary for some conformal extension. Indeed, Theorem \ref{CLT1} immediately yields the following result:
\begin{corollary}
\label{timelikeisquick}
Consider a conformal extension $(\tilde{M},\tilde{g})$ of a (necessarily causally continuous) spacetime $(M,g)$ without boundary such that
\begin{enumerate}[label=(\arabic*)]
\item the future conformal boundary $\partial^+ M$ of $M$ is a smooth timelike hypersurface in $(\tilde{M},\tilde{g})$ and
\item $(M\cup \partial^+ M,\tilde{g}|_{M\cup \partial^+ M})$ is a globally hyperbolic spacetime with timelike boundary $\partial^+ M$ and interior $M$.
\end{enumerate}
Then $\partial^+ M$ is contained in the future $c$-boundary $\hat{\partial}M$ of $(M,g)$.
\end{corollary}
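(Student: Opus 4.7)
The plan is to apply Theorem \ref{CLT1} directly to the spacetime $\M := M \cup \partial^+M$ equipped with the restricted metric $\tilde{g}|_{\M}$, and then translate the result back to $(M,g)$ via conformal invariance. The key observation is that, since the conformal factor $\Omega$ is strictly positive on $M$, the chronological relation $\ll_g$ on $M$ coincides with $\ll_{\tilde{g}|_M}$; consequently the chronological set $(M,\ll_g)$ (and hence its future $c$-completion) is literally the same as that of $(M,\tilde{g}|_M)$, and we may argue throughout in terms of $\tilde{g}$.

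Once this identification is made, the hypotheses (1) and (2) of the corollary are exactly the requirement that $(\M,\tilde{g}|_{\M})$ be a globally hyperbolic spacetime with smooth timelike boundary, with interior $(M,\tilde{g}|_M)$. Theorem \ref{CLT1} then supplies the chronological bijection $F: P \in \hat{\M} \mapsto P \cap M \in \hat{M}$, together with the key structural statement that
\[
F^{-1}(\hat{\partial}M) \;=\; \hat{\partial}\M \;\sqcup\; i_{\M}(\partial \M).
\]
Since here $\partial \M = \partial^+M$, this already gives the inclusion we want: the map
\[
\iota: p \in \partial^+M \;\longmapsto\; F(i_{\M}(p)) \;=\; I^-(p,\M)\cap M \;\in\; \hat{\partial}M
\]
realizes $\partial^+M$ as a subset of the future $c$-boundary of $(M,g)$, with image consisting precisely of those TIPs in $(M,g)$ that are generated by future-directed timelike curves admitting a future endpoint on $\partial^+M$ in $\M$ (cf.\ Step 3 of the proof of Theorem \ref{CLT1}).

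There is essentially no obstacle: the corollary is a direct repackaging of the last clause of Theorem \ref{CLT1}. The only tiny point worth flagging in the write-up is the conformal invariance step, namely that $(M,\ll_g) = (M,\ll_{\tilde{g}|_M})$ so that ``the future $c$-boundary of $(M,g)$'' and ``the future $c$-boundary of $(M,\tilde{g}|_M)$'' refer to the same chronological object; once this is recorded, the rest of the argument is a one-line application of Theorem \ref{CLT1} to $\M = M\cup \partial^+M$.
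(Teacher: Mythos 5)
Your proposal is correct and follows essentially the same route as the paper, which derives the corollary directly from Theorem \ref{CLT1} applied to $\M = M\cup\partial^+M$ (globally hyperbolic with timelike boundary by hypothesis), with points of $\partial^+M$ identified via Step 3 with TIPs $I^-(p,\M)\cap M$ of $(M,g)$. Your explicit remark that $\ll_g=\ll_{\tilde{g}|_M}$ by conformal invariance, so the future $c$-completions of $(M,g)$ and $(M,\tilde{g}|_M)$ coincide, is exactly the (implicit) identification the paper relies on when calling the result immediate.
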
 \qcd

We now turn to specific conditions that ensure equality of conformal and $c$-boundaries. Let us first introduce a few preliminary definitions and results.

\begin{definition}
\label{def1}
Let $A \subset M$ be any set.
\begin{enumerate}[label=(\alph*)]
\item $A$ is {\em causally convex} if any causal curve segment in $(M,g)$ with endpoints in $A$ is entirely contained in $A$,
\item $A$ is {\em future-precompact} [resp. {\em past-precompact} ] is there exists a compact set $K \subset M$ such that $A \subset I^-(K)$ [resp. $A \subset I^+(K)$].
\end{enumerate}
The {\em future boundary} [resp. {\em past boundary}] of $A$ is
\[
\partial ^+ A := I^+(A) \cap {\rm Fr} A \mbox{ [resp. $\partial ^- A := I^-(A) \cap {\rm Fr} A$]},
\]
where ${\rm Fr} A$ denotes the topological boundary of $A$ in $M$.
\end{definition}

Note that any PIP [resp. PIF] is future-precompact [resp. past-precompact] in $(M,g)$.
Moreover, in the case when $(M,g)$ is globally hyperbolic and a set $A \subset M$ is both future- and past-precompact, then its is indeed precompact, which motivates the use of these terms.

\begin{proposition}
\label{convexitypastfuture}
Let $A\subset M$. The following statements hold.
\begin{enumerate}[label=(\roman*)]
\item \label{convexitypast-1} Suppose $A$ is open. Then, $\partial ^+ A = \emptyset$ [resp. $\partial ^- A = \emptyset$] if and only if $A$ is a future set [resp. a past set].
\item \label{convexitypast-2} If $A$ is causally convex, then $\partial ^{\pm} A$, if non-empty, are achronal $C^0$ hypersurfaces in $(M,g)$. If $A$ is also open, then $A \cup \partial ^{\pm} A$ is causally convex.
\item  \label{convexitypast-3} If $A$ is either a future or a past set in $(M,g)$, then $A$ is causally convex.
\end{enumerate}
\end{proposition}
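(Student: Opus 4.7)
For item (i), the forward direction is automatic since $A = I^+(A)$ is open (by Proposition \ref{p_opentrans}(a)), so ${\rm Fr}\, A \cap A = \emptyset$ and $\partial^+ A = I^+(A) \cap {\rm Fr}\, A = \emptyset$. The converse uses a ``first exit'' argument: since $A$ is open, $A \subset I^+(A)$ holds automatically; if some $q \in I^+(A) \setminus A$ existed, pick a witness $p \in A$ with $p \ll q$ and consider a timelike curve $\gamma$ from $p$ to $q$. The point $r := \gamma(s^*)$ with $s^* := \inf\{s : \gamma(s) \notin A\}$ then lies in ${\rm Fr}\, A \cap I^+(p) \subset \partial^+ A$, a contradiction. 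Item (iii) is settled by direct push-up: if $A = I^+(A)$, take $x, y \in A$ and $z \in J^+(x) \cap J^-(y)$; pick $w \in A$ with $w \ll x$ (possible since $x \in I^+(A)$) and apply Proposition \ref{p_opentrans}(b) to $w \ll x \leq z$ to obtain $w \ll z$, so $z \in I^+(A) = A$.

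For the achronality claim in (ii), the plan is to obtain a contradiction from $p \ll q$ with $p, q \in \partial^+ A$. Using $p \in I^+(A)$, pick $p' \in A$ with $p' \ll p$, whence $p' \ll q$ by transitivity; using $q \in \overline{A}$ and the openness of $I^+(p')$, pick $q' \in A \cap I^+(p')$ close enough to $q$ that $p \ll q'$ (possible since $p \ll q$ is open on $q$). Then $I^+(p') \cap I^-(q')$ is an open neighborhood of $p$, and since $p \in {\rm Fr}\, A$ we may select some $p^* \in I^+(p') \cap I^-(q') \setminus A$. But $p', q' \in A$ together with causal convexity of $A$ force $p^* \in J^+(p') \cap J^-(q') \subset A$, a contradiction.

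For the $C^0$-hypersurface claim, the plan is to realize $\partial^+ A$ as the topological boundary, within the open subspace $I^+(A)$, of a past set, and then invoke the standard result that boundaries of past sets in a spacetime are achronal $C^0$ hypersurfaces. The key observations are: $A \cap I^+(A)$ is a past set of the spacetime $I^+(A)$ -- indeed, any $z \in I^+(A)$ with $z \ll y$ along a curve in $I^+(A)$, for some $y \in A$, admits $x \in A$ with $x \ll z$, so causal convexity gives $z \in J^+(x) \cap J^-(y) \subset A$ -- and $\partial^+ A$ coincides with ${\rm Fr}_{I^+(A)}(A \cap I^+(A))$ (a routine subspace-topology computation using the fact that $I^+(A)$ is open in $M$). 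For the last claim, that $A \cup \partial^+ A$ is causally convex when $A$ is open, the idea is to observe $A \cup \partial^+ A = \overline{A} \cap I^+(A)$; given $x, y$ in this set and $z \in J^+(x) \cap J^-(y)$, push-up from some $x'' \in A$ with $x'' \ll x$ gives $z \in I^+(A)$. One then approximates $z$ from the chronological past by choosing $z_n \to z$ with $x'' \ll z_n \ll z$ along a timelike curve from $x''$ to $z$, and combines with $y_k \to y$, $y_k \in A$ (from $y \in \overline{A}$), so that $z_n \ll y_k$ for $k$ large; causal convexity yields $z_n \in J^+(x'') \cap J^-(y_k) \subset A$, whence $z \in \overline{A} \cap I^+(A) = A \cup \partial^+ A$.

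The main technical obstacle is the $C^0$-hypersurface claim, which requires the careful identification of $\partial^+ A$ with a topological boundary in the subspace $I^+(A)$ and the verification that $A \cap I^+(A)$ is genuinely a past set there, so that the standard theorem on boundaries of past sets applies. The time-dual statements for $\partial^- A$ follow by reversing the time orientation.
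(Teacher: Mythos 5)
Your argument is correct, and while items (i), (iii) and the achronality claim in (ii) run essentially along the paper's lines (first-exit argument, push-up, and a timelike diamond straddling ${\rm Fr}\,A$ contradicting causal convexity), your treatment of the remaining two claims in (ii) is genuinely different from the paper's. For the $C^0$-hypersurface claim the paper shows directly that $\partial^+A\cap {\rm edge}(\partial^+A)=\emptyset$, by constructing, for $p\in\partial^+A$, timelike curves from $I^+(A)\cap I^-(p)$ to $I^+(p)$ which causal convexity forces to start in $A$ and end outside it, hence to cross $\partial^+A$; it then invokes the standard ``achronal and edgeless implies $C^0$ hypersurface'' criterion. You instead identify $\partial^+A$ with ${\rm Fr}_{I^+(A)}\bigl(A\cap I^+(A)\bigr)$ and verify that $A\cap I^+(A)$ is a past set of the open subspacetime $I^+(A)$, so the classical achronal-boundary theorem applies; this neatly outsources the edge computation, but two small points deserve to be made explicit: the theorem you quote is for past sets in the Hawking--Ellis/O'Neill sense $I^-(P)\subset P$, which is exactly what your verification gives (note $A\cap I^+(A)$ need not satisfy the paper's stricter convention $P=I^-(P)$, e.g.\ $A=J^-(p)$ in Minkowski space), and the achronality it delivers is a priori only relative to $I^+(A)$ --- harmless here, both because you prove achronality in $(M,g)$ independently and because any timelike curve joining two points of $\partial^+A$ lies in $I^+(A)$ anyway. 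For the causal convexity of $A\cup\partial^+A$ the paper reaches a contradiction with the achronality of $\partial^+A$ via a causal curve that would have to cross $\partial^+A$ twice, whereas your identity $A\cup\partial^+A=\overline{A}\cap I^+(A)$ together with the approximation $z_n\ll z$, $z_n\in J^+(x'')\cap J^-(y_k)\subset A$, settles it by a limiting argument that never uses achronality; both work, and yours is arguably more robust. Likewise your direct push-up proof of (iii) is more elementary than the paper's appeal to the achronality of ${\rm Fr}\,A$ and item (ii).
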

\begin{proof}
\textit{\ref{convexitypast-1}} The ``if'' part is immediate. For the converse, assume that $\partial ^+ A = \emptyset$, and let $p \in A, q \in I^+(p)$. Pick any future-directed timelike curve $\alpha:[0,1] \rightarrow M$ such that $\alpha(0) =p$ and $\alpha(1) =q$. If $\alpha$ ever leaves $A$, then it must intersect $\partial ^+ A$, a contradiction. Thus $q \in A$, and we conclude that $A$ is a future set. Reasoning time-dually, we establish that $A$ is a past set when $\partial ^- A = \emptyset$.

\smallskip

\textit{\ref{convexitypast-2}} We give the proof for the future boundary only, since the past case follows by time-duality. Suppose $\partial ^{+} A$ is not achronal, and let $p, q \in \partial ^{+} A$ such that $p\ll_g q$. Thus, we can pick open sets $U,V \subset I^+(A)$ such that $p \in U$, $q \in V$ and
\[
p' \in U, q' \in V \Longrightarrow p' \ll_g q'.
\]
Since $p,q \in {\rm Fr} A$ we can choose $p' \in U \cap (M \setminus A)$ and $q' \in V \cap A$. We can also choose $p'' \in A \cap I^-(p')$. But then $p' \in I^+(p'') \cap I^-(q')\cap (M \setminus A)$, and therefore $A$ cannot be causally convex, contrary to our assumption.

To show that $\partial ^{+} A$ is a $C^0$ hypersurface, since it is achronal we only need to show that $\partial ^{+}A \cap edge(\partial ^+ A) = \emptyset$. Let $p \in \partial ^{+}A$. Then, in particular $p \in I^+(q)$ for some $q \in A$. Pick any future-directed timelike curve $\gamma: [0,1] \rightarrow M$ such that $\gamma(0) \in I^+(q)\cap I^-(p)$ and $\gamma(1) \in I^+(p)$. In particular, $\gamma[0,1] \subset I^+(A)$. The achronality of $\partial ^{+} A$ implies that $\gamma(1) \notin A$. We claim that $\gamma(0) \in A$. If not, noting that $I^+(\gamma(0)) \cap A \neq \emptyset$ because $p \in I^+(\gamma(0)) \cap {\rm Fr} A$, we could then juxtapose a future-directed timelike curve from $q \in A$ to $\gamma(0)$ with another future-directed timelike curve from $\gamma(0)$ to some point in $A$, thereby obtaining a future-directed timelike curve with endpoints in $A$ but not entirely containing therein. This in turn would contradict the fact that $A$ is causally convex. We conclude that $\gamma$ intersects $\partial ^{+}A$, and therefore $p \notin edge(\partial ^{+} A)$.

Now assume $A$ is open and $A \cup \partial ^{+} A$ is not causally convex. Consider (say) a future-directed causal curve segment $\alpha:[0,1] \rightarrow M$ with $\alpha(0),\alpha(1) \in A\cup \partial ^{+} A$ and $\alpha(t_0) \neq A\cup \partial ^{+} A$ for some $t_0 \in (0,1)$. Since $A$ is open, we necessarily have $\alpha(0),\alpha(1) \in I^+(A)$. So we can pick $p \in A$ such that $p \ll_g \alpha(0)$. Thus $p\ll_g \alpha(t_0)$. But any future-directed timelike curve from $p$ to $\alpha(t_0)$ must intersect $\partial ^{+} A$, and hence $p \ll_g r \ll_g \alpha(t_0)$ for some $r \in \partial ^+A$. Now, either $\alpha(1) \in \partial ^+ A$ or else $\alpha(1) \in A$, but in any case $\alpha|_{(t_0,1)}$ must intersect $\partial ^+ A$. Thus, $\alpha(t_0) < q$ for some $ q \in \partial ^+ A$, and hence $r \ll_g q$, contradicting the achronality of $\partial ^+ A$.

\smallskip

\textit{\ref{convexitypast-3}} If $A$ is a past [resp. future] set, then ${\rm Fr} A \equiv \partial ^+A$ [resp. ${\rm Fr} A \equiv \partial ^-A$], and ${\rm Fr} A$ is known to be achronal by standard results in Causal Theory. Since $A$ is open, the fact that $A$ is causally convex now follows from \textit{\ref{convexitypast-2}}.
\end{proof}

\begin{proposition}
\label{convexitygh1}
Assume that $(M,g)$ is globally hyperbolic, and let $U \subset M$ be an open, connected, globally hyperbolic subset. Then, the following statements are equivalent.
\begin{enumerate}[label=(\roman*)]
\item \label{item:conv1} $U$ is causally convex.
\item \label{item:conv2} Any achronal set in $(U,g|_U)$ is achronal in $(M,g)$.
\item\label{item:conv3} Any Cauchy hypersurface $S$ in $(U,g|_U)$ is achronal in $(M,g)$.
\end{enumerate}
\end{proposition}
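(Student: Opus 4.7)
My plan is to establish the cyclic implications $(i) \Rightarrow (ii) \Rightarrow (iii) \Rightarrow (i)$. The first two are essentially immediate: for $(i) \Rightarrow (ii)$, if $U$ is causally convex and $A \subset U$ is $U$-achronal but contains $a_1, a_2$ with $a_1 \ll_g a_2$ in $M$, then any $M$-timelike curve joining them lies in $U$ by causal convexity, giving $a_1 \ll_{g|_U} a_2$ and contradicting $U$-achronality; and $(ii) \Rightarrow (iii)$ is clear because Cauchy hypersurfaces of $(U, g|_U)$ are $U$-achronal by definition.

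The main task is $(iii) \Rightarrow (i)$, which I would prove by contraposition. Assuming $U$ is not causally convex, I would exhibit a Cauchy hypersurface of $U$ failing to be achronal in $M$. Pick $p, q \in U$ and a future-directed causal curve $\alpha: [0,1] \to M$ with $\alpha(0) = p$, $\alpha(1) = q$ and $\alpha([0,1]) \not\subset U$, and set $t_+ := \sup\{t \in [0,1] : \alpha([0,t]) \subset U\}$; openness of $U$ forces $t_+ > 0$ while the exit of $\alpha$ from $U$ forces $t_+ < 1$. Hence $\alpha_0 := \alpha|_{[0,t_+)}$ is a future-directed causal curve in $(U, g|_U)$ starting at $p$ with no future endpoint in $U$, that is, future-inextendible therein. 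I would also pick any future-directed, future-inextendible timelike curve $\gamma$ in $U$ starting at $q$ (for instance, a maximally extended integral curve in $U$ of a global timelike vector field).

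By global hyperbolicity of $(U, g|_U)$, choose a Cauchy temporal function $\tau: U \to \mathbb{R}$ whose level sets $\Sigma_\lambda := \tau^{-1}(\lambda)$ are Cauchy hypersurfaces of $U$. Since both $\alpha_0$ and $\gamma$ are future-inextendible in $U$, $\tau$ is unbounded above along each, so for any $\lambda > \max\{\tau(p), \tau(q)\}$ there exist points $x \in \alpha_0 \cap \Sigma_\lambda$ and $y \in \gamma \cap \Sigma_\lambda$. Writing $x = \alpha(s_1)$ with $s_1 < t_+ < 1$, the tail $\alpha|_{[s_1, 1]}$ is an $M$-causal curve from $x$ to $q$, giving $x \leq_g q$; combining with $q \ll_g y$ from the timelike curve $\gamma$, Proposition \ref{p_opentrans}(b) yields $x \ll_g y$ in $M$, and chronology of $(M,g)$ then forces $x \neq y$. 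Thus $\Sigma_\lambda$ is a Cauchy hypersurface of $U$ containing two distinct $M$-chronologically related points, contradicting $(iii)$. The principal technical point here is arranging that the "in-$U$ initial segment" of $\alpha$ and the timelike extension $\gamma$ of $q$ reach a common Cauchy level $\Sigma_\lambda$, which the Cauchy property of $\tau$ guarantees once $\lambda$ is chosen above both $\tau(p)$ and $\tau(q)$; after that, producing the desired $M$-chronological relation is just a routine combination via Proposition \ref{p_opentrans}.
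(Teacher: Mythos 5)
Your proof is correct, and for the key direction (iii) $\Rightarrow$ (i) it follows the same underlying strategy as the paper---use the portion of the offending causal curve that is inextendible in $(U,g|_U)$, together with the Cauchy property, to exhibit a Cauchy hypersurface of $U$ containing two $M$-chronologically related points---but the implementation is genuinely different. The paper argues directly: after a WLOG reduction to timelike $\alpha$, it fixes a single Cauchy hypersurface $S\subset U$ with $\alpha(0)\in I^+(S,U)$, observes that then $\alpha(1)\in I^+(S,U)$ as well (otherwise achronality of $S$ in $(M,g)$ already fails), and notes that the past-inextendible final segment of $\alpha$ inside $U$ must meet $S$, so a point of $S$ below $\alpha(0)$ and that intersection point are chronologically related in $M$. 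You instead argue by contraposition using the future-inextendible initial segment $\alpha|_{[0,t_+)}$, an auxiliary future-inextendible timelike curve $\gamma$ starting at $q$, and a Cauchy temporal function $\tau$ on $U$ to synchronize the two curves on a common level set $\Sigma_\lambda$, after which push-up ($x\le_g q\ll_g y$) gives the violation. Your route avoids the paper's timelike reduction and the intermediate claim about $\alpha(1)$, at the price of invoking a Cauchy temporal function (Bernal--S\'anchez) and an extra curve; the paper's version is leaner, needing only one suitably chosen Cauchy hypersurface and the fact that curves inextendible in $U$ meet it. One small point to make explicit: $\alpha(t_+)\notin U$ (so that $\alpha|_{[0,t_+)}$ indeed has no future endpoint in $U$), which follows from openness of $U$ and the definition of $t_+$; the other details you leave implicit (unboundedness of $\tau$ along future-inextendible causal curves, existence of $\gamma$) are standard.
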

\begin{proof}
$\ref{item:conv1} \Rightarrow \ref{item:conv2} \Rightarrow \ref{item:conv3}$ are immediate. Therefore, we only need to prove $\ref{item:conv3} \Rightarrow \ref{item:conv1}$. Let $\alpha:[0,1] \rightarrow M$ be a future-directed causal curve with endpoints in $U$. Since $U$ is open, there will be no loss of generality in our argument if we assume that $\alpha$ is timelike. Let $S \subset U$ be a Cauchy hypersurface in $(U,g|_U)$ chosen so that  $\alpha(0) \in I^+(S,U)$. Hence, $\alpha(1) \in I^+(S,U)$ as well, for $\alpha(1) \in S \cup I^-(S,U)$ would mean a violation of the achronality of $S$ in $(M,g)$. Suppose $\alpha(t_0) \notin U$ for some $t_0 \in (0,1)$. Also, write
\begin{eqnarray}
s_0 &:=& \inf \{ s \in [0,1] \, : \, \alpha[s,1] \subset U \}, \\
r_0 &:=& \sup \{ t \in [0,1] \, : \, \alpha[0,t] \subset U \}.
\end{eqnarray}
Then the restrictions $\alpha_+:= \alpha|_{[0,r_0)}$ and $\alpha_{-}:= \alpha|_{(s_0,1]}$ are future- and past-inextendible in $(U,g|_U)$, respectively, and $r_0 < t_0 < s_0$. But then $\alpha_{-}$ must intersect $S$ at some $s_0< s'$, say, so
\[
p \ll_g \alpha(0) \ll_g \alpha(s')
\]
for some $p \in S$, which violates the achronality of $S$ in $(M,g)$. Hence, $\alpha[0,1] \subset U$.
\end{proof}

\begin{theorem}
\label{convexitygh2}
Assume that $(M,g)$ is globally hyperbolic, and let $U \subset M$ be an (non-empty) open, connected, causally convex, future-precompact and globally hyperbolic set. Then $\partial ^+ U$ is homeomorphic to a Cauchy hypersurface $S \subset U$ in $(U,g|_U)$.
\end{theorem}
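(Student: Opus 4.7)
The plan is to employ the Bernal--S\'anchez splitting theorem to fix a smooth spacelike Cauchy hypersurface $S\subset U$ together with a smooth diffeomorphism $\Psi:\mathbb{R}\times S\to U$ whose vertical lines $\gamma_x(t):=\Psi(t,x)$ are future-directed timelike Cauchy curves of $(U,g|_U)$. The candidate homeomorphism will be
\[
\phi:S\longrightarrow\partial^+ U,\qquad \phi(x):=\lim_{t\to\infty}\gamma_x(t),
\]
with the limit taken in $M$.

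To check that $\phi$ is well defined, I would note that, by future-precompactness of $U$, there is a compact $K\subset M$ with $U\subset I^-(K,M)$, and hence $\gamma_x([0,\infty))\subset J^+(x,M)\cap J^-(K,M)$, which is compact by the global hyperbolicity of $M$. Proposition \ref{imprisoned} then rules out $\gamma_x|_{[0,\infty)}$ being future-inextendible in $M$, so it must admit a future endpoint $\phi(x)\in M$; since $\gamma_x$ is future-inextendible in $U$ we have $\phi(x)\notin U$, and $\phi(x)\in\overline{U}\cap I^+(x)\subset\overline{U}\cap I^+(U)$ places $\phi(x)\in\partial^+U$. Continuity of $\phi$ will then follow from smooth dependence of the flow $\Psi$ on initial data: given $x_n\to x$ in $S$, the endpoints $\phi(x_n)$ remain in a common compact subset of $M$, and a limit-curve argument together with achronality of $\partial^+ U$ (Proposition \ref{convexitypastfuture}(ii)) forces any subsequential limit to coincide with $\phi(x)$.

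For surjectivity, given $p\in\partial^+U$ I would pick $q\in U\cap I^-(p,M)$ and a future-directed timelike curve $\alpha:[0,1]\to M$ from $q$ to $p$. By Proposition \ref{convexitypastfuture}(ii), $U\cup\partial^+U$ is causally convex in $M$, and the achronality of $\partial^+U$ together with the timelike character of $\alpha$ confines $\alpha([0,1))$ to $U$, making $\alpha|_{[0,1)}$ a future-inextendible timelike curve in $U$ which crosses $S$ at a unique $x_0$. Both $\alpha|_{[0,1)}$ and $\gamma_{x_0}|_{[0,\infty)}$ generate the same TIP $I^-(p,M)\cap U$ of $U$ (by openness of $\ll$ plus convergence $\gamma_{x_0}(t)\to\phi(x_0)\in\partial^+U$), and the assignment $p\mapsto I^-(p,M)\cap U$ is injective on $\partial^+U$ (two distinct $p_1,p_2\in\partial^+U$ with identical intersections would yield, by approximation through $U$ and the causal simplicity of $M$, $p_1\le p_2\le p_1$, contradicting achronality of $\partial^+U$); hence $\phi(x_0)=p$.

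The main obstacle will be the injectivity of $\phi$. If $\phi(x_1)=\phi(x_2)=p$ with $x_1\ne x_2$, then $\gamma_{x_1},\gamma_{x_2}$ are disjoint integral curves of $\Psi_*\partial_t$ accumulating at the same point $p\in\partial^+U$ in $M$. My plan is to extend $\Psi$ to a continuous map $\overline{\Psi}:[0,\infty]\times S\to\overline{U}$ by setting $\overline{\Psi}(\infty,x):=\phi(x)$, and to establish injectivity on $\{\infty\}\times S$ by combining the acausality of $S$ in $M$ (Proposition \ref{convexitygh1}), causal convexity of $U\cup\partial^+U$, and achronality of $\partial^+U$. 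Specifically, the coincidence $\phi(x_1)=\phi(x_2)$ forces, for each $s\ge 0$ and sufficiently large $t$, the relation $\gamma_{x_1}(s)\ll\gamma_{x_2}(t)$ in $M$ (by openness of $\ll$ together with $\gamma_{x_2}(t)\to p$), which via causal convexity transfers to $U$ and, through a limit-curve argument coupling both curves to the common endpoint $p$, eventually produces a chronological pair of points of $S$---contradicting its acausality in $M$. Once injectivity is secured, a standard local-compactness argument---exploiting that $\overline{U}\cap J^+(x_0,M)$ is compact for any $x_0\in U$ by future-precompactness plus global hyperbolicity---upgrades the continuous bijection $\phi$ to a homeomorphism.
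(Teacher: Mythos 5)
Your plan hinges on three properties of the endpoint map $\phi(x)=\lim_{t\to\infty}\gamma_x(t)$ along the vertical lines of a splitting of $(U,g|_U)$, and none of them follows from the ingredients you actually invoke; the first two can genuinely fail at that level of generality. The root problem is that the vertical lines live entirely inside $U$ and reach $\partial^+U$ only as ideal limit points at infinite parameter, while achronality of $\partial^+U$ controls only \emph{chronological} (not causal/null) relations between its points. Injectivity: two disjoint inextendible timelike curves in $U$ can share the same future endpoint on $\partial^+U$ (in a Minkowski diamond, in null coordinates, $s\mapsto(1-e^{-s},\,2-e^{-s})$ and $s\mapsto(1-2e^{-s},\,2-e^{-s})$ both end at the same boundary point), and both can occur as leaves of a product foliation by timelike Cauchy curves; your attempted contradiction only yields $\gamma_{x_1}(s)\ll\gamma_{x_2}(t)$ for large $t$, a relation symmetric in $1\leftrightarrow 2$ that never produces two chronologically related points of $S$, so acausality of $S$ is not violated and no contradiction results. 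Continuity: if $x_n\to x$ and $\phi(x_{n_k})\to q$, closedness of the causal relation gives only $\phi(x)\le q$ with $q\in\partial^+U$; achronality excludes $\phi(x)\ll q$ but not $q$ lying further along a \emph{null} generator of $\partial^+U$, and one can build product foliations of the diamond by inextendible timelike curves whose endpoint map jumps along the null portion of $\partial^+U$ in exactly this way (nearby leaves shadow $\gamma_x$ on any compact parameter range and then run almost-null up the boundary). So continuity would require some special feature of the Bernal--S\'anchez splitting that you neither identify nor use. Surjectivity: you assert that $\alpha|_{[0,1)}$ and $\gamma_{x_0}$ generate the same TIP because they cross $S$ at the same point $x_0$, but distinct inextendible timelike curves in $U$ through $x_0$ generically have distinct endpoints on $\partial^+U$; this step silently assumes the very conclusion $\phi(x_0)=p$.

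The paper's proof sidesteps all of this by using a complete timelike vector field $X$ of the \emph{ambient} spacetime $M$ instead of the splitting of $U$: the flow line through $p\in S$ is contained, while in $U$, in a compact set $J^+(p)\cap J^-(K)$, so it exits $U$ at a \emph{finite} flow time and meets the achronal hypersurface $\partial^+U$ at exactly one point $\varphi(p)$. Injectivity is then automatic, since distinct points of $S$ (achronal in $M$ by Proposition \ref{convexitygh1}) lie on disjoint flow lines and the crossing point lies on the flow line itself; continuity holds because causal convexity of $U$ together with achronality of $\partial^+U$ forces the flow line to lie in $U$ strictly before the exit time and outside $\overline U$ strictly after it, making the exit time continuous; openness then comes from Invariance of Domain, and surjectivity from flowing backwards from any $q\in\partial^+U$ into $U$ (using causal convexity, as in Proposition \ref{convexitypastfuture}) until the line meets $S$. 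To repair your proof you would essentially have to replace the vertical lines of the splitting by such an ambient flow, i.e., adopt the paper's construction.
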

\begin{proof}
$S$ is a connected $C^0$ hypersurface in $M$. By Proposition \ref{convexitygh1}, it is achronal in $(M,g)$. Let $K \subset M$ be some compact set such that $U \subset I^-(K)$. In particular, $U$ cannot be a future set, and hence $\partial ^+ U$ is also a non-empty achronal $C^0$ hypersurface by Proposition \ref{convexitypastfuture} \textit{\ref{convexitypast-1}}-\textit{\ref{convexitypast-2}}. Let $X:M \rightarrow TM$ be a complete future-directed timelike smooth vector field in $(M,g)$, and denote its flow by $\phi$. We shall define a continuous mapping $\varphi : S \rightarrow \partial ^+ U$ as follows. Given $p \in S$, the future-inextendible timelike curve
\[
t \in [0, +\infty) \mapsto \phi_t(p)
\]
staring at $p$ must leave the compact set $J^+(p)\cap J^-(K)$, and hence must leave $U$. When its does, it will intersect $ \partial ^+ U$, and since this set is achronal, will do so at a unique point, which we define as $\varphi(p)$. The mapping thus defined is obviously one-to-one, and hence open by Invariance of Domain. Therefore, in other to show that $\varphi$ is the desired homeomorphism we only need to show that it is onto. Accordingly, let $q \in \partial ^+ U$, and consider the inextendible timelike curve
\[
\gamma : s \in (-\infty, +\infty) \mapsto \phi_s(q).
\]
Then, for some $\varepsilon >0$, $\gamma[-\varepsilon,0] \subset I^+(U)$.

We claim that $\gamma(-\varepsilon) \in U$. If not, noting that $I^+(\gamma(-\varepsilon)) \cap U \neq \emptyset$ because $q \in I^+(\gamma(-\varepsilon)) \cap \partial U$, we could then juxtapose a future-directed timelike curve from some $r \in U$ to $\gamma(-\varepsilon)$ with another future-directed timelike curve from $\gamma(\varepsilon)$ to some point in $U$, thereby obtaining a future-directed timelike curve with endpoints in $U$ but not entirely containing therein. This in turn would contradict the fact that $U$ is causally convex. We conclude that $\gamma$ intersects $U$, and hence must intersect $S$. If this intersection occurred at the future of $q$ this would violate the achronality of $\partial ^+ U$, since we saw above that the future of any point in $S$ intersects $\partial ^+ U$. Therefore, for some $s_0 < 0$, $\phi_{s_0}(q) \in S$. But then we of course have $\varphi (\phi_{s_0}(q)) \equiv q$, thus concluding the proof.
\end{proof}

\medskip

In the spirit of improving Definition \ref{def3} for concrete uses, we use the following notions, inspired by the approach of \cite{OlafConformal}.
\begin{definition}
\label{nested}
A conformal extension $(\tilde{M},\tilde{g})$ of a globally hyperbolic spacetime $(M,g)$ is {\em future-nesting} if
\begin{itemize}
\item[N1)] $(\tilde{M},\tilde{g})$ is also globally hyperbolic,
\item[N2)] $M \subset \tilde{M}$ is causally convex and future-precompact.
\end{itemize}
\end{definition}

Proposition \ref{convexitypastfuture} and Theorem \ref{convexitygh2} immediately give the following regularity result.
\begin{corollary}
\label{convexitygh3}
If $(\tilde{M},\tilde{g})$ is a future-nesting conformal extension of a globally hyperbolic spacetime $(M,g)$, then the corresponding future conformal boundary $\partial^+ M$ is an achronal $C^0$ hypersurface in $(\tilde{M},\tilde{g})$ homeomorphic to a Cauchy hypersurface in $(M,g)$.
\end{corollary}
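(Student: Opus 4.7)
The strategy is to observe that Corollary \ref{convexitygh3} is almost a direct application of Theorem \ref{convexitygh2} with the ambient spacetime being $(\tilde{M},\tilde{g})$ and the subset being $U:=\varphi(M)$ (identified with $M$ via the conformal embedding $\varphi$, per the convention of Remark \ref{rmk1}(4)). All the hypotheses of Theorem \ref{convexitygh2} are built directly into the notion of a future-nesting conformal extension: $U$ is open (as $\varphi$ is an open embedding) and connected (since $M$ is); by N1, the ambient $(\tilde{M},\tilde{g})$ is itself globally hyperbolic; and by N2, $U$ is causally convex and future-precompact in $\tilde{M}$.

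The one point needing explicit mention is the global hyperbolicity of $(U,\tilde{g}|_U)$ as a spacetime on its own right. I would settle this by noting that $\tilde{g}|_U=\Omega^{2}g$ and $\varphi$ is a conformal diffeomorphism onto $U$; since global hyperbolicity depends solely on the causal structure, which is conformally invariant, the global hyperbolicity of $(M,g)$ transfers to $(U,\tilde{g}|_U)$. With this in hand, Theorem \ref{convexitygh2} delivers a homeomorphism between $\partial^+ U \equiv \partial^+ M$ and a Cauchy hypersurface $S$ of $(U,\tilde{g}|_U)$, while the achronality and $C^{0}$-hypersurface character of $\partial^+ M$ in $(\tilde{M},\tilde{g})$ follow from Proposition \ref{convexitypastfuture}\ref{convexitypast-2} (non-emptiness is automatic, because future-precompactness prevents the non-empty open set $U$ from being a future set, and one then invokes Proposition \ref{convexitypastfuture}\ref{convexitypast-1}).

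To land the statement in the form given, namely homeomorphism with a Cauchy hypersurface of $(M,g)$ rather than $(U,\tilde{g}|_U)$, I would simply remark that $\varphi^{-1}$ is a conformal diffeomorphism carrying Cauchy hypersurfaces of $(U,\tilde{g}|_U)$ to Cauchy hypersurfaces of $(M,g)$, again by conformal invariance of causal structure; composing the homeomorphism of Theorem \ref{convexitygh2} with $\varphi^{-1}|_{S}$ completes the argument. The main obstacle here is purely conceptual rather than technical: one must be careful to distinguish between causal notions computed in $(U,\tilde{g}|_U)$ versus those computed in the ambient $(\tilde{M},\tilde{g})$, and to make explicit the conformal invariance that lets us freely switch between $(M,g)$ and $(U,\tilde{g}|_U)$. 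Since the substantive geometric work has been performed in Proposition \ref{convexitypastfuture} and Theorem \ref{convexitygh2}, the corollary itself reduces to this bookkeeping.
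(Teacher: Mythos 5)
Your proposal is correct and follows essentially the same route as the paper, which states the corollary as an immediate consequence of Proposition \ref{convexitypastfuture} and Theorem \ref{convexitygh2} applied to $U=\varphi(M)\subset\tilde{M}$ with the future-nesting conditions N1--N2 supplying the hypotheses. The only additions you make are to spell out the conformal-invariance bookkeeping (global hyperbolicity of $(U,\tilde{g}|_U)$ and the transfer of Cauchy hypersurfaces back to $(M,g)$), which the paper leaves implicit.
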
\qcd

\begin{remark}\label{rmk2}
{\em The standard conformal extensions, in the Einstein cylinder, of Minkowski spacetime, past-incomplete Friedmann-Robertson-Walker models with non-negative cosmological constant, and de Sitter spacetime are future-nesting in this sense. The standard conformal extension of the Schwarzschild-Kruskal plane is of course also future-nesting.}

\end{remark}

We are finally ready to state and prove the main result of this section. It essentially means that for future-nesting conformal extensions, the future conformal boundary coincides with the future $c$-boundary when we adopt the CLT in the latter. Moreover, the conformal (and hence the $c$-)boundary is large enough to contain all the future endpoints of future-inextendible null geodesics in this case.

We shall need a technical lemma.
\begin{lemma}
\label{technical}
Let $(\tilde{M},\tilde{g})$ be a conformal extension of a spacetime $(M,g)$ for which $M \subset \tilde{M}$ is causally convex. Given $p \in \partial ^+ M$, $q \in I^-(p, \tilde{M})$ and any future-directed timelike curve $\alpha:[0,1] \rightarrow \tilde{M}$ from $q$ to $p$, there exists some $0<\varepsilon <1$ for which $\alpha[1-\varepsilon,1) \subset M$.
\end{lemma}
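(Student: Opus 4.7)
The plan is to argue by contradiction: if no such $\varepsilon$ exists, then there is a sequence $t_n \nearrow 1$ with $\alpha(t_n) \notin M$, and I will derive a contradiction with the causal convexity of $M \subset \tilde M$ by exhibiting a timelike curve in $\tilde M$ through each $\alpha(t_n)$ whose two endpoints lie in $M$.

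First, I would exploit the definition $p \in \partial^+ M = \partial M \cap I^+(M,\tilde g)$ to obtain an anchor point below $p$ in $M$: pick $r \in M$ with $r \ll_{\tilde g} p$. Since $I^+(r,\tilde M)$ is $\tilde g$-open and contains $p = \alpha(1)$, continuity of $\alpha$ gives $r \ll_{\tilde g} \alpha(t_n)$ for all sufficiently large $n$. Next, I would find an anchor point above $\alpha(t_n)$ in $M$. Because $\alpha$ is future-directed timelike and $t_n < 1$, one has $\alpha(t_n) \ll_{\tilde g} p$, so $I^+(\alpha(t_n),\tilde M)$ is an open neighborhood of $p$; since $p$ lies in the topological frontier $\partial M$, this neighborhood meets $M$, and I can select $r_n \in M \cap I^+(\alpha(t_n),\tilde M)$.

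Concatenating a future-directed timelike curve in $\tilde M$ from $r$ to $\alpha(t_n)$ with one from $\alpha(t_n)$ to $r_n$ yields a future-directed timelike curve with both endpoints in $M$ but passing through the point $\alpha(t_n) \notin M$, contradicting the hypothesis that $M$ is causally convex in $\tilde M$. The only conceptual subtlety is ensuring that anchor points on \emph{both} sides of $\alpha(t_n)$ can be chosen inside $M$; the lower anchor is automatic from $p \in I^+(M,\tilde g)$, while the upper one relies on combining $\alpha(t_n) \ll p$ with $p \in \partial M$ to intersect an open future with $M$. No other delicate issue arises, so the argument is short and purely causal.
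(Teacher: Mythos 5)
Your argument is correct and follows essentially the same route as the paper's proof: both obtain a point of $M$ chronologically below $\alpha(t)$ for $t$ near $1$ (the paper via $\alpha(t)\in I^+(M,\tilde{M})$ by openness, you via a fixed $r\in M$ with $r\ll_{\tilde g}p$ and openness of $I^+(r,\tilde{M})$), then a point of $M$ above $\alpha(t)$ from $\alpha(t)\ll_{\tilde g}p$ together with $p\in\partial M$, and conclude by contradicting the causal convexity of $M$ in $\tilde{M}$. The only cosmetic difference is that you argue by contradiction along a sequence $t_n\nearrow 1$, whereas the paper argues directly for every $t_0$ in a suitable terminal interval.
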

\begin{proof}
Since $p \in I^+(M,\tilde{M})$ and the latter set is open, for some $0< \varepsilon <1$, $\alpha[1- \varepsilon,1] \subset I^+(M,\tilde{M})$ by continuity. Let $t_0 \in [1-\varepsilon,1)$. Since $p \in I^+(\alpha(t_0),\tilde{M})$, $I^+(\alpha(t_0),\tilde{M} )\cap M \neq \emptyset$ since $p$ is a boundary point. Choose a future-directed timelike curve segment $\beta:[0,1] \rightarrow \tilde{M}$ starting at $\alpha(t_0)$ and ending on $M$. Thus, if $\alpha(t_0) \notin M$, we could choose a timelike curve segment in $\tilde{M}$ with endpoints in $M$ which leaves $M$, violating the causal convexity of $M$. Therefore, $\alpha(t_0) \in M$, and we then conclude that $\alpha[1-\varepsilon,1) \subset M$. \end{proof}

We are ready to state and prove the main result on this section.

\begin{theorem}
\label{causaltoconformal}
Let $(\tilde{M},\tilde{g})$ be a future-nesting conformal extension of a globally hyperbolic spacetime $(M,g)$. Then
\begin{enumerate}[label=(\arabic*)]
\item $(M \cup \partial^+ M ,\ll_{\tilde{g}},\varphi)$ is a future completion of $(M,g)$ (in the sense of section \ref{prelim}), where $\varphi :M \hookrightarrow M \cup \partial^+ M$ is (induced by) the inclusion of $M$ into $\tilde{M}$.
\item $M\cup \partial^+ M \subset \tilde{M}$ with the induced topology is homeomorphic to the set $\hat{M}$ of non-empty IPs in $(M,g)$ with the CLT topology $\tau_c$. The homeomorphism can be chosen so that it maps $\partial ^+ M$ onto $\hat{\partial} M$ and coincides with the inclusion $i: p \in M \mapsto I^-(p) \in \hat{M}$ on $M$.
\item Every null geodesic in $(M,g)$ which is future-inextendible therein has a future endpoint on $\partial ^+M$.
\end{enumerate}
\end{theorem}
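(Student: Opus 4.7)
The plan is to handle (3) first (since the surjectivity argument in (1) needs it), then to define a natural identification map $\Psi:M\cup \partial^+M\to \hat M$ by $\Psi(p):=I^-(p,\tilde g)\cap M$, show it is a chronological isomorphism (giving (1) via Proposition \ref{ghlater}), and finally show it is a homeomorphism onto $(\hat M,\tau_c)$ (giving (2)). For (3), let $\gamma:[0,a)\to M$ be a future-inextendible null geodesic in $(M,g)$, viewed as a future-directed causal curve in $(\tilde M,\tilde g)$ via the inclusion. Future-precompactness yields a compact $K\subset \tilde M$ with $M\subset I^-(K,\tilde M)$, so $\gamma\subset J^+(\gamma(0),\tilde M)\cap J^-(K,\tilde M)$, which is compact by global hyperbolicity of $\tilde M$. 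If $\gamma$ were future-inextendible in $\tilde M$, Proposition \ref{imprisoned} applied to $\tilde M$ would force it to leave every compact set, a contradiction. Hence $\gamma$ has a future endpoint $p\in \tilde M$; $p\notin M$ by inextendibility in $M$, and $p\in I^+(\gamma(0),\tilde M)$ places $p$ in $\partial M\cap I^+(M,\tilde M)=\partial^+M$.

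For (1), define $\Psi$ as above. For $p\in M$, causal convexity of $M$ in $\tilde M$ and conformal invariance give $\Psi(p)=I^-(p,g)$, a PIP. For $p\in \partial^+M$, Lemma \ref{technical} applied to any timelike curve $\alpha$ in $\tilde M$ ending at $p$ yields a terminal tail inside $M$; causal convexity then shows $\Psi(p)=I^-(\alpha|_{[1-\varepsilon,1)},M)$, an IP in $M$ by Proposition \ref{PipST}(3), and a TIP because a future limit in $M$ would, via Lemma \ref{technical}, force $p\in M$. Injectivity of $\Psi$ uses Lemma \ref{technical} to upgrade $\Psi(p)=\Psi(q)$ to $I^-(p,\tilde g)=I^-(q,\tilde g)$, then invokes past-distinguishing of $\tilde M$; surjectivity combines the PIP case with (3) applied to a timelike curve generating any given TIP, together with directedness of IPs to match the pasts. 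Chronology preservation of $\Psi$ is direct from causal convexity; the reverse direction uses past-determination of $(\tilde M,\ll_{\tilde g})$ (Proposition \ref{causalityforpastdetermination}) combined with Lemma \ref{technical}. Proposition \ref{ghlater} then delivers (1).

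For (2), equip $M\cup \partial^+M$ with the topology induced from $\tilde M$ and argue by sequences using Theorem \ref{CLT}\textit{\ref{CLT-3}}. Continuity of $\Psi$ mirrors the argument of Theorem \ref{thething}\textit{\ref{thething-2}} carried out inside $(\tilde M,\tilde g)$, with a final intersection with $M$ controlled by Lemma \ref{technical}: points in $\limsup \Psi(p_n)$ are $\tilde g$-causally below $p$ and hence in $\mathrm{cl}(\Psi(p))$, while points in $\Psi(p)$ are $\tilde g$-chronologically below $p$ and hence eventually below $p_n$. For continuity of $\Psi^{-1}$, suppose $\Psi(p_n)\to \Psi(p)$ in $\tau_c$; pick $r_0\in P:=\Psi(p)$ together with a companion $r_0'\ll_g r_0$ in $P$ (using directedness of $P$). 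Since $\mathrm{cl}(P)\subset \liminf \Psi(p_n)$, there are eventually $r_n\in \Psi(p_n)$ with $r_n\to r_0$, so $r_0'\ll_{\tilde g} r_n\ll_{\tilde g} p_n$ for $n$ large, confining $(p_n)$ to the compact set $J^+(r_0',\tilde M)\cap J^-(K,\tilde M)$. Any subsequential limit $\tilde p$ lies in $\overline M\cap I^+(M,\tilde M)=M\cup \partial^+M$, satisfies $\Psi(\tilde p)=\Psi(p)$ by continuity of $\Psi$, and hence equals $p$ by injectivity; the usual subsequence argument then delivers $p_n\to p$.

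The main obstacle I expect is continuity of $\Psi^{-1}$: one must simultaneously confine $(p_n)$ to a compact subset of $\tilde M$ (using both the $\tau_c$-convergence of the $\Psi(p_n)$ and the future-precompactness of $M$), and exclude subsequential limits landing in the ``bad'' part $\partial M\setminus \partial^+M$ of the topological boundary. Both control issues ultimately reduce to Lemma \ref{technical} and causal convexity of $M\subset \tilde M$, which is the same technical engine that drives the bijection and its chronological properties in (1).
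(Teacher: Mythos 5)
Your proposal is correct and, in its skeleton, coincides with the paper's proof: the same map $\Psi(p)=I^-(p,\tilde{M})\cap M$, the same reliance on Lemma \ref{technical}, causal convexity and future-precompactness, Proposition \ref{ghlater} for item (1), and Theorem \ref{CLT} for translating $\tau_c$-convergence into Hausdorff closed limits. The local differences are worth noting: for the reverse chronology you invoke past-determination of $(\tilde{M},\tilde{g})$ where the paper uses closedness of the causal relation ($p\in{\rm cl}(I^-(x,\tilde{M}))=J^-(x,\tilde{M})$) — essentially equivalent; and for continuity of $\Psi^{-1}$ you give a compactness/subsequence argument (confine $(p_n)$ to $J^+(r_0',\tilde{M})\cap J^-(K,\tilde{M})$, show every subsequential limit lies in $M\cup\partial^+M$ and equals $p$ by continuity of $\Psi$ plus injectivity and Hausdorffness of $\tau_c$), whereas the paper instead proves directly that ${\rm cl}_{\tilde{M}}(I^-(p,\tilde{M}))$ is the Hausdorff closed limit of $(I^-(p_n,\tilde{M}))$ in $\tilde{M}$ and then cites Theorem \ref{thething}; your variant is valid and arguably more transparent, at the cost of using metrizability/sequentiality of both topologies. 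Two small slips, both harmless: in (3) the endpoint $p$ of the null geodesic need not lie in $I^+(\gamma(0),\tilde{M})$ (the geodesic may remain achronal in $\tilde{M}$), but $p\in I^+(M,\tilde{M})$ follows at once by pushing off through any $q\in I^-(\gamma(0),M)$, which is all you need; and in the surjectivity step one cannot literally "apply (3)" to a timelike generator, but the identical imprisonment argument (or the paper's generating-chain version) gives the endpoint on $\partial^+M$.
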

\begin{proof}  Consider the map
\[
\Psi : p \in M\cup \partial^+ M \mapsto I^-(p,\tilde{M}) \cap M \in \hat{M}.
\]
Note that indeed $\Psi(p) \equiv I^-(p)$ for every $p \in M$ since $M$ is causally convex.

\smallskip

\noindent {\em Claim 1:} $\Psi$ is well-defined.\\
Let $p \in M\cup \partial^+ M$ be given. Now, $I^-(p,\tilde{M}) \cap M$ is clearly non-empty and open in $M$, so we need to show it is an IP. Let $q \in I^-(p,\tilde{M}) \cap M$. Pick $q'\in M$ such that $q'\ll_g q$. Thus, $q'\ll_{\tilde{g}} p$, and hence $q'\in I^-(p,\tilde{M}) \cap M$, so the latter set is a past set in $(M,g)$. Now, let $x,y \in I^-(p,\tilde{M}) \cap M$. Since $I^-(p,\tilde{M})$ is directed in $(\tilde{M},\ll_{\tilde{g}})$, we can find $z \in I^-(p,\tilde{M})$ such that $x,y \ll_{\tilde{g}} z$. Since $M$ is in particular open and causally convex in $\tilde{M}$, $M \cup \partial^+ M$ is causally convex by Proposition \ref{convexitypastfuture} (ii), so that $z \in M \cup \partial^+ M$. Now, either $p \in M$, so that $z\in M$ from the causal convexity of $M$, or else $p \in \partial^+ M$, and in this case the achronality of the future boundary precludes $z \in \partial^+ M$, so $z \in M$ in any case. Again, since $M$ is causally convex we have $x,y \ll_{g} z$. Therefore, we conclude that $I^-(p,\tilde{M}) \cap M$ is directed in $(M,g)$, and hence an IP (cf. Prop. \ref{PipST}(3)). Thus, $I^-(p,\tilde{M}) \cap M \in \hat{M}$.

\smallskip

$(1)$ We now wish to show that $\Psi$ is a chronological bijection with a chronological inverse, and apply Proposition \ref{ghlater}.

\smallskip

\noindent {\em Claim 2:} $\Psi$ is one-to-one.\\
Suppose $I^-(p,\tilde{M}) \cap M =I^-(q,\tilde{M}) \cap M$, for $p,q \in M\cup\partial^+ M$. Given $r \in I^-(p,\tilde{M})$, let $\alpha:[0,1] \rightarrow M$ be a future-directed timelike curve segment from $r$ to $p$. We claim that $\alpha$ intersects $M$. If $p \in M$ this is trivial, so we can assume $p \in \partial^+ M$, in which case the claim follows from Lemma \ref{technical}. Thus, $r \ll_{\tilde{g}} r'$ for some $r'\in I^-(p,\tilde{M}) \cap M =I^-(q,\tilde{M}) \cap M$. We conclude that $r \in I^-(q,\tilde{M})$, and hence that $I^-(p,\tilde{M}) \subset I^-(q,\tilde{M})$. We analogously show that $I^-(q,\tilde{M}) \subset I^-(p,\tilde{M})$, and hence $I^-(p,\tilde{M}) = I^-(q,\tilde{M})$. Since $(\tilde{M},\tilde{g})$ is strongly causal, it is in particular past-distinguishing, and thus $p=q$.

\smallskip

\noindent{\em Claim 3:} $\Psi$ is onto.\\
Let $P$ be a non-empty IP in $M$. Suppose first $P$ is a PIP, and write $P = I^-(p,M)$, for some $p \in M$. Clearly, $I^-(p,M) \subset I^-(p,\tilde{M}) \cap M$, and since $M$ is causally convex, $I^-(p,\tilde{M}) \cap M \subset I^-(p,M)$, and hence $P = \Psi(p)$. Note that this also establishes that
\[
\Psi \circ \varphi = i,
\]
where $i:M \hookrightarrow \hat{M}$ is the standard inclusion.

Now, assume $P$ is a TIP in $(M,g)$, and let $(x_n)$ be a future-endless chain generating $P$. Since $(\tilde{M},\tilde{g})$ is a future-nesting conformal extension, we may pick a compact set  $K \subset \tilde{M}$ such that $M \subset J^-(K, \tilde{M})$, and therefore $(x_n)$ is contained in the compact set $J^+(x_1,\tilde{M})\cap J^-(K, \tilde{M})$. Passing to a subsequence if necessary, we may then assume that $x_n \rightarrow p$ for some $p \in \partial M$. Since $x_1 <_{\tilde{g}} p$, and $q \ll_g x_1$ for some $q \in P \subset M$, $p \in I^+(M, \tilde{M})$. Therefore, $p \in \partial^+ M$. However, $(x_n)$ is also a future-chain in the strongly causal spacetime $(\tilde{M},\tilde{g})$. By Proposition \ref{PipST} (4), we conclude that $p$ is the future limit in $\tilde{M}$ of this chain. In particular, $x_n \ll_{\tilde{g}} p$ for all $n \in \mathbb{N}$. Therefore,
\[
P  \subset I^-(p,\tilde{M}) \cap M .
\]
Again, the fact that $p$ is a future limit means that for $r \in I^-(p,\tilde{M}) \cap M$ we must have $r \ll_{\tilde{g}} x_n$ for large enough $n$, and thus $r \ll_g x_n$ for such $n$, since $M$ is causally convex. Therefore $r \in P$ since the latter is a past set. We again conclude that $P = \Psi(p)$.\\

We have now shown that $\Psi$ is a bijection. It is clear that $\Psi$ is chronological. To show that $\Psi ^{-1}$ is also chronological, take $P,Q \in \hat{M}$ with $P \ll Q$. Let $p,q \in M \cup \partial ^+ M$ such that
\[
P= I^-(p,\tilde{M}) \cap M \mbox{ and } Q = I^-(q,\tilde{M}) \cap M.
\]
For some $x \in Q$, $P \subset I^-(x,M)$, and therefore $p \in {\rm cl}_{\tilde{M}}(I^-(x,M)) \equiv J^-(x, \tilde{M})$. Thus
\[
p \leq_{\tilde{g}} x \ll_{\tilde{g}} q \Longrightarrow p \ll_{\tilde{g}} q,
\]
thus showing that $\Psi ^{-1}(P) \ll_{\tilde{g}} \Psi ^{-1}(Q)$ as desired. This concludes the proof of $(1)$.

\smallskip

$(2)$ We wish to show that with the indicated topologies, $\Psi$ is a homeomorphism. First we note the following.

\smallskip

\noindent {\em Claim 4.} $\Psi(\partial ^+ M) = \hat{\partial} M$. \\
This equality is tantamount, for $p \in M \cup \partial ^+ M$, to the equivalence
\[
p \in \partial ^+ M \Longleftrightarrow I^-(p,\tilde{M}) \cap M \mbox{ is a TIP in $M$}.
\]
The ``only if'' part is clear, since if $p \in \partial ^+ M$, then $p$ can be viewed as the future limit of a future-directed chain of elements in $I^-(p,\tilde{M}) \cap M$ which is necessarily future-endless as a chain in $M$ as $p \notin M$. To show the converse, just note that if $P:= I^-(p,\tilde{M}) \cap M$ is a TIP in $M$, then $p \notin M$, for otherwise
\[
I^-(p,\tilde{M}) \cap M \equiv I^-(p,M)
\]
from the causal convexity of $M$, and $P$ would then be a PIP in $M$, a contradiction. Thus $p \in \partial ^+ M$, proving Claim 4.

\smallskip

\noindent {\em Claim 5.} $\Psi$ is continuous.\\
Let $(p_n)$ be a sequence in $M \cup\partial ^+ M$ converging to some $p$ therein. We wish to show that ${\rm cl}_M(I^-(p,\tilde{M}) \cap M)$ is the Hausdorff closed limit in $M$ of the sequence $(P_n:=I^-(p_n,\tilde{M}) \cap M))$. The key remark here is that the proof of Theorem \ref{thething} (ii) applied to $(\tilde{M},\tilde{g})$ means that ${\rm cl}_{\tilde{M}}(I^-(p,\tilde{M}))$ is actually the Hausdorff closed limit of $(I^-(p_n,\tilde{M}))$ in $\tilde{M}$. (Note that $p_n \rightarrow p$ in $\tilde{M}$ since we are using the induced topology.)

Let $x \in \limsup (P_n)$. Given any $\tilde{U} \ni x$ open set in $\tilde{M}$, $\tilde{U}\cap M$ is a open set in $M$, and hence infinitely many $P_n$'s intersect it. In other words, infinitely many $I^-(p_n,\tilde{M})$'s intersect $\tilde{U}$. Therefore, $x \in \limsup (I^-(p_n,\tilde{M}))$ in $\tilde{M}$, and from the key remark above $x \in {\rm cl}_{\tilde{M}}(I^-(p,\tilde{M}))\cap M$. However, since $M$ is open in $\tilde{M}$, any open set in $M$ containing $x$ is also open in $\tilde{M}$, and hence must intersect $I^-(p,\tilde{M})\cap M$. We conclude that $x \in {\rm cl} _M(I^-(p,\tilde{M})\cap M)$.

Now, let $y \in {\rm cl}_M(I^-(p,\tilde{M})\cap M) \subset {\rm cl}_{\tilde{M}}(I^-(p,\tilde{M}))$, and $U \ni y$ open set in $M$. Since $U$ is also open in $\tilde{M}$, the key remark above now implies that $x \in \liminf (I^-(p_n,\tilde{M}))$ in $\tilde{M}$, and hence there exists $n_0 \in \mathbb{N}$ such that $U$ intersects $I^-(p_n,\tilde{M})$ for all $n \in \mathbb{N}$ larger than $n_0$. Then $U \cap P_n \neq \emptyset$ for all large enough $n$, i.e., $y \in \liminf (P_n)$ as desired.  This completes the proof of Claim 5.

\smallskip

\noindent {\em Claim 6.} $\Psi ^{-1}$ is continuous.\\
Let $(p_n)$ be a sequence in $M \cup\partial ^+ M$ and $p \in  M \cup\partial ^+ M$, such that ${\rm cl}_M(I^-(p,\tilde{M}) \cap M)$ is the Hausdorff closed limit in $M$ of the sequence $(I^-(p_n,\tilde{M}) \cap M)$. We now wish to show that ${\rm cl}_{\tilde{M}}(I^-(p,\tilde{M}))$ is the Hausdorff closed limit of $(I^-(p_n,\tilde{M}))$ in $\tilde{M}$, because then Theorem \ref{thething} (ii) applied to $(\tilde{M},\tilde{g})$ will imply that $p_n \rightarrow p$ in $\tilde{M}$. This in turn will establish Claim 6.

Let $q \in \limsup (I^-(p_n,\tilde{M}))$ in $\tilde{M}$. Then for some subsequence $(p_k)$ of $(p_n)$ we can pick $(q_k)$ in $\tilde{M}$ converging to $q$ such that
\[
q_k \in I^-(p_k,\tilde{M}), \forall k.
\]
 A future-directed timelike curve from $q_k$ to $p_k$ must intersect $M$ by Lemma \ref{technical}, and hence we can choose $r_k \in I^-(p_k,\tilde{M}) \cap M$ with $q_k \ll_{\tilde{g}} r_k$, for each $k$. Pick any $q'\ll_{\tilde{g}} q$. Let $K\subset \tilde{M}$ be a compact set whose causal past in $(\tilde{M},\tilde{g})$ contains $M$, and hence also $M\cup \partial ^+ M$. Eventually, $r_k$ enters the compact set $J^+(q', \tilde{M}) \cap J^-(K,\tilde{M})$. We may then consider, up to passing to a subsequence, that $(r_k)$ converges to some $r \in \liminf(I^-(p_n,\tilde{M}) \cap M) \equiv {\rm cl}_M(I^-(p,\tilde{M}) \cap M) \subset J^-(p,\tilde{M})$. Closedness of the causal relation in $(\tilde{M},\tilde{g})$ then implies that $q \in J^-(p,\tilde{M}) \equiv {\rm cl}_{\tilde{M}}(I^-(p,\tilde{M}))$.

Now, let $x \in {\rm cl}_{\tilde{M}}(I^-(p,\tilde{M})) \equiv J^-(p,\tilde{M})$ be given. Let $\tilde{U} \ni x$ be an open set in $\tilde{M}$, and pick any $x' \in \tilde{U} \cap I^-(x, \tilde{M})$. Then $x' \in I^-(p,\tilde{M})$, and again any future-directed timelike curve from $x'$ to $p$ must intersect $M$ by Lemma \ref{technical}. We may then assume that there exists $y \in I^-(p,\tilde{M})\cap M$ such that $x' \ll _{\tilde{g}} y$. Since $I^+(x') \cap M$ is open in $M$ and contains $y$, for large enough $n \in \mathbb{N}$ we have $I^-(p_n,\tilde{M}) \cap M \cap I^+(x') \neq \emptyset$, and hence $x'\in I^-(p_n,\tilde{M})$, that is, $I^-(p_n,\tilde{M})\cap \tilde{U} \neq \emptyset$. We conclude that $x \in \liminf (I^-(p_n,\tilde{M}))$ in $\tilde{M}$. Therefore, ${\rm cl}_{\tilde{M}}(I^-(p,\tilde{M}))$ is indeed the Hausdorff closed limit of $(I^-(p_n,\tilde{M}))$ in $\tilde{M}$.

\smallskip

$(3)$ Given any future-inextendible null geodesic $\gamma:[0,A) \rightarrow M$ ($A\leq +\infty$) in $(M,g)$, it is a null pregeodesic in $(\tilde{M},\tilde{g})$, and hence admits a reparametrization $\tilde{\gamma}:[0,a) \rightarrow \tilde{M}$ as a null geodesic therein. Let $K\subset \tilde{M}$ be a compact set whose causal past in $(\tilde{M},\tilde{g})$ contains $M$. Now, $\tilde{\gamma}$ cannot be future-inextendible in $\tilde{M}$, since it is contained in the compact set $C=J^+(\gamma(0),\tilde{M}) \cap J^-(K, \tilde{M})$. (In particular, $a< +\infty$.) Therefore, $\tilde{\gamma}$ has a future-inextendible geodesic extension which must leave $C$ and hence $M$. It must therefore intersect $\partial^+ M$; the intersection point is of course a future endpoint of $\gamma$, and the proof is complete.
\end{proof}

\begin{remark}
\label{def:3}
{\em This theorem admits an alternative proof by comparing the CLT with the chronological topology (cf. Remark \ref{rem:pruebasalt} in the Appendix).}
\end{remark}




\vspace{.2cm}

\section{Null infinity for c-boundaries: a revision}\label{Scriplus}

We wish to revisit here the recently introduced notion of {\em null infinity}
in the context of this paper. These notions were introduced in full generality in Ref. \cite{Costa_e_Silva_2018}, and in particular the full $c$-completion obtained by adding both the past and the future $c$-boundaries was used. As a consequence, some technicalities were needed in order to ensure a good behavior of the appropriate objects.


Our aim here is to show that in the more restricted situation we contemplate here (to wit, globally hyperbolic spacetimes with only future $c$-boundaries), some of the aforementioned technicalities naturally disappear, allowing a more natural treatment. The first non-trivial aspect is that the global hyperbolicity ensures that the future and the past $c$-boundaries are completely independent \cite[Theorem 9.1]{Florescausalboundaryspacetimes2007}, \cite[Theorem 3.29]{Floresfinaldefinitioncausal2011}, and no essential information is missed by considering partial boundaries. In addition, we will see that the enhanced causality allows further simplifications even after discarding the past $c$-boundary.

\subsection{On the definition of null infinity for future c-boundaries}

Let us begin by adapting the notion of null infinity \cite[Definition 4.4]{Costa_e_Silva_2018} to the (partial) future $c$-boundary. We restrict ourselves to a globally hyperbolic spacetime $(M,g)$ without boundary for simplicity, although all the notions in this section are also applicable for the case with timelike boundary with few modifications.

\begin{definition}
\label{def:1}
The {\em future null infinity} of $(M,g)$, denoted by $\mathcal{J}^+$, is the set of elements $P\in \hat{M}$ such that:
\begin{enumerate}[label=(\roman*)]
\item \label{def:1-1} $\exists$ a future-complete null ray $\eta:[0,\infty)\rightarrow M$ with $P=I^-(\eta)$.
\item \label{def:1-2} Every future-inextendible null geodesic $\gamma$ for which $P=I^-(\gamma)$ is future-complete.
\end{enumerate}
\end{definition}

A couple of remarks about this definition are in order:
\begin{itemize}
\item[1)] Observe that the main difference between this definition and the one in \cite{Costa_e_Silva_2018} is that we do not require here the notion of \textit{future regularity}. Indeed, future regularity is required to avoid certain pathological behaviors the endpoints of curves may have, which only occur when the full $c$-completion is considered. These issues are altogether absent for partial boundaries.
\item[2)] It is an easy exercise to check that the condition $P=I^-(\alpha)$ for a future-inextendible causal curve $\alpha:[a,b) \rightarrow M$ is equivalent to the following statement: for any sequence $(t_k)$ in $[a,b)$ converging to $b$, we have $ i_M(\alpha(t_k)) (\equiv I^+(\alpha(t_k))) \rightarrow P$ in the CLT. The latter statement is precisely the meaning of $P$ being a future-endpoint of $\alpha$ on $\hat{\partial}M$.
\end{itemize}

\smallskip

With the notion of null infinity for $c$-boundaries in place, we proceed to consider its relation to {\em conformal} null infinity. The definition of conformal null infinity we adopt here is essentially the one given in \cite[Definition 4.1]{Costa_e_Silva_2018}, which we reproduce here for convenience:

\begin{definition}
Let $(\tilde{M},\tilde{g})$ be a conformal extension of the spacetime $(M,g)$ with future conformal boundary $\partial^{+} M$ and conformal factor $\Omega$. A point $p \in \partial^{+} M$ is said to be {\em at infinity} if there exist an open set $\tilde{U} \ni p$ of $\tilde{M}$ and a $C^{\infty}$ extension $\tilde{\Omega}$ of $\Omega$ to $M \cup \tilde{U}$ such that $\tilde{\Omega}(p) =0$. Such a point at infinity is said to be {\em regular} if in addition $d\tilde{\Omega}(p) \neq 0$. The set of all regular points at infinity is the {\em conformal (future) null infinity} ({\em associated with the conformal extension} $(\tilde{M},\tilde{g})$) and will be denoted by $\mathcal{J}^{+}_c$.
\end{definition}

As expected, when both a conformal null infinity and a null infinity on the $c$-boundary exist, they are closely related. However, the null infinity on $c$-boundaries as defined is generally larger than the conformal one\footnote{This is to be expected, because the conformal null infinity requires a conformal factor $\Omega$ which extends in a $C^{\infty}$ fashion onto the boundary points, which cannot be expected to happen in general. But if it does, then any point in $\mathcal{J}^{+}$ also belongs to $\mathcal{J}_{c}^{+}$ due to \cite[Theorem 4.3 (ii)]{Costa_e_Silva_2018}.}.

\begin{proposition}\label{voila}
  Let $(\tilde{M},\tilde{g})$ be a future-nesting conformal extension of a globally hyperbolic spacetime $(M,g)$, and consider the homeomorphism $\Psi:M \cup \partial ^+ M\rightarrow \hat{M}$, given in the proof of Theorem \ref{causaltoconformal}  taking $\partial^{+}M$ onto $\hat{\partial} M$. Then,

  \[\Psi(\mathcal{J}_{c}^{+})\subset \mathcal{J}^{+}.\]

\end{proposition}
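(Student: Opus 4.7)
The plan is to start from $p \in \mathcal{J}_c^+$ equipped with a smooth extension $\tilde{\Omega}$ of $\Omega$ to $M\cup\tilde{U}$ satisfying $\tilde{\Omega}(p)=0$ and $d\tilde{\Omega}(p)\neq 0$, and to set $P:=\Psi(p)=I^-(p,\tilde{M})\cap M$. Since Claim 4 in the proof of Theorem \ref{causaltoconformal} already guarantees $\Psi(\partial^+M)=\hat{\partial}M$, $P$ is automatically a TIP of $(M,g)$, so the task reduces to verifying the two clauses of Definition \ref{def:1}.

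For clause (i), I would first produce a null $\tilde{g}$-geodesic ending at $p$ whose past ray lies in $M$. Locally $\partial^+M$ coincides with the smooth hypersurface $\Sigma:=\{\tilde{\Omega}=0\}\cap\tilde{U}$ (since $\tilde{\Omega}>0$ on $M$), and the achronality of $\partial^+M$ in $(\tilde{M},\tilde{g})$ supplied by Corollary \ref{convexitygh3} forces the conormal $d\tilde{\Omega}(p)$ to be causal; hence there exists a past-directed null $v \in T_p\tilde{M}$ with $d\tilde{\Omega}(v)>0$ (any past-null direction if the conormal is timelike, any past-null direction not parallel to $\tilde{\nabla}\tilde{\Omega}(p)$ if the conormal is null). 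The null $\tilde{g}$-geodesic shot from $p$ along $v$ then enters $\{\tilde{\Omega}>0\}=M$ for small positive parameter; time-reversing yields a future-directed null $\tilde{g}$-geodesic $\tilde{\eta}:[0,\tilde{a})\to M$ with $\tilde{\eta}(\tilde{s})\to p$ as $\tilde{s}\to\tilde{a}^-$. Viewing $\tilde{\eta}$ as a $g$-null pregeodesic, the standard conformal-rescaling identity $d\lambda/d\tilde{s}=C\,\tilde{\Omega}^{-2}$, together with the Taylor estimate $\tilde{\Omega}(\tilde{\eta}(\tilde{s}))\leq c(\tilde{a}-\tilde{s})$ valid because $\tilde{\Omega}\circ\tilde{\eta}$ is smooth and vanishes at $\tilde{a}$, makes $\int^{\tilde{a}}\tilde{\Omega}^{-2}\,d\tilde{s}$ divergent; reparametrization therefore produces a future-complete null ray $\eta:[0,\infty)\to M$. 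The equality $I^-(\eta,M)=P$ is then verified by routine chronological manipulations: the inclusion $\subset$ follows from $\eta(\lambda)\leq_{\tilde{g}}p$ combined with causal convexity of $M$ in $\tilde{M}$ and Eq. (\ref{mayuse}), while the reverse inclusion uses openness of $\ll_{\tilde{g}}$ and $\tilde{\eta}(\tilde{s})\to p$ to push any $q\ll_{\tilde{g}}p$ chronologically past some $\tilde{\eta}(\tilde{s})$.

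For clause (ii), given a future-inextendible null $g$-geodesic $\gamma:[0,a)\to M$ with $I^-(\gamma,M)=P$, Theorem \ref{causaltoconformal}(3) furnishes a future endpoint $p'\in\partial^+M$ for its $\tilde{g}$-reparametrization $\tilde{\gamma}$. A chronological argument entirely parallel to the check of $I^-(\eta,M)=P$ then yields $I^-(p',\tilde{M})\cap M=I^-(\gamma,M)=P=I^-(p,\tilde{M})\cap M$, and injectivity of $\Psi$ (Claim 2 in the proof of Theorem \ref{causaltoconformal}) forces $p'=p$. Hence $\tilde{\gamma}$ attains $p$ in finite $\tilde{g}$-affine parameter, and the same conformal-rescaling estimate used in (i) forces the $g$-affine parameter of $\gamma$ to diverge, so $\gamma$ is future-complete. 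I expect the principal obstacle to be the construction in (i) of the past-null direction $v$ actually entering $M$: one has to handle carefully the dichotomy between timelike and null conormal, excluding in the null-conormal case the unique direction tangent to $\Sigma$, and verifying the sign condition $d\tilde{\Omega}(v)>0$ in each case. The conformal-rescaling divergence is standard but must be combined with the Taylor bound to cover any (non-generic) higher-order vanishing of $\tilde{\Omega}$ along the geodesic.
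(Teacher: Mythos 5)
Your proposal diverges substantially from the paper's own proof, which is a short citation: clause \textit{(i)} of Definition \ref{def:1} is obtained from \cite[Theorem 4.3 (iii)]{Costa_e_Silva_2018} (every point of $\mathcal{J}^+_c$ is the future endpoint of a future-complete null ray in $M$) and clause \textit{(ii)} from \cite[Theorem 4.3 (i)]{Costa_e_Silva_2018}, both transported through the homeomorphism $\Psi$. You instead try to re-derive those two facts from scratch. Your clause \textit{(ii)} argument is essentially sound: Theorem \ref{causaltoconformal}(3) gives the endpoint $p'\in\partial^+M$, the causal-convexity computation gives $I^-(p',\tilde{M})\cap M=P$, injectivity of $\Psi$ gives $p'=p$, and the conformal reparametrization $d\lambda/d\tilde{s}=C\,\tilde{\Omega}^{-2}$ together with the Lipschitz bound $\tilde{\Omega}\circ\tilde{\gamma}\leq c(a-\tilde{s})$ (which only needs $\tilde{\Omega}(p)=0$ and smoothness of the extension) forces divergence of the $g$-affine parameter.

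The genuine gap is in your construction for clause \textit{(i)}, at the step where you identify $M$ with $\{\tilde{\Omega}>0\}$ near $p$. The definitions only give the inclusion $M\subset\{\tilde{\Omega}>0\}$: the extension $\tilde{\Omega}$ is merely required to be smooth on $M\cup\tilde{U}$ with $\tilde{\Omega}(p)=0$ and $d\tilde{\Omega}(p)\neq0$, and nothing forces $\tilde{\Omega}$ to vanish on $\partial^+M$ near $p$, nor $\partial^+M$ to coincide with the level set $\Sigma=\{\tilde{\Omega}=0\}$ (which is only a $C^0$, not \emph{a priori} smooth, achronal hypersurface). Consequently, showing that your past-directed null geodesic $\sigma_v$ with $d\tilde{\Omega}(v)>0$ enters $\{\tilde{\Omega}>0\}$ does not show that it enters $M$: what one can actually prove (via the achronality of $\partial^+M$ and Lemma \ref{technical}) is that $\sigma_v$ stays in ${\rm cl}(M)=M\cup\partial^+M$ near $p$, and it may perfectly well run \emph{inside} $\partial^+M$ — this is exactly what the null generators of a null conformal boundary do, and at a ``vertex''-type boundary point (e.g.\ $M=I^-(0)$ in Minkowski space with $\Omega=-(t+x)$, which is a future-nesting extension with $0\in\mathcal{J}^+_c$) \emph{every} past-directed null geodesic from $p$ lies in $\partial^+M$, so no choice of $v$ rescues the construction. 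This is precisely the nontrivial content that the paper outsources to \cite[Theorem 4.3 (iii)]{Costa_e_Silva_2018}, and your argument does not supply a substitute for it. A related but more minor slip: you justify the causality of the conormal $d\tilde{\Omega}(p)$ by ``the achronality of $\partial^+M$'', which again conflates $\Sigma$ with $\partial^+M$; the correct route is Lemma \ref{technical} (every past-directed timelike geodesic from $p$ immediately enters $M$, where $\tilde{\Omega}>0$, forcing $d\tilde{\Omega}\geq0$ on the past causal cone), but fixing this does not close the main gap above.
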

\begin{proof}
  Let $p \in \mathcal{J}^+_c$. From \cite[Theorem 4.3 (iii)]{Costa_e_Silva_2018}, we know that $p$ is the future endpoint in $M\cup \partial^+ M$ of a future-complete null ray $\eta:[0,+\infty) \rightarrow M$. Let us denote by $P=I^-(\eta)$ and recall that, from the definition of the Hausdorff closed limit, $P\in \hat{\partial} M$ is the future-endpoint of the curve $\eta$ on $\hat{M}$. Thus, $P$ satisfies clause \textit{\ref{def:1-1}} of Definition \ref{def:1}. To show that it also satisfies the clause \textit{\ref{def:1-2}}, consider any future-inextendible null geodesic $\alpha:[0,A) \rightarrow M$ ($A\leq +\infty$) such that $I^-(\alpha) = P$. As the map $\Psi$ is an homeomorphism, then $p=\Psi^{-1}(P)$ is the endpoint of $\alpha$ on $\hat{M}_i$,  and hence $\alpha$ must be future-complete by \cite[Theorem 4.3 (i)]{Costa_e_Silva_2018}. Thus, $P$ also satisfies clause \textit{\ref{def:1-2}} as claimed, and the proof is complete.
\end{proof}

Therefore, under mild hypothesis on the conformal extension, the conformal null infinity is (usually properly) included in the $c$-boundary null infinity.

\subsection{Regularity of the null infinity}

Once the notion of null infinity is translated to the setting of this paper, all the results in \cite[Section 5]{Costa_e_Silva_2018} follow with few modifications. However, it is important to note that the existence of a conformal boundary often adds good properties to the future null infinity $\mathcal{J}^{+}$. As an example, consider the notion of \emph{regularity} for the $c$-boundary null infinity given in \cite[Definition 5.6]{Costa_e_Silva_2018}, and adapted here for future $c$-boundaries:

\begin{definition}
  \label{ample}
  The future null infinity $\mathcal{J}^+$ is said to be:
  \begin{itemize}
  \item[(A1)] {\em Ample}\footnote{This definition, as per \cite[Definition 5.6]{Costa_e_Silva_2018}, uses the {\em chronological} topology, {\em not} the CLT. However, in the specific results we prove here, we will see that these topologies coincide, and hence the distinction is not important.} if for any compact set $C \subset M$, and
    for any connected component $\mathcal{J}^+_0$ of $\mathcal{J}^+$,
    $\mathcal{J}^+_0\cap (\overline{M} \setminus \widetilde{I^+ (C)})$ is
    a non-empty open set, where
    \begin{equation}
      \widetilde{I^+(C)}:=\{P\in \hat{M}: I^-(x)\subset P \mbox{ for some $x\in C$}\}.\label{eq:2}
    \end{equation}
    (In intuitive terms: no connected component of $\mathcal{J}^+$
    can be ``entirely contained'' in the future of a compact
    set.)

  \item[(A2)] {\em Past-complete} if for any $P\in \mathcal{J}^+$
    and $P' \in \hat{\partial} M$ with $P' \subset P$, then
    $P'\in \mathcal{J}^+$. (In other words: any element of the
    future $c$-boundary which is in the causal past of $\mathcal{J}^+$ must also
    be in $\mathcal{J}^+$.)

\end{itemize}
We say that the future null infinity $\mathcal{J}^+$ is
\emph{regular} if it is both ample and past-complete.
\end{definition}


The regularity of null infinity in the sense just defined was crucial in \cite{Costa_e_Silva_2018} in order to obtain analogues of key properties for black holes in the conformal context. For instance, \cite[Corollary 5.9]{Costa_e_Silva_2018} provided a generalization for $c$-boundaries of the classic result that closed trapped surfaces are inside the black hole region when $(M,g)$ satisfies the {\em null convergence condition} ($Ric(v,v) \geq 0$ for any null vector $v \in TM$). Moreover, as shown in \cite[Appendix]{Costa_e_Silva_2018}, this regularity condition was unavoidable, as counterexamples to the result appear if one removes it.

The next result goes a step further towards justifying this regularity condition. Indeed, it shows that regularity in this sense is guaranteed in many physically interesting cases where the conformal boundary is present.

\begin{proposition}
  \label{voila2}
  Let $(\tilde{M},\tilde{g})$ be a future-nesting conformal
  extension of the globally hyperbolic spacetime $(M,g)$. Assume in
  addition that $\mathcal{J}^+_c$ is a smooth null hypersurface in
  $(\tilde{M},\tilde{g})$ whose null geodesic generators are
  past-inextendible in $I^+(M,\tilde{M})$. Then:
  \begin{enumerate}[label=(\alph*)]
  \item \label{auxitem-1} For any compact set $C \subset M$, and for any connected
    component $\mathcal{I}^+_0$ of $\mathcal{J}_c^+$,
    $\mathcal{I}^+_0\cap [(\partial^+M \cup M) \setminus I^+ (C,
    \tilde{M})]\neq \emptyset$.

  \item \label{auxitem-2} If $p \in \partial^+M$, $q \in \mathcal{J}_c^+$, and
    $p \leq _{\tilde{g}} q$, then $p \in \mathcal{J}_c^+$.
  \item \label{auxitem-3} If the future null infinity $\mathcal{J}^+$ is
    connected\footnote{Recall that in view of Theorem \ref{causaltoconformal}, the CLT and the topology induced by that of $\tilde{M}$ coincide.}, then it is regular.
  \end{enumerate}
\end{proposition}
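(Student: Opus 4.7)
For part (a), I would argue by contradiction. Assume $\mathcal{I}^+_0 \subset I^+(C,\tilde{M})$ for some compact $C \subset M$ and a connected component $\mathcal{I}^+_0$ of $\mathcal{J}_c^+$. Pick $q \in \mathcal{I}^+_0$ and $c \in C$ with $c \ll_{\tilde{g}} q$, and let $\sigma$ be the past-inextendible (in $I^+(M,\tilde{M})$) null geodesic generator of $\mathcal{J}_c^+$ through $q$. Connectedness forces $\sigma \subset \mathcal{I}^+_0 \subset I^+(C,\tilde{M})$, so $\sigma$ is imprisoned in the compact set $J^+(C,\tilde{M}) \cap J^-(q,\tilde{M})$ (compact by global hyperbolicity of $\tilde{M}$). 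Strong causality of $\tilde{M}$ rules out $\sigma$ being past-inextendible in $\tilde{M}$, so $\sigma$ has a past endpoint $p_-$ in that compact set; choosing $c' \in I^-(c,M)$ (nonempty since $M$ is open and time-oriented) gives $c' \ll_{\tilde{g}} p_-$, so $p_- \in I^+(M,\tilde{M})$---contradicting past-inextendibility of $\sigma$ in the open set $I^+(M,\tilde{M})$. Note this argument actually produces a point of $\mathcal{I}^+_0 \setminus J^+(C,\tilde{M})$, which is slightly stronger than stated and will be needed in (c).

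For part (b), Corollary \ref{convexitygh3} makes $\partial^+ M$ achronal in $\tilde{M}$, so $p \leq_{\tilde{g}} q$ either yields $p = q$ (trivial) or a null pregeodesic $\alpha$ in $\tilde{M}$ joining $p$ to $q$ lying entirely on $\partial^+ M$. Because $\mathcal{J}_c^+$ is a smooth $(n-1)$-dim embedded submanifold of $\tilde{M}$ contained in the $(n-1)$-dim topological hypersurface $\partial^+ M$, invariance of domain makes $\mathcal{J}_c^+$ open in $\partial^+ M$; hence $\alpha$ lies on $\mathcal{J}_c^+$ near $q$ and, being a null curve on a null hypersurface, agrees there (up to reparameterization) with the unique null generator $\sigma_q$ through $q$. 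Any point where $\alpha$ left $\mathcal{J}_c^+$ would be a past endpoint of $\sigma_q$ sitting in $\partial^+ M \subset I^+(M,\tilde{M})$, contradicting the hypothesis. Therefore $\alpha \subset \mathcal{J}_c^+$ and $p \in \mathcal{J}_c^+$.

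For part (c), assume $\mathcal{J}^+$ is connected and (without loss of generality) nonempty. Ampleness follows from (a) via the $\Psi$-correspondence of Theorem \ref{causaltoconformal}: using causal simplicity of $\tilde{M}$ one verifies that $P \in \widetilde{I^+(C)}$ iff $\Psi^{-1}(P) \in J^+(C,\tilde{M})$, whence $\widetilde{I^+(C)}$ is closed in $\hat{M}$ by continuity of $\leq_{\tilde{g}}$; combined with the strengthened conclusion of (a), this produces a point of $\Psi(\mathcal{I}^+_0) \subset \mathcal{J}^+$ inside the open set $\hat{M} \setminus \widetilde{I^+(C)}$. For past-completeness, my plan is to establish $\mathcal{J}^+ = \Psi(\mathcal{J}_c^+)$: past-completeness is then the $\Psi$-translation of (b). Proposition \ref{voila} gives $\Psi(\mathcal{J}_c^+) \subset \mathcal{J}^+$, which is nonempty and open in $\mathcal{J}^+$ (invariance of domain, as in (b)); by connectedness, the equality reduces to showing closedness of $\Psi(\mathcal{J}_c^+)$ in $\mathcal{J}^+$.

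I expect this last step to be the main obstacle: given $P_n \in \Psi(\mathcal{J}_c^+)$ with $P_n \to P \in \mathcal{J}^+$ in the CLT and corresponding $p_n \to p$ in $\tilde{M}$, one must verify $p \in \mathcal{J}_c^+$ rather than merely $p \in \partial^+ M$. The strategy I envisage is a limit-curve argument on the null generators $\sigma_n$ of $\mathcal{J}_c^+$ through $p_n$: their past-inextendibility in $I^+(M,\tilde{M})$ provides enough control to extract a limit null geodesic, whose direction at $p$ should be pinned down by the future-complete null ray supplied by $P \in \mathcal{J}^+$; one then locates a point of $\mathcal{J}_c^+$ causally above $p$ on this limiting generator and invokes (b) to conclude $p \in \mathcal{J}_c^+$.
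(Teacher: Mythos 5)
Parts (a) and (b) are essentially the paper's own arguments and are fine: for (a) the paper argues directly (the past generator through any $p\in\mathcal{I}^+_0$, being past-inextendible in $I^+(M,\tilde{M})$, must leave the compact set $J^+(C,\tilde{M})\cap J^-(p,\tilde{M})$ and hence $I^+(C,\tilde{M})$ while staying on $\partial^+M$), which is your contradiction argument run forwards and, as you note, already yields a point outside $J^+(C,\tilde{M})$; for (b) your reasoning matches the paper's (achronality forces the connecting causal curve to run along the generator, and past-inextendibility of the generator in $I^+(M,\tilde{M})$ prevents it from leaving $\mathcal{J}^+_c$ before reaching $p$), with the small caveat that your assertion that the null pregeodesic lies entirely on $\partial^+M$ does not follow from achronality alone --- it needs the causal convexity of $M$ and of $M\cup\partial^+M$ (an interior point in $M$ would put $p$ in $I^-(M,\tilde{M})\cap I^+(M,\tilde{M})$, violating causal convexity of $M$). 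For the ampleness half of (c) your route is a legitimate variant of the paper's: you identify $\widetilde{I^+(C)}$ with $\Psi\bigl(J^+(C,\tilde{M})\cap(M\cup\partial^+M)\bigr)$ (closed by causal simplicity of $\tilde{M}$) and use the $J^+$-strengthened form of (a), whereas the paper argues by contradiction and pushes $C$ slightly to the past along a timelike flow so that $\widetilde{I^+(C)}\subset \hat{I}^+(C^{t_0})$; both arguments (like the paper's) tacitly use $\mathcal{J}^+_c\neq\emptyset$ via Proposition \ref{voila}.

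The genuine gap is in the past-completeness half of (c). You reduce it to the global identity $\mathcal{J}^+=\Psi(\mathcal{J}^+_c)$, and the closedness of $\Psi(\mathcal{J}^+_c)$ in $\mathcal{J}^+$ --- which you yourself flag as the main obstacle --- is never established. The limit-curve sketch does not close it: membership in $\mathcal{J}^+_c$ is the existence of a smooth extension of the conformal factor vanishing at the point with nonvanishing differential, a condition that is not preserved under limits, so the limiting generator through $p$ only delivers points of ${\rm cl}(\mathcal{J}^+_c)\subset\partial^+M$; the step ``locate a point of $\mathcal{J}^+_c$ causally above $p$'' is therefore unjustified, and without it (b) cannot be invoked. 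Note also that the equality you aim for is stronger than what is needed (the paper stresses after Proposition \ref{voila} that the inclusion $\Psi(\mathcal{J}^+_c)\subset\mathcal{J}^+$ is in general proper). The paper instead treats past-completeness as an immediate consequence of assertion (b) together with the chronological isomorphism of Theorem \ref{causaltoconformal}: given $P\in\mathcal{J}^+$ and $P'\in\hat{\partial}M$ with $P'\subset P$, one passes to the corresponding boundary points $p\leq_{\tilde{g}}q$ in $\partial^+M$ and applies (b), returning to $\hat{M}$ via Proposition \ref{voila} --- no identification of $\mathcal{J}^+$ with $\Psi(\mathcal{J}^+_c)$ is attempted. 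As it stands, then, your part (c) is only half proved.
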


\begin{proof}
  \textit{\ref{auxitem-1}} Given any $p \in \mathcal{I}^+_0$, the past-directed null
  generator of $\mathcal{J}^+_c$ starting at $p$ must eventually leave
  the compact set
  $J^+(C,\tilde{M})\cap J^-(p,\tilde{M}) \subset I^+(M,\tilde{M})$,
  and will then leave $I^+ (C, \tilde{M})$ while still contained
  inside $\partial ^+M$.

  \smallskip

  \textit{\ref{auxitem-2}} If $p \in \partial^+M$, $q \in \mathcal{J}_c^+$, and
  $p \leq _{\tilde{g}} q$, let $\alpha:[0,1] \rightarrow \tilde{M}$ be
  a past-directed causal curve from $q$ to $p$. Since $\partial^+M$ is
  achronal (cf. Corollary \ref{convexitygh3}), $\alpha$ must {\em
    initially} (i.e., near $q$) be, up to reparametrization, a piece
  of the past-directed null generator of $\mathcal{J}^+_c$ starting at
  $q$. Since this is past-inextendible in $I^+(M,\tilde{M})$, it
  cannot leave the connected component of $\mathcal{J}^+_c$ containing
  $q$, and hence $\alpha$ must be {\em entirely} contained in such a
  generator. Therefore $p \in \mathcal{J}^+_c$ as well.

  \smallskip

\textit{\ref{auxitem-3}} The first point we note is that Remark \ref{rem:pruebasalt} in the Appendix implies that the CLT and the chronological topology coincide here, so there is no ambiguity in the use of topological terms and we may as well think in terms of either the induced topology or the CLT here. In particular, $\hat{M}$ is Hausdorff and $\widetilde{I^{+}(C)}$ is closed by \cite[Proposition 5.7]{Costa_e_Silva_2018}.

In order to see that $\mathcal{J}^{+}$ is indeed regular, we need to show that it is both ample and past-complete. The latter is quite straightforward once we recall that both $\hat{M}$ and $M\cup \partial^+M$ are chronologically isomorphic (cf. Theorem \ref{causaltoconformal}) and also assertion \textit{\ref{auxitem-2}}). Hence, we focus on the ample condition.

Let $C$ be a compact set. Observe that we need to prove that
\begin{equation}
\mathcal{J}^{+}\cap \left( \hat{M}\setminus \widetilde{I^{+}(C)} \right)\neq \emptyset,\label{eq:5}
\end{equation}
as the openness follows from the fact that $\widetilde{I^{+}(C)}$ is closed. Assume by way of contradiction that \eqref{eq:5} is empty, so that $\mathcal{J}^{+}\subset \widetilde{I^+(C)}$. Since $\Psi(\mathcal{J}_{c}^{+})\subset \mathcal{J}^{+}$ by Proposition \ref{voila}, from item \textit{\ref{auxitem-1}} we conclude that
\begin{equation}
  \mathcal{J}^{+}\cap \left( \hat{M}\setminus I^{+}(K) \right)\neq \emptyset\label{eq:6}
\end{equation}
for any compact set $K$. Let $X$ be a past-directed timelike vector field defined on a neighbourhood $U$ of $C$, and denote by $\varphi_{X}:W\subset U\times \mathbb{R}\rightarrow M$ its associated flow. Since $C$ is compact, there exists $t_{0}>0$ such that $\varphi^{t_{0}}_{X}(x):=\varphi_{X}(x,t_{0})$ is defined for all $x\in C$.

Define then $C^{t_{0}}=\varphi^{t_{0}}_{X}(C)$, which is a compact set satisfying, by the timelike character of $X$, that $\widetilde{I^{+}(C)}\subset I^{+}(C^{t_{0}})$. But then $\mathcal{J}^{+}\subset \widetilde{I^+(C)}\subset I^+(C^{t_0})$ and hence
\[
  \mathcal{J}^{+}\cap \left( \hat{M}\setminus I^{+}(C^{t_{0}}) \right)= \emptyset,
  \]
a contradiction with \eqref{eq:6}.

\end{proof}

\section{Appendix: CLT vs chronological to\-po\-lo\-gy}
\label{sec:comp-clt-chron}

As announced in the Introduction, in this Appendix we compare two topologies defined on the future c-completion $\hat{M}$, namely the CLT and the chronological topology (see \cite{HarrisTopologyfuturechronological2000, Florescausalboundaryspacetimes2007, Floresfinaldefinitioncausal2011}). Now, the reader should observe that although most good properties of the CLT depend on global hyperbolicity its definition still makes sense for strongly causal spacetimes; we will need only assume strongly causality for the results below.
Therefore, throughout this section $(M,g)$ shall denote a strongly causal spacetime without boundary.

As a first step, the chronological topology is defined formally in terms of the so-called chronological limit. There are, however, some technicalities we have to deal with before giving the definition. We begin by slightly changing the approach given above for the CLT topology, adapting it to previous studies on the chronological topology.

As with the Hausdorff closed limit, the chronological limit makes use of the concepts of \textit{inferior} and \textit{superior} limits of sets. However, there is a key difference between both approaches: namely, the former focuses on closed subsets of $M$ while the latter works in terms of open ones. In order to unify both approaches, let us define

\begin{equation}
  \label{eq:1}
  \begin{array}{c}
    \Limsup(A_n) := \{x \in M \, : \, \mbox{ $x$ belongs to infinitely many $A_n$'s}\},\\
    \Liminf(A_n) := \{x \in M \, : \, \mbox{  $x$ belongs to all but finitely many $A_n$'s}\}.
  \end{array}
\end{equation}
(compare with \eqref{eq:infsup}). Then, for a sequence $\sigma=(P_n)$ of (open) indecomposable past sets, we can define the {\em Hausdorff  (open) limit}

\[
\Lhaus(\sigma)=\{P\in \hat{M}:P=\Liminf (\sigma))=\Limsup (\sigma)\}.
  \]
 This limit defines a sequential topology on $\hat{M}$. It is a simple exercise to prove that this sequential topology coincides with the CLT topology on $\hat{M}$ upon examining the proof of Theorem \ref{CLT} \ref{CLT-2}.

  \smallskip

  Now, let us introduce the notion of \textit{chronological limit} $\Lchr$: for any sequence $\sigma=(P_{n})$ of IPs, consider $\Lchr$ given by:

\begin{equation}
P\in \Lchr(\sigma) \iff \left\{
  \begin{array}{l}
    P\subset \Liminf (\sigma)\\
    P \hbox{ is a maximal IP on $\Limsup (\sigma)$}.
  \end{array}\right.\label{eq:comparison1}
\end{equation}
As in the case of the Hausdorff open limit, the chronological limit defines a sequential topology $\tau_{chr}$ called the (future) chronological topology. This topology is not necessarily Hausdorff even when $(M,g)$ is globally hyperbolic (see the grapefruit-on-a-stick example in \cite[Example 4]{HarrisCausalboundarystandard2001}), but it does possess the properties listed in Theorem \ref{thething} even for strongly causal spacetimes (this follows for instance from the proof of \cite[Thm. 3.27]{Floresfinaldefinitioncausal2011}.)

\smallskip

Let us point out some basic relations between the CLT topology (defined in terms of the Hausdorff open topology) and the chronological topology. First, let us observe that the requirements for $P$ to be in the chronological limit of a sequence $\sigma$ are weaker than in the Hausdorff open limit case. Therefore, we can easily state that $\Lhaus(\sigma)\subset \Lchr(\sigma)$ for any sequence $\sigma\subset \hat{M}$. In particular, the manner sequential topologies are defined (see Proposition \ref{finetop}) implies that the chronological topology is coarser than the CLT topology.

It also follows that both limits coincide when the sequence $\sigma=(P_{n})$ is a chronological chain on $M$. In such a case $P_{n}=I^-(x_n)$ for some $x_n\in M$ and $P_n\subset P_{n+1}$ for all $n$. In particular, $\Liminf(P_n)=\Limsup(P_n)$.

Although the two topologies differ even in globally hyperbolic spacetimes, one might expect that there are natural situations where these topologies are the same. The next result closes the gap between them characterizing precisely when they coincide:

    \begin{proposition}\label{prop:comp1}
    Let $\hat{M}$ be a future completion endowed with the limit operators $\Lhaus$ and $\Lchr$. Both limit operators coincide if, and only if, the chronological topology is Hausdorff.
    \end{proposition}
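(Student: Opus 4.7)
I would split the proof into the two implications, noting that the inclusion $\Lhaus(\sigma) \subseteq \Lchr(\sigma)$ for every sequence $\sigma$ has already been established in the appendix, so only the reverse inclusion is at stake.

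For the necessity, assume $\Lhaus = \Lchr$. Then the two sequential topologies $\tau_c$ and $\tau_{chr}$ coincide. By the very definition of $\Lhaus$, if $P \in \Lhaus(\sigma)$ then $P = \Liminf(\sigma) = \Limsup(\sigma)$, so sequential limits in $\tau_c$ are automatically unique. I would upgrade this to genuine Hausdorffness either via the metrizable realization given by the map $\chi$ of Theorem \ref{CLT} combined with Theorem \ref{mapprop} (checking that $\chi$ remains injective in the strongly causal setting, which follows from past-distinguishing together with the Galloway--Vega result on closures of past sets), or directly by separating two distinct IPs $P \neq Q$ through the continuous distinguishing maps $\Phi_{\mathrm{cl}(P)}$ and $\Phi_{\mathrm{cl}(Q)}$ of Theorem \ref{mapprop}. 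Either way $\tau_c$ is Hausdorff, and hence so is $\tau_{chr}$.

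For the sufficiency, assume $\tau_{chr}$ is Hausdorff and let $P \in \Lchr(\sigma)$. Since $P \subseteq \Liminf(\sigma) \subseteq \Limsup(\sigma)$, it suffices to prove $\Limsup(\sigma) \subseteq P$. I would argue by contradiction: suppose there is $x \in \Limsup(\sigma) \setminus P$. Using that each $P_n$ containing $x$ is open, strong causality allows me to pick, inside a relatively compact causally convex neighbourhood of $x$, an accumulation point $y \gg x$ of the open sets $I^+(x) \cap P_{n_k}$ which still lies in $\Limsup(\sigma)$, so that $x \in I^-(y) \setminus P$ and hence $I^-(y) \not\subseteq P$. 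Replacing $x$ by $y$, I may assume $I^-(x) \not\subseteq P$. Extract a subsequence $\sigma' = (P_{n_k})$ with $x \in P_{n_k}$ for all $k$; then $I^-(x) \subseteq \Liminf(\sigma')$. By Zorn's lemma extend $I^-(x)$ to a maximal IP $P' \subseteq \Limsup(\sigma')$. By construction $P' \neq P$, since $I^-(x) \subseteq P'$ while $I^-(x) \not\subseteq P$. Finally, exploiting the chronological separability of $\hat M$ (clause C2 of Definition \ref{chronos}), I diagonalize a countable generating chain of $P'$ against $\sigma'$ to extract a further subsequence $\sigma''$ along which $P' \subseteq \Liminf(\sigma'')$. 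Because maximality in $\Limsup$ and containment in $\Liminf$ both pass to subsequences (for an IP that is already maximal in the larger $\Limsup$), both $P$ and $P'$ belong to $\Lchr(\sigma'')$, with $P \neq P'$, contradicting the uniqueness of sequential limits forced by the Hausdorffness of $\tau_{chr}$.

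The main obstacle in this sketch is the very first step of the sufficiency argument, namely securing a candidate $x \in \Limsup(\sigma) \setminus P$ with $I^-(x) \not\subseteq P$: a priori, a point $x$ lying on the ``future boundary'' of $P$ could satisfy $I^-(x) \subseteq P$ while $x \notin P$, which would thwart the construction of a separating $P'$. Controlling this degenerate case requires a careful combined use of strong causality, past-distinguishing, and the openness of each $P_n$, in order to promote $x$ to a strictly later point $y$ whose past is genuinely broader than $P$ but still accumulates along the sequence $\sigma$.
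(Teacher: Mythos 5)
Your argument for the necessity direction, and the skeleton of your sufficiency argument (a maximal IP $P'\subset\Limsup(\sigma')$ distinct from $P$, diagonalization along a generating chain of $P'$ to pass to $\sigma''$ with $P'\subset\Liminf(\sigma'')$, the observation that maximality survives passage to subsequences, and the final contradiction with uniqueness of $\tauC$-limits) coincide with the paper's proof. The only place you deviate is the ``promotion'' step at the start of the sufficiency argument, and that is precisely where there is a genuine gap. An accumulation point $y$ of points $y_k\in I^{+}(x)\cap P_{n_k}$ chosen inside a relatively compact causally convex neighbourhood of $x$ need not belong to $\Limsup(\sigma)$ at all --- membership in the open sets $P_n$ does not pass to limits of points, which is exactly the difference between the sets in \eqref{eq:1} and the closed Hausdorff upper limit --- and it need not satisfy $y\gg x$ either, since the sets $I^{+}(x)\cap P_{n_k}$ may shrink to $x$. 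Concretely, in Minkowski space take $x_n\in I^{+}(x)$ with $x_n\to x$ and $P_n=I^{-}(x_n)$: then $P=I^{-}(x)\in\Lchr(\sigma)$, no point chronologically to the future of $x$ lies in infinitely many $P_n$, the pointwise upper limit is $J^{-}(x)$, and every $x'$ in it outside $P$ has $I^{-}(x')\subset P$. So the degenerate case you flag at the end is real, and strong causality plus causal convexity cannot rule it out; your construction of $y$ has nothing to converge to except $x$ itself.

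The correct way to close this gap is not a compactness argument but the one the paper uses: take $P'$ to be a \emph{maximal IP of $\Limsup(\sigma)$ containing the point $x$ itself} (Zorn's lemma), so that $P'\neq P$ is automatic from $x\notin P$; after that, your subsequence and Hausdorff-contradiction steps go through verbatim. The nontrivial input behind that Zorn argument --- and the legitimate version of the promotion you were attempting --- is simply that the upper limit relevant here is an open \emph{past set} $F$: then $x\in F=I^{-}(F)$ gives some $p\in F$ with $x\ll p$ and $I^{-}(p)\subset F$, producing an IP inside $\Limsup(\sigma)$ through $x$ with no appeal to compactness, causal convexity or distinction. (This is also the reading of the superior limit under which $\Lhaus$-convergence reproduces CLT convergence, as used throughout the Appendix; with the purely pointwise set of \eqref{eq:1} the Minkowski example above shows that neither your promotion step nor your intermediate claim that $\Limsup(\sigma)\subset P$ can be saved.) In short: same architecture as the paper, but the step you yourself single out as the main obstacle is not closed by the argument you propose, and its closure is the past-set observation above rather than strong causality.
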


    \begin{proof}
      For the implication to the right, just recall that $\tau_c=\tauC$ if $\Lhaus=\Lchr$, and $\Lhaus$ is Hausdorff.


      For the implication to the left, let us assume that $\tauC$ is Hausdorff, and let us prove that both limits coincide. By contradiction, let us assume that there exists $P\in \Lchr(\sigma)\setminus \Lhaus(\sigma)$ for some sequence $\sigma\subset \hat{M}$. In particular, it should follows from the definition of $\Lhaus$ that $\Liminf(\sigma)\subsetneq \Limsup(\sigma)$. Let $x\in \Limsup(\sigma)\setminus \Liminf(\sigma)\neq\emptyset$ and consider $P'\in \hat{M}$ the maximal IP in the superior limit with $x\in P'$ (which does exist due to Zorn's Lemma).

      Let us analyze the properties of $P'$. On the one hand, from construction, it cannot coincide with previous $P$, as it contains the point $x$ and
      \[x\in \Limsup(\sigma)\setminus \Liminf(\sigma)\subset \Limsup(\sigma)\setminus P\]
      (for the last inequality recall that $P\in \Lchr(\sigma)$, and thus, $P\subset \Liminf(\sigma)$.) On the other hand, there exists a subsequence $\sigma'\subset \sigma$ such that $P'\in \Lchr(\sigma')$. In fact, let us assume that $\sigma=(P_n)$, and denote by $\{p_n\}_n$ a chronological future chain defining $P'$. Since $P'\subset \Limsup(\sigma)$, we can take a subsequence $(n_k)$ such that $p_k\in P_{n_k}$ for all $k\in \mathbb{N}$, i.e., $P'\subset \Liminf(\sigma')$, with $\sigma'=(P_{n_k})$. Due the maximality of $P'$ on the superior limit of $\sigma$, it follows that $P'\in \Lchr(\sigma')$.

      Finally, let us show that such a subsequence leads us to a contradiction. Observe that as $P\in \Lchr(\sigma)$, then both $P$ and $P'$ are two different TIPs belonging to $\Lchr(\sigma')$, which contradicts the Hausdorff hypothesis for $\tauC$, as required.
    \end{proof}

\smallskip

\begin{remark}
  \label{rem:pruebasalt}
\em  The previous comparison yields alternative proofs for Theorems \ref{thething} and \ref{causaltoconformal} whenever it applies. In fact, Theorem \ref{thething} can be obtained in this setting by using \cite[Theorem 3.27]{Floresfinaldefinitioncausal2011}, taking into account the fact that global hyperbolicity ensures that no TIP and TIF are related in the (total) c-completion \cite[Theorem 3.29]{Floresfinaldefinitioncausal2011} (thus both the future and the past c-completions can be treated separately.)

  For Theorem \ref{causaltoconformal}, just observe that in the hypotheses of that theorem $M\subset \tilde{M}$ is causally convex, so it follows from \cite[Theorem 4.16]{Floresfinaldefinitioncausal2011} that the conformal and the $c$-completion coincide chronologically, as well as topologically, when the latter is endowed with the chronological topology. In particular, the chronological topology is Hausdorff in this case, and thus it coincides with the CLT by Proposition \ref{prop:comp1}.
\end{remark}

\section*{Acknowledgments}

The authors are partially supported by the Spanish Grant MTM2016-78807-C2-2-P (MINECO and FEDER funds).


\end{document}